\newtheorem{thm}{\bf Theorem}[section]
\newtheorem{ex}[thm]{\bf Example}
\newtheorem{cor}[thm]{\bf Corollary}
\newtheorem{defin}[thm]{\textsl{\bf Definition}{}}
\newtheorem{lem}[thm]{\bf Lemma}
\newtheorem{remark}[thm]{\bf Remark}
\newcommand{\dd}{\mathrm{d}}
\newcommand{\quant}{\vec{\zeta}}
\newcommand{\quantsub}{\zeta}
\newcommand{\quantzero}{\vec{0}}
\newcommand{\glob}{\theta}
\newcommand{\Paramglob}{\Theta}
\newcommand{\clas}{\vec{u}}
\newcommand{\cartan}{\vec{\xi}}
\newcommand{\cartansub}{\xi}
\newcommand{\cartanzero}{\vec{0}}
\newcommand{\Vb}{{f_{\bf b}}}
\newcommand{\Vm}{{f_{\bf m}}}
\newcommand{\Vzero}{{f_{\bf 0}}}
\newcommand{\Vl}{{f_{\bf l}}}
\newcommand{\Va}{{f_{\bf a}}} 
\newcommand{\Vbp}{{f_{\bf b'}}}
\newcommand{\Vap}{{f_{\bf a'}}}
\newcommand{\lc}{{l(c)}}
\newcommand{\tcb}{{t_{\bf b}^c}}
\newcommand{\tca}{{t_{\bf a}^c}} 
\newcommand{\ssc}{{s_c}}
\newcommand{\colmod}{{\kappa}}
\DeclareMathOperator{\Tr}{Tr}
\newcommand{\ml}{M_n(\lambda)}
\newcommand{\la}{\lambda}
\newcommand{\prodtwo}[2]{\prod_{\substack{#1 \\ #2}}}
\newcommand{\sumtwo}[2]{\sum_{\substack{#1 \\ #2}}} 
\numberwithin{equation}{section}
\begin{document}

\title{Local asymptotic normality for finite dimensional quantum systems}


\author{Jonas Kahn$^{1}$ and M\u{a}d\u{a}lin Gu\c{t}\u{a}$^{2}$\\\\
$^{1}$ Universit\' e Paris-Sud 11, D\' epartement de Math\' ematiques, \\
B\^{a}t 425, 91405 Orsay Cedex, France\\\\
$^{2}$ University of Nottingham, School of Mathematical Sciences, \\University Park, NG7 2RD, Nottingham, U.K. }






\maketitle

\begin{abstract}
We extend our previous results on local asymptotic normality (LAN) for qubits \cite{Guta&Kahn,Guta&Janssens&Kahn} to quantum 
systems of arbitrary finite dimension $d$.   LAN means that the quantum statistical 
model consisting of $n$ identically prepared $d$-dimensional systems with joint state 
$\rho^{\otimes n}$ converges as $n\to\infty$ to a statistical model consisting of classical and quantum Gaussian variables with fixed and known covariance matrix, and unknown means related to the parameters of the density matrix $\rho$.  Remarkably, the limit model splits into a product of a classical Gaussian with mean equal to the 
diagonal parameters, and independent harmonic oscillators prepared in thermal equilibrium states displaced by an amount proportional to the off-diagonal elements. 

As in the qubits case \cite{Guta&Janssens&Kahn}, LAN is the main ingredient in devising a general two step adaptive procedure for the optimal estimation of completely unknown $d$-dimensional quantum states. 
This measurement strategy shall be described in a forthcoming paper \cite{Guta&Kahn4}.
 
\end{abstract}

\newpage 

\tableofcontents

\newpage

\section{Introduction}

Quantum statistics deals with problems of statistical inference arising in quantum mechanics. The first significant results in this area 
appeared in the seventies and tackled issues such as quantum Cram\' er-Rao bounds for unbiased estimators, optimal estimation for families of states possessing a group symmetry, estimation of Gaussian states, optimal discrimination between non-commuting states. It is impossible to list all contributions but the following references  may give the flavour of these developments \cite{Helstrom69,Yuen&Lax,Yuen&Lax&Kennedy,Belavkin2,Belavkin,Holevo}. 
The more recent theoretical advances   
\cite{Hayashi.editor,Hayashi.book,Paris.editor,Barndorff-Nielsen&Gill&Jupp,Gill&Guta&Artiles,Audenaert&Szkola} are closely related to the rapid development of quantum information and quantum 
engineering, and are often accompanied by practical implementations \cite{Mabuchi,Hannemann&Wunderlich,Smith&Silberfarb&Deutsch&Jessen,Breitenbach&Schiller&Mlynek}.

An important topic in quantum statistics is that of optimal estimation of an unknown state using the results of measurements performed on $n$ identically prepared quantum systems \cite{Massar&Popescu,Cirac,Vidal,Gill&Massar,Keyl&Werner,Bagan&Baig&Tapia,Hayashi&Matsumoto,Hayashi&Matsumoto2,Bagan&Gill,Gill}. In the case of  two dimensional systems, or qubits, the problem has been solved explicitly in the {\it Bayesian set-up}, in the particular case of an invariant prior and figure of merit based on the fidelity distance between states \cite{Bagan&Gill}. However the method used there does not work for more general priors, loss functions, or higher dimensions. In the {\it pointwise approach}, Hayashi and Matsumoto \cite{Hayashi&Matsumoto} showed that the Holevo bound \cite{Holevo} for the variance of locally unbiased estimators can be achieved asymptotically, and provided a sequence of measurements with this property. Their results, building on earlier work \cite{Hayashi.japanese,Hayashi.conference}, 
indicate for the first time the emergence of a Gaussian limit in the problem of optimal state estimation for qubits. The extension to $d$-dimensional case is analysed by Matsumoto in \cite{Matsumoto_SUD}.

In \cite{Guta&Kahn,Guta&Janssens&Kahn} we performed a detailed analysis of this phenomenon (again for qubits), and showed that we deal with the quantum generalization of an important concept in mathematical statistics called {\it local asymptotic normality}. As a corollary, we devised a two steps adaptive measurement strategy for state estimation which is asymptotically optimal for a large class of loss functions and priors, and could be practically implemented using continuous-time measurements. In `classical statistics', the idea of approximating a sequence of statistical models by a family of Gaussian distributions appeared in \cite{Wald}, and was fully developed by Le Cam \cite{LeCam} who coined the term ``local asymptotic normality''. Among the many applications we mention its role in asymptotic optimality theory  and in proving the asymptotic normality of certain estimators such as the maximum likelihood estimator. The aim of this paper is 
to extend the results of \cite{Guta&Kahn,Guta&Janssens&Kahn} to systems of 
{\it arbitrary dimension $d<\infty$}, and thus provide the main tool for solving the {\it open problem} of optimal state estimation for $d$-dimensional quantum systems 
\cite{Guta&Kahn4}.

Before stating the main result of the paper we will explain briefly the meaning of local asymptotic normality for two dimensional systems 
\cite{Guta&Kahn,Guta&Janssens&Kahn}. We are given $n$ qubits identically prepared in an unknown state $\rho$. Asymptotic normality means that for large $n$ we can encode the statistical information contained in the state $\rho^{\otimes n}$ into a Gaussian model consisting of a classical random variable with distribution $N(u,I^{-1})$, and a quantum harmonic oscillator prepared in a (Gaussian) displaced thermal state $\Phi_{\zeta}$. The term {\it local} refers to how $\rho$ is related to the parameters $\theta=(u,\zeta)$, as explained below.


For a more precise formulation let 
us parametrise the qubit states by their Bloch vectors 
$\rho(\overrightarrow{r}) = \frac{1}{2}(\mathbf{1} + \overrightarrow{r}\overrightarrow{\sigma})$ where $\overrightarrow{\sigma} = (\sigma_{x}, \sigma_{y}, \sigma_{z})$ are the Pauli matrices. The neighborhood of the state $\rho_{0}$ with 
$\overrightarrow{r_{0}}= (0,0, 2\mu-1)$ and $1/2<\mu<1$, is a 
three-dimensional ball parametrised by the deviation $u\in\mathbb{R}$ of diagonal elements and $\zeta\in\mathbb{C}$ of 
the off-diagonal ones
 \begin{equation}\label{eq.family}
\rho_{\theta} = 
\left( 
\begin{array}{cc}
\mu +u & \zeta^{*}\\
\zeta& 1-\mu -u
\end{array}
\right),\qquad \theta= (u,\zeta)\in\mathbb{R} \times\mathbb{C}.
\end{equation}
Note that $\rho_{0}$ is to be considered fixed and known but otherwise arbitrary, and can be taken to be diagonal without any loss of generality. Consider now $n$ identically prepared qubits whose individual states are in a neighborhood of  $\rho_{0}$ of size $1/\sqrt{n}$, so that their joint state is 
$
\rho^{n}_{\theta} :=  \left[\rho_{\theta/\sqrt{n}}\right]^{\otimes n}$ for some unknown 
$\theta$. We would like to understand the structure of the family (statistical experiment)
\begin{equation}\label{eq.qn}
\mathcal{Q}_{n} := \{ \rho^{n}_{\theta} : \| \theta\| \leq C\} ,
\end{equation}
{\it as a whole}, more precisely what is its asymptotic behavior as $n\to \infty$ ?

For this we consider a quantum harmonic oscillator with position and momentum operators satisfying the commutation relations $[{\bf Q},{\bf P}]= i\mathbf{1}$. 
We denote by $\{|n\rangle , n\geq 0\}$ the eigenbasis of 
the number operator and define the thermal equilibrium state
$$
\Phi= (1-e^{-\beta}) \sum_{k=0}^{\infty} e^{-k\beta} | k\rangle\langle k|, \qquad 
e^{-\beta}= \frac{1-\mu}{\mu},
$$
which has centered Gaussian distributions for both ${\bf Q}$ and ${\bf P}$ with 
variance $ 1/(4\mu-2) >1/2$. We define a family of displaced thermal equilibrium states
\begin{equation}\label{eq.displacedthermal}
\Phi^{\zeta}: =  D^{\zeta} ( \Phi ):= 
 W( \zeta/\sqrt{2\mu-1}) \, \Phi \,  W( \zeta/\sqrt{2\mu-1})^{*},
\end{equation}
where $W(\zeta) := \exp(\zeta a^{*}- \zeta a)$ is the unitary displacement operator 
with $\zeta\in \mathbb{C}$. Additionally we consider a classical {\it Gaussian shift} 
model consisting of the family of normal distributions $N(u, \mu(1-\mu))$ with unknown center $u$ and fixed known variance. The classical-quantum statistical experiment to which we 
alluded above is 
\begin{equation}\label{eq.r}
\mathcal{R} := \{ \Phi^{\theta} : =N (u, \mu(1-\mu)) \otimes \Phi^{\zeta} : 
\|\theta\|\leq C\}
\end{equation}
where the unknown parameters $\theta =(u,\zeta)\in\mathbb{R}\times \mathbb{C}$ are the same as those of $\mathcal{Q}_{n}$.
\begin{thm}\label{th.qubits}\cite{Guta&Kahn,Guta&Janssens&Kahn}
Let $\mathcal{Q}_{n}$ be the quantum statistical experiment \eqref{eq.qn}  
and let $\mathcal{R}$ be the classical-quantum experiment \eqref{eq.r}. 
Then for each $n$ there exist quantum channels 
(normalized completely positive maps)
\begin{eqnarray*}
T_{n} &:& M\left( (\mathbb{C}^{2})^{\otimes n} \right)\to 
                L^{1}(\mathbb{R})\otimes \mathcal{T}(L^{2} (\mathbb{R})) , \\
S_{n} &: &  L^{1}(\mathbb{R})  \otimes \mathcal{T}( L^{2}( \mathbb{R})) \to 
M\left((\mathbb{C}^{2})^{\otimes n}\right),
\end{eqnarray*}
with $\mathcal{T}(L^{2}(\mathbb{R}))$ the trace-class operators, such that 
\begin{eqnarray*}\label{eq.channel.conv.}
&&
\lim_{n\to \infty}\, 
\sup_{\|\theta \|\leq C}\,
 \| \Phi_{\theta} - T_{n} \left(  \rho^{n}_{\theta}\right)  \|_{1} =0, \\
&&
\lim_{n\to \infty} \,
\sup_{\|\theta\|\leq C}
\, 
\| \rho^{n}_\theta - S_{n} \left( \Phi_{\theta} \right)  \|_{1} =0, \\
\end{eqnarray*}
for an arbitrary constant $C>0$.
 \end{thm}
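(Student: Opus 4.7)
The plan is to use Schur--Weyl duality to decompose $(\mathbb{C}^2)^{\otimes n}$ as a direct sum $\bigoplus_j \mathcal{H}_j \otimes \mathcal{M}_j$ over total spin $j \in \{n/2, n/2-1,\ldots\}$, where $\mathcal{H}_j$ carries the spin-$j$ irreducible representation of $SU(2)$ and $\mathcal{M}_j$ is the multiplicity space for $S_n$. Because $\rho^{n}_\theta$ is permutation invariant, it acts as the identity on each $\mathcal{M}_j$, so the experiment splits into a \emph{classical} part --- the probability distribution of $j$, accessible by a non-destructive measurement of the isotypic component --- and a \emph{quantum} part, namely the state reduced to each $\mathcal{H}_j$. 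I would establish the theorem by matching these two parts separately to the target Gaussian shift and to the displaced thermal state, and then assembling $T_n$ and $S_n$ from the resulting conditional channels.

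For the classical component, the weight $p_n(j) = \dim(\mathcal{M}_j)\,\mathrm{Tr}(P_j \rho^n_\theta)$ is given explicitly by binomial-type coefficients in the $\sigma_z$-eigenbasis, since the diagonal of $\rho_{\theta/\sqrt{n}}$ is a Bernoulli distribution with parameter $\mu + u/\sqrt{n} + O(1/n)$. Rescaling $k := 2j - n(2\mu-1)$ by $\sqrt{n}$, a local central limit argument yields convergence in total variation to $N(u,\mu(1-\mu))$, uniformly on $\|\theta\| \leq C$. For the quantum component, on a typical block with $j \approx n(2\mu-1)/2$ I would invoke the Holstein--Primakoff contraction: $J_-/\sqrt{2j}$ converges strongly on low-weight subspaces to the annihilation operator $a$, so the coherent rotation $\exp\!\bigl((\zeta J_+ - \overline{\zeta} J_-)/\sqrt{n}\bigr)$ contracts to the Weyl displacement $W(\zeta/\sqrt{2\mu-1})$. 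Combined with the observation that the diagonal of the restricted state is geometric in the weight $m$ with ratio $(1-\mu)/\mu = e^{-\beta}$, this gives trace-norm convergence of the block state to $\Phi^\zeta$ once the highest-weight vector is identified with the vacuum.

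The channel $T_n$ is then built as the instrument that first measures $j$ non-destructively, and on each outcome block applies a partial isometry $V_j : \mathcal{H}_j \to L^2(\mathbb{R})$ sending the highest-weight vector to the vacuum $|0\rangle$ and matching $J_-/\sqrt{2j}$ to $a$ on a slowly growing cutoff Fock subspace, with atypical blocks sent to any fixed state; the classical Gaussian is produced from the rescaled $j$-outcome. The reverse channel $S_n$ samples $j$ from a discrete Gaussian, selects the corresponding block, and embeds the oscillator state back via $V_j^*$. The main obstacle is the \emph{uniform} trace-norm convergence over $\|\theta\| \leq C$: one must control simultaneously the tail of the $j$-distribution, the Holstein--Primakoff error on the chosen cutoff, and the discretisation error between the classical $k$-distribution and the Gaussian target, with all estimates independent of $\theta$. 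Since $\theta$ shifts both the spin distribution and the coherent amplitude, I expect these bounds to be obtained by a quantum analogue of Le Cam's third lemma, estimating the Bures distance between $\rho^n_\theta$ and $S_n(\Phi_\theta)$ via symmetric logarithmic derivatives of the model and then promoting pointwise convergence at $\theta=0$ to a uniform statement by a differentiability/compactness argument.
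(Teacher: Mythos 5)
Your overall route is the same one used in the cited qubit papers and generalised in the present paper: Schur--Weyl decomposition of $(\mathbb{C}^{2})^{\otimes n}$, a non-destructive ``which block'' measurement whose outcome distribution carries the diagonal parameter $u$, block states mapped by isometries $V_{j}$ (highest weight vector $\mapsto$ vacuum, weight basis $\mapsto$ number basis, $J_{-}/\sqrt{2j}\approx a$) onto displaced thermal states, and $T_{n}$, $S_{n}$ assembled blockwise with atypical blocks discarded. Two of your steps, however, are genuine gaps rather than routine details. First, the classical step: a local central limit theorem cannot by itself give total-variation convergence of the discrete spin-label distribution to $N(u,\mu(1-\mu))$, since a discrete and an absolutely continuous law are mutually singular; one must insert a smoothing randomisation (the kernel $\tau^{n}_{\lambda}$ of Definition \ref{def.markov.tau}, used in Lemma \ref{lclassical}) into the definition of $T_{n}$ and bound its $L^{1}$ error uniformly in $u$. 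You gesture at a ``discretisation error'' but the smoothing has to be part of the channel construction, not an afterthought.

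Second, and more seriously, your proposed resolution of the uniformity problem does not work. A ``quantum analogue of Le Cam's third lemma'' together with SLD/Bures estimates is a statement about \emph{weak} convergence (limit distributions of likelihood-ratio-type objects under contiguous alternatives); it underlies the weak formulation of quantum LAN, not the strong, trace-norm, uniform statement claimed in the theorem, and the paper stresses precisely that strong LAN is not a consequence of CLT-type pointwise results. Pointwise convergence at $\theta=0$ plus smoothness of $\theta\mapsto\rho_{\theta}$ cannot be promoted to $\sup_{\|\theta\|\leq C}\|\Phi_{\theta}-T_{n}(\rho^{n}_{\theta})\|_{1}\to 0$: the displacement/Holstein--Primakoff error, the cutoff error, and the concentration of the block distribution all have to be bounded explicitly and uniformly over the parameter ball and over the typical blocks. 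This is exactly what the paper's quantitative lemmas supply -- quasi-orthogonality of the weight vectors up to a slowly growing cutoff (Lemmas \ref{non-orth} and \ref{V_approx}), uniform Taylor expansions showing that the rotations $\Delta^{\zeta,n}_{\lambda}$ act on the finite-dimensional vacuum like the Weyl displacements $D^{\zeta}$ uniformly over $\lambda$ in the typical set and over $\|\zeta\|\leq C$ (Lemmas \ref{ldisplacement} and \ref{lgrouplimit}), convergence of the block-diagonal state to the thermal state (Lemma \ref{len0}), and Hoeffding-type concentration of the block weights (Lemma \ref{lconcentration}). Without estimates of this kind your argument establishes at best convergence at each fixed $\theta$, which is strictly weaker than the theorem.
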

The theorem shows that from a statistical point of view the joint qubits states are asymptotically indistinguishable from the limit Gaussian system. At the first sight one might object that the local nature of the result prevents us from drawing any conclusions for the original model of a completely unknown state $\rho$.
However this is not a limitation, but reflects the correct normalisation of the parameters with $n\to\infty$. Indeed as $n$ grows we have more information about the state which
can be pinned down to a region of size slightly larger that $1/\sqrt{n}$ by performing rough measurements on a small proportion of the systems. After this `localisation' step, we can use more sophisticated techniques to better estimate the state within the local neighborhood of the first step estimator, and it is here where we use the local asymptotic normality result. Indeed, since locally the states are uniformly close to displaced Gaussian states we can pull back the optimal (heterodyne) measurement for estimating the latter to get an asymptotically optimal measurement  for the former. Based on this insight we have proposed a realistic measurement set-up for this purpose using 
an atom-field interaction and continuous measurements in the field \cite{Guta&Janssens&Kahn}.

This paper deals with the extension of the previous result to $d$-dimensional systems. 
Like in the two-dimensional case we parametrise the neighbourhood of a fixed 
(diagonal) state $\rho_{0}$ by a vector $\vec{u}\in\mathbb{R}^{d-1}$ of diagonal parameters and $d(d-1)/2$ complex parameters $\quant =(\zeta_{j,k}: j<k)$, one for each off-diagonal matrix element 
(cf. \eqref{rho.theta.tilde} and \eqref{rho.theta}). 
We consider the same $1/\sqrt{n}-$scaling and look at the family 
$$
\mathcal{Q}_{n} = 
\left\{ \left[\rho_{\theta/\sqrt{n}}\right]^{\otimes n} :  
\theta = (\vec{u},\vec{\zeta}) \in \Theta_{n} \subset \mathbb{R}^{d-1}\otimes \mathbb{C}^{d(d-1)/2} \right\} ,
$$
where $\Theta_{n}$ is a ball of local parameters whose size is allowed to grow slowly with $n$. 

As in the $2$-dimensional case, the limit model is the product of a classical statistical model depending on the parameters $\vec{u}$ and a quantum model depending on $\vec{\zeta}$. Moreover the quantum part splits into a tensor product of displaced 
thermal states of quantum oscillators, one for each off-diagonal matrix element $\zeta_{j,k}$ with $j<k$. Thus
$$
\Phi^{\theta} = N(\vec{u}, I_{\rho_{0}}^{-1})\otimes \bigotimes_{j<k} \Phi^{\zeta_{j,k}}_{j,k} , \qquad
\theta = (\vec{u}, \vec{\zeta}).
$$ 
Here, $I_{\rho_{0}}$ is the Fisher information matrix of the multinomial model with parameters $(\mu_{1}, \dots ,\mu_{d})$ described in Example \ref{ex.limit.diagonal}, and 
$\Phi^{\zeta_{j,k}}_{j,k}$ is the displaced thermal equilibrium state defined in 
\eqref{eq.displaced.thermal} with inverse temperature $\beta= \ln (\mu_{j}/\mu_{k})$. 

Theorem \ref{main} is the main result of the paper and shows the convergence of 
$\mathcal{Q}_{n}$ to the Gaussian model 
$$
\mathcal{R}_{n} = \left\{ \Phi^{\theta} : \theta\in \Theta_{n}\subset \mathbb{R}^{d-1}\otimes \mathbb{C}^{d(d-1)/2}\right\},
$$
in the spirit of Theorem \ref{th.qubits}. On the technical side, the uniform convergence holds over local neighbourhoods $\Theta_{n}$ which are allowed to 
grow with $n$ rather that being fixed balls.  This is essential for constructing the two stage optimal measurement: first localise within a neighbourhood $\Theta_{n}$, and 
then apply the optimal Gaussian measurement. The details of this construction are similar to the two dimensional case and will be given in a subsequent paper 
\cite{Guta&Kahn4}.

Despite the similarity to the two dimensional case, the proof of the $d$-dimensional result has additional features which may be responsible for the fact that the optimal estimation problem has remained unsolved until now. The proof is based on the following observations:
\begin{itemize}
\item
the $n$ systems space $(\mathbb{C}^{d})^{\otimes n}$ decomposes into a direct sum 
of irreducible representations of $SU(d)$, each representation being labelled by a Young diagram $\lambda$ (cf. Theorem \ref{th.sud.sn});
\item
the joint state $\rho_{\theta/\sqrt{n}}^{\otimes n}$ has the block diagonal form \eqref{prerhon}, the block weights $\lambda \to p^{\theta,n}_{\lambda}$ depend only on the diagonal parameters $\vec{u}$ and are closely related to the multinomial distribution of Example \ref{ex.limit.diagonal}. This classical statistical model converges to the $(d-1)$-dimensional Gaussian shift model $N(\vec{u}, I^{-1}_{\rho_{0}})$;
\item
there exists an isometry $V_{\lambda}$ mapping basis vectors $\left\vert {\bf m}, \lambda\right\rangle$ of the irreducible representation $\mathcal{H}_{\lambda}$ {\it almost} into number vectors $\left\vert {\bf m}\right\rangle$ of the multimode Fock space, where 
${\bf m}= \{m_{j,k}: j<k \}$ is the collection of number eigenvalues for all oscillators. 
\item 
given a typical $\lambda$, the conditional block-state $\rho^{\theta,n}_{\lambda}$ can be 
mapped with $V_{\lambda}$ into a multimode state which is close (in trace norm) to the Gaussian product state $\otimes_{j<k} \Phi_{j,k}^{\zeta_{j,k}}$. 
This can be done {\it uniformly} over the typical diagrams whose normalised shapes have $1/\sqrt{n}$ fluctuations around  $(\mu_{1}, \mu_{2}, \dots, \mu_{d})$, and over parameters $\theta\in \Theta_{n}$.
\end{itemize}

The first item is the well known Weyl duality which is extensively used in quantum statistics for i.i.d. states. The probability distribution of the second point has also been analysed the context of large deviations \cite{Keyl&Werner} for the estimation of the state eigenvalues. The third point shows that the basis $|{\bf m}, \lambda\rangle$ is almost orthogonal for indices ${\bf m}$ which are not too big. This basis is obtained by projecting tensors of the form $f_{\bf a}:= f_{a(1)}\otimes\dots \otimes f_{a(n)}$ onto a subspace of  $(\mathbb{C}^{d})^{\otimes n}$ which is isomorphic to $\mathcal{H}_{\lambda}$ (cf. Theorem \ref{th.basis.irrep}). Let us place the indices $\{a(i): i=1\dots n\}$ in the boxes of the diagram $\lambda$ along rows, starting from the left end of the first row, to obtain a tableau $t_{\bf a}$. It turns out that we only need to consider $f_{\bf a}$ for which $t_{\bf a}$ is a semistandard tableau (nondecreasing along rows, increasing along columns). Then the label ${\bf m} :=\{m_{i,j} : j>i\}$ is the collection of integers 
$m_{i,j}$ equal to the number of $j's$ on the row $i$, and is in one to one correspondence with ${\bf a}$. The following is an example of such semistandard 
tableau
$$
t_{\bf m}= \young(1111111122233,222223,333) ~, \qquad {\rm with}~ m_{1,2}=3, m_{1,3}=2,m_{2,3}=1.
$$
The relatively large number of $i$'s in the row $i$ is intentional, since it turns out that the 
`relevant' vectors, i.e. those carrying the states $\rho^{\theta,n}_{\lambda}$, 
have indices $m_{i,j}$ small compared with the length of the rows ($\lambda_{i} \approx n\mu_{i}$ for typical representations $\lambda$). More precisely, in section \ref{preuve_quasi_orth} we prove the following quasi-orthogonality result which allows us to carry the block states over to the oscillator space: if  ${\bf m} \neq {\bf l}$ and $| {\bf  l} | \leq | {\bf m} | \leq n^{\eta}$  then 
$$
\vert \langle {\bf m}, \lambda | {\bf l}, \lambda \rangle \vert = 
O(n^{(9\eta-2) |{\bf m} -{\bf l}|/12} )  \xrightarrow[n\to\infty]{} 0 \qquad {\rm for~} \eta< 2/9. 
$$

The proof of the fourth point involves a detailed analysis of the state 
$\rho^{\theta,n}_{\lambda}$ through its coefficients in the basis $\left\vert {\bf m},\lambda \right\rangle$ of $\mathcal{H}_{\lambda}$. When $\theta=0$ the state is diagonal and 
its coefficients approach uniformly those of the multidimensional thermal state 
$\Phi^{\vec{0}}=\otimes_{j<k} \Phi_{j,k} $ as shown in Lemma \ref{len0}. 
The next step is to apply $SU(d)$ rotations and obtain the states $\rho^{\theta,n}_{\lambda}$. In Lemmas \ref{ldisplacement} and \ref{lgrouplimit} it is shown that the unitary operations ${\rm Ad} [U_{\lambda}(\zeta/\sqrt{n})]$ act on $\rho^{0,n}_{\lambda}$ in the same way as the displacement operator $D^{\vec\zeta}$ acts on the thermal state $\Phi^{\vec{0}}$. A remarkable fact is that in the limit the different off-diagonal parameters `separate' into a product of shift experiments for quantum oscillators, 
one for each off-diagonal index $(j<k)$. This could be guessed from the Quantum Central Limit Theorem \ref{th.clt} which is related to the restriction of our result to $\theta=0$.


Due to the apparent intricacy of the main result, the paper is organised according to the `onion peeling' principle. We start in section \ref{given} with general classical 
statistical notions which motivate our investigation in quantum statistics. In particular we explain the relevance of the Le Cam distance between statistical models as a statistically meaningful way to describe convergence. Section \ref{idea} presents the classical version of local asymptotic normality with the multinomial model as example. 

In section \ref{sec.qlan} we introduce the quantum statistical model consisting of 
$n$ identical quantum systems with joint state $\rho^{\theta,n}$ described 
by diagonal and rotation parameters. We also introduce the multimode Gaussian states appearing in the limit. With this we can formulate the main result, Theorem \ref{main}.

In section \ref{prepreuve} 
we introduce the basis $|{\bf m}, \lambda\rangle$ and the isometry 
$V_{\lambda}$ allowing us to define the channels $T_{n}$ and $S_{n}$ connecting the two statistical models. 

In section \ref{sec.main.steps} we break the proof of the main theorem into manageable lemmas, essentially by using triangle inequalities. Each lemma deals with a different aspect of the convergence and has an interest in its own. 

Finally, the technical proofs are collected in section \ref{sec.technical.proofs}. Notably, subsection \ref{technical_tools} and Lemma \ref{lemtools} contain the combinatorial substance of the paper. 

Our investigation relies on the theory of representations of $SU(d)$. We refer to \cite{Fulton,Goodman&Wallach,Fulton&Harris} for proofs of standard results and more 
details.

As in the two-dimensional case \cite{Guta&Janssens&Kahn}, local asymptotic normality provides a two stage adaptive measurement strategy which is asymptotically optimal for both Bayesian and pointwise viewpoints, and for a large range of `distances' on the state space \cite{Guta&Kahn4}.

\section{Classical and quantum statistical experiments}
\label{given}

In this section we introduce some basic notions from classical statistics with the aim of defining the Le Cam distance between statistical models and local asymptotic normality. In parallel, we will define the quantum analogues  and point out their relevance in quantum statistics. The reader may find the conceptual framework helpful in understanding the quantum version of the result, but otherwise the section can be skipped at the first reading. 

Let $X$ be a random variable with values in the measure space 
$(\mathcal{X},\Sigma_{\mathcal{X}})$, and let us assume that its probability distribution 
$P$ belongs to some family $\{ P_{\theta} : \theta\in\Theta\}$ where the parameter 
$\theta$ is unknown. Statistical inference deals with the question of how to use the available data $X$ in order to draw conclusions about some property of $\theta$. We shall call the family 
\begin{equation}\label{eq.experiment}
\mathcal{E} :=  \{ P_{\theta} : \theta\in \Theta \},
\end{equation}
a {\it statistical experiment or statistical model} over $(\mathcal{X}, \Sigma_\mathcal{X})$ 
\cite{LeCam}.

In quantum statistics the data is replaced by a quantum system prepared in a state 
$\phi$  which belongs to a family $\{ \phi_{\theta} :\theta\in \Theta\}$  of states over an algebra of observables. 
In order to make a statistical inference about $\theta$ one first has to measure the system, and then apply statistical techniques to draw conclusions from the data consisting of the measurement outcomes. An important difference with the classical case is that the experimenter has the possibility to choose the measurement set-up $M$, and 
each set-up will lead to a different classical model 
$\{ P^{(M)}_{\theta}  : \theta\in \Theta\}$, where 
$P^{(M)}_{\theta}$ is the distribution of outcomes when performing the measurement $M$ on the system prepared in state $\phi_{\theta}$. 

The guiding idea of this paper is to investigate the structure of the family of 
quantum states   
$$
\mathcal{Q} :=  \{ \phi_{\theta} : \theta\in\Theta \},
$$
which will be called a {\it quantum statistical experiment}. We shall show that in an important asymptotic set-up, namely that of a large number of identically prepared systems, the joint state can be approximated by a multidimensional quantum Gaussian state, for {\it all} possible preparations of the individual systems. This will bring a drastic simplification in the problem of optimal estimation for $d$-dimensional quantum systems, which can then be solved in the asymptotic framework \cite{Guta&Kahn4}.     

%


\subsection{Classical and quantum randomizations}
\label{do}

Any statistical decision (e.g. estimator, test) can be seen as data processing  using a {\it Markov kernel}. 
Suppose we are given a random variable $X$ taking values in $(\mathcal{X},\Sigma_{\mathcal{X}})$ and we want to produce a `decision' $y\in \mathcal{Y}$  based on the data $X$. The space $\mathcal{Y}$ may  be for example the parameter space $\Theta$ in the case of estimation, or just the set $\{0,1\}$ in the case of testing between two hypotheses. For every value $x\in\mathcal{X}$ we choose $y$ randomly with probability distribution  given by $K_{x} (dy)$. Assuming that $K:\mathcal{X}\times \Sigma_{\mathcal{Y} }\to [0,1] $ is measurable with respect to $x$ for all fixed $A\in \Sigma_{\mathcal{Y} }$, we can regard $K$ as a map from probability distributions over $(\mathcal{X},\Sigma_{\mathcal{X}})$ to probability distributions over  $(\mathcal{Y},\Sigma_{\mathcal{Y} })$ with 
\begin{equation}\label{eq.markov.kernel}
K(P) (A) = \int K_{x}(A) P(dx), \quad A\in \Sigma_{\mathcal{Y}}.
\end{equation}
A {\it statistic} $S: \mathcal{X}\to \mathcal{Y}$ is a particular example of such a procedure, where $K_{x}$ is simply the delta measure at $S(x)$. 

Besides statistical decisions, there is another important reason why one would like to apply such treatment to the data, namely to summarize it in a more convenient and informative way for future purposes as illustrated in the following simple example. Consider $n$  independent identically distributed random variables $X_{1},\dots , X_{n}$ with values in $\{0,1\}$ and distribution $P_{\theta} := ( 1-\theta,\theta)$ with $\theta\in\Theta:= (0,1)$. The associated statistical experiment is
$$
\mathcal{E}_{n}:=\{ P_{\theta}^{n} : \theta\in\Theta\}.
$$
It is easy to see that $\bar{X}_{n} = \frac{1}{n} \sum_{i=1}^{n} X_{i}$ is an
unbiased estimator of $\theta$ and moreover it is a {\it sufficient statistic} for
$\mathcal{E}_{n}$, \emph{i.e.} the conditional distribution $P_{\theta}^{n}(\cdot | \bar{X}_{n} = \bar{x})$ does not depend on $\theta$! In other words the dependence on $\theta$ of the total sample $(X_{1},X_{2},\dots , X_{n})$ is completely captured by the statistic
$\bar{X}_{n}$ which can be used as such for any statistical decision problem concerning $\mathcal{E}_{n}$. If we denote by $\bar{P}^{n}_{\theta}$ the distribution of $\bar{X}_{n}$ then the experiment
$$
\bar{\mathcal{E}}_{n} = \{ \bar{P}^{n}_{\theta} : \theta\in\Theta\},
$$ 
is statistically equivalent to $\mathcal{E}_{n}$. To convince ourselves that $\bar{X}_{n}$ does contain the same statistical information as $(X_{1},\dots , X_{n})$, we show that we can obtain the latter from the former by means of a randomized statistic. 
Indeed for every fixed value $\bar{x}$ of $\bar{X}_{n}$ there exists a measurable function
$$
f_{\bar{x}}: [0,1] \to \{0,1 \}^{n},
$$
such that  the distribution of $f_{\bar{x}} (U)$ is 
$P^{n}_{\theta}( \cdot| \bar{X}_{n} = \bar{x})$. In other words
$$
\lambda( f_{\bar{x}}^{-1} (x_{1} ,\dots , x_{n}) ) =
P^{n}_{\theta}(  x_{1}, \dots ,x_{n} | \bar{X}_{n} = \bar{x}),
$$
where $\lambda$ is the Lebesgue measure on $[0,1]$. Then
$
F(\bar{X}_{n}, U) := f_{\bar{X}_{n}} (U),
$
has distribution $P_{\theta}^{n}$. To summarize, statistics, randomized statistics and  Markov kernels, are ways to transform the available data for a specific purpose. The Markov kernel $K$ defined in  \eqref{eq.markov.kernel} maps the  experiment 
 $\mathcal{E}$ of equation \eqref{eq.experiment} into the experiment
$$
\mathcal{F}:= \{ Q_{\theta} :  \theta\in \Theta\} ,
$$
over $(\mathcal{Y}, \Sigma_\mathcal{Y})$ with $Q_{\theta} = K(P_{\theta})$. For mathematical convenience it is useful to represent such transformations in terms of linear maps between linear spaces. 
\begin{defin}\label{def.stochoperator}
A positive linear map
$$
T_{*} : L^{1} (\mathcal{X}, \Sigma_{\mathcal{X}} , P)\to 
L^{1}(\mathcal{Y}, \Sigma_{\mathcal{Y}} , Q)
$$
is called a {\it stochastic operator} or {\it transition} if $\| T_{*} (g)\|_{1} = \|g \|_{1}$ for every $g\in L_{+}^{1}(\mathcal{X})$.
\end{defin}
\begin{defin}\label{def.markovop}
A positive linear map
$$
T: L^{\infty} (\mathcal{Y}, \Sigma_{\mathcal{Y}} , Q)
\to L^{\infty}(\mathcal{X}, \Sigma_{\mathcal{X}} , P)
$$
is called a {\it Markov operator} if
$T \mathbf{1} =\mathbf{1}$, and if for any $f_{n}\downarrow 0$ in $L^{\infty}(\mathcal{Y}) $ we have $T f_{n}\downarrow 0$.
\end{defin}
A pair $(T_{*}, T)$ as above is called a dual pair if
$$
\int f T (g) dP = \int T_{*}(f) g dQ,
$$
for all $f\in L^{1}(\mathcal{X},\Sigma_{\mathcal{X}} , P)$ and $g\in L^{\infty}(\mathcal{Y},\Sigma_{\mathcal{Y}} , Q)$. It is a theorem that for any stochastic operator $T_{*}$ there exists a unique dual Markov operator $T$ and vice versa.

What is the relation between  Markov operators and Markov kernels ? Roughly 
speaking, any Markov kernel defines a Markov operator when we restrict to families 
of dominated probability measures.
Let us assume that all distributions $ P_{\theta}$ of the experiment 
$\mathcal{E}$ defined in \eqref{eq.experiment} are absolutely continuous with respect to a fixed probability distribution  $P$, such that there exist densities 
$p_{\theta} : = dP_{\theta}/ dP : \mathcal{X}\to \mathbb{R}_{+}$. Such an experiment is called {\it dominated} and in concrete situations this condition is usually satisfied.  Let $K_{x}(dy)$ be a Markov kernel \eqref{eq.markov.kernel} 
such that $Q_{\theta}= K(P_{\theta})$, then we define associated Markov operator 
$(T(f))(x) := \int f(y) k_{x}(dy)$ and have
\begin{equation}\label{eq.markov.operator.randomization}
Q_{\theta} = P_{\theta}\circ T, \qquad \forall\theta.
\end{equation}

When the probability distributions of two experiments are related to each other as in \eqref{eq.markov.operator.randomization}, we say that $\mathcal{F}$ is a {\it randomization} of $\mathcal{E}$. From the duality between $T$ and $T_{*}$ we obtain an equivalent characterization in terms of the stochastic operator 
$T_{*} : L^{1} (\mathcal{X}, \Sigma_{\mathcal{X}} , P)\to L^{1}(\mathcal{Y}, \Sigma_{\mathcal{Y}} , Q) $ such that 
$$
T_{*}( dP_{\theta}/dP) = dQ_{\theta}/dQ, \qquad \forall\theta\, .
$$
The concept of randomization is weaker than that of Markov kernel transformation, but  under the additional condition that $(\mathcal{Y}, \Sigma_{\mathcal{Y}})$ is locally compact space with countable base and Borel $\sigma$-field, it can be shown that any randomization can be implemented by a Markov kernel \cite{Strasser}.

What is the analogue of randomizations in the quantum case ? In the language of operator algebras $L^{\infty}(\mathcal{X},\Sigma_{\mathcal{X}} , P)$ is a commutative von Neumann algebra and $L^{1}(\mathcal{X},\Sigma_{\mathcal{X}} , P)$ is the space 
of (densities of) {\it normal} linear functionals on it. The stochastic operator $T_{*}$ is the classical version of {\it quantum channel}, \emph{i.e.} a completely positive normalized (trace-preserving) map 
$$
T_{*} : \mathcal{A}_{*}\to \mathcal{B}_{*} 
$$ 
where $\mathcal{A}_{*},\mathcal{B}_{*}$ are the spaces of normal states on the von Neumann algebra $\mathcal{A}$ and respectively $\mathcal{B}$. Any normal state $\phi$ on $\mathcal{A}$ has a density $\rho$ with respect to the trace such that $\phi (A)= \mathrm{Tr}(\rho A)$ for all $A\in\mathcal{A}$. The dual of $T_{*}$  is
$$
T : \mathcal{B}\to \mathcal{A},
$$
which is a unital completely positive map and has the property that 
$T_{*}(\phi)(b) = \phi(T(b))$ for all $b\in\mathcal{B}$ and $\phi\in\mathcal{A}_{*}$. 
We interpret such quantum channels as possible physical transformations from input to output states.

A particular class of channels is that of measurements. In this case the input is the state of a quantum system described by an algebra $\mathcal{A}$, and the output  is a probability distribution over the space of outcomes 
$(\mathcal{X}, \Sigma_\mathcal{X})$. Any measurement is described by a positive linear map
$$
M : L^{\infty}(\mathcal{X}, \Sigma_\mathcal{X}, P) \to \mathcal{A},
$$
which is completely specified by the image of characteristic functions of measurable 
sets, also called {\it positive  operator valued measure} (POVM). This map $M :\Sigma_{\mathcal{X}} \to \mathcal{A}$  has following properties
\begin{enumerate}
\item
Positive: $M(A) \geq 0, \qquad \forall \,A\in\Sigma_{\mathcal{X}}$ ;
\item
Countably additive: $\sum_{i=1}^{\infty} M(A_{i}) = M(\cup_{i} A_{i}) , \quad A_{i}\cap A_{j} =\emptyset , i\neq j$;
\item
Normalized: $M(\mathcal{X}) =\mathbf{1}$.
\end{enumerate}
The corresponding channel acting on states is a positive map 
$M_{*}:\mathcal{A}_{*} \to L^{1}(\mathcal{X}, \Sigma_{\mathcal{X}},P)$ given by
$$
M(\phi) (A) = \phi (M(A)) =\mathrm{Tr}(\rho M(A)),
$$
where $\rho$ is the density matrix of $\phi$.
By applying the channel $M$ to the quantum statistical experiment  consisting of the family of states $\mathcal{Q} = (\phi_{\theta} :\theta\in\Theta)$ on $\mathcal{A}$  we obtain a classical statistical experiment
$$
\mathcal{Q}_{M} :=\{ M(\phi_{\theta})  :\theta\in\Theta \} ,
$$
over the outcomes space $(\mathcal{X} ,\Sigma_{\mathcal{X}})$. 

As in the classical case, quantum channels can be seen as ways to compare quantum experiments. The first steps in this direction were made by Petz \cite{Petz86,Petz&Jencova,Ohya&Petz} who developed the theory of 
{\it quantum sufficiency} dealing with the problem of characterizing when a sub-algebra of observables contains the same statistical information about a family of states, as the original algebra. More generally, two experiments 
$ \mathcal{Q}:=\{\mathcal{A}, \phi_{\theta}:\theta\in\Theta\}$ and 
$\mathcal{R}:=\{ \mathcal{B} , \psi_{\theta} : \theta\in\Theta\}$ are called {\it statistically equivalent} if there exist channels $T:\mathcal{A}\to \mathcal{B}$ and $S:\mathcal{B}\to \mathcal{A}$ such that 
$$
\psi_{\theta} \circ T= \phi_{\theta} \qquad {\rm and }\qquad 
\phi_{\theta} \circ S= \psi_{\theta} \qquad \forall \theta.
$$

As consequence, for any measurement 
$M: L^{\infty}(\mathcal{X} ,\Sigma_{\mathcal{X}}, P )\to \mathcal{A}$ there exists a measurement $T\circ M: L^{\infty}(\mathcal{X} ,\Sigma_{\mathcal{X}}, P ) \to \mathcal{B}$ such that the resulting classical experiments coincide $\mathcal{Q}_{M}= \mathcal{R}_{T\circ M}$. Thus for any statistical problem, and any procedure concerning the experiment $\mathcal{Q}$ there exists a procedure for $\mathcal{R}$ with the 
same risk (average cost), and vice versa. 


\subsection{The Le Cam distance and its statistical meaning}
\label{Dexp}



We have seen that two experiments are statistically equivalent when they can be transformed into each other be means of quantum channels. When this cannot be 
done exactly, we would like to have a measure of how close the two experiments 
are when we allow any channel transformation. We define the \emph{deficiency} of 
$ \mathcal{R}$  with respect to $ \mathcal{Q}$ as 
\begin{equation}\label{eq.deficiency}
\delta(\mathcal{R}, \mathcal{Q}) = \inf_{T} \sup_{\theta} \| \phi_{\theta} - \psi_{\theta}  \circ T\|
\end{equation}
where the infimum is taken over all channels $T:\mathcal{A}\to\mathcal{B}$. 
The norm distance between two states on $\mathcal{A}$ is defined as 
$$
\| \phi_{1} - \phi_{2}\| :=\, \sup \{ |\phi_{1}(a) -\phi_{2}(a)| : a\in\mathcal{A} , \|a\|\leq 1\},
$$
and for $\mathcal{A}= \mathcal{B}(\mathcal{H})$ it is equal to $\|\rho_{1}-\rho_{2}\|_{1}:=\mathrm{Tr}(|\rho_{1}-\rho_{2}|)$, where $\rho_{i}$ is the density matrix of the state $\phi_{i}$.
When 
$\delta(\mathcal{R}, \mathcal{Q})=0$ we say that $\mathcal{R}$ is more informative 
than $\mathcal{Q}$. Note that $\delta(\mathcal{R}, \mathcal{Q})$ is not symmetric but satisfies a triangle inequality of the form 
$\delta(\mathcal{R}, \mathcal{Q}) +\delta(\mathcal{Q}, \mathcal{T}) \geq \delta(\mathcal{R}, \mathcal{T})$.  By symmetrizing we obtain a proper distance over the space 
of equivalence classes of experiments, called Le Cam's distance \cite{LeCam}
\begin{equation}\label{eq.lecam.distance}
\Delta(\mathcal{Q}, \mathcal{R}): = \rm{max} \left( \delta(\mathcal{Q}, \mathcal{R}) \, ,\,\delta(\mathcal{R}, \mathcal{Q}) \right).
\end{equation}

What is the statistical meaning of the Le Cam distance ? We shall show that if 
$\delta(\mathcal{R} , \mathcal{Q}) \leq \epsilon$ then for any statistical decision 
problem with loss function between $0$ and $1$, any measurement procedure for 
$\mathcal{Q}$ can be matched by a measurement procedure for $\mathcal{R}$ 
whose risk will be at most $\epsilon$ larger than the previous one.

A decision problem is specified by a {\it decision space} $(\mathcal{X},\Sigma_{\mathcal{X}})$ and a {\it loss function} $W_{\theta}: \mathcal{X}\to [0,1]$ for each 
$\theta\in\Theta$. We are given a quantum system prepared in the state 
$\phi_{\theta}\in\mathcal{A}_{*}$ with unknown parameter $\theta\in\Theta$ and would like to perform a measurement with outcomes in $\mathcal{X}$ such that the expected value of the loss function $W_{\theta}$ is small. Let
$$
M: L^{\infty} (\mathcal{X}, \Sigma_{\mathcal{X}} ,P) \to \mathcal{A},
$$
be such a measurement,  and $P^{(M)}_{\theta} = \phi_{\theta}\circ M $, then  the {\it risk} at $\theta$ is
$$
R(M,\theta):=  \int_{\mathcal{X}} W_{\theta}(x)P^{(M)}_{\theta}(d x).
$$

Since the point $\theta$ is unknown one would like to obtain a small risk over all possible  realizations
\[
R_{max} (M) = \sup_{\theta\in \Theta} R(M,\theta).
\] 
The {\it minimax risk} is then $R_{minmax}:= \inf_{M} R_{max} (M)$.  
In the Bayesian framework one considers a prior distribution $\pi$ over $\Theta$ 
and then averages the risk with respect to $\pi$
$$
R_{\pi} (M) = \int_{\Theta} R(M,\theta) \pi(d\theta).
$$
The optimal risk in this case is $R_{\pi} := \inf_{M} R_{\pi} (M)$.

Coming back to the experiments $\mathcal{Q}$ and $\mathcal{R}$ we shall compare 
their achievable risks for a given decision problem as above. Consider the measurement $N:L^{\infty}(\mathcal{X} , \Sigma_{\mathcal{X}} ,P) \to \mathcal{B} $ given by 
$N= T\circ M$ where $T: \mathcal{A}\to \mathcal{B}$ is the channel which achieves 
the infimum in \eqref{eq.deficiency}. Then 
\begin{eqnarray*}
R(N,\theta)& = &\int_{\mathcal{X}} W(\theta,x )P^{(N)}_{\theta}(d x) = 
\psi_{\theta} (T \circ M (W_{\theta}))  \\
&\leq& 
\|\psi_{\theta}\circ T - \phi_{\theta}\| + 
\phi_{\theta}(M(W_{\theta}))\leq
\delta(\mathcal{R}, \mathcal{Q}) + R(M,\theta),
\end{eqnarray*}
where we have used the fact that $0\leq W_{\theta}\leq 1$.
\begin{lem}\label{lemma.risk.deficiency}
For every achievable risk $R(M, \theta)$ for $\mathcal{Q}$ there exists a 
measurement $N:L^{\infty}(\mathcal{X} , \Sigma_{\mathcal{X}} ,P) \to \mathcal{B}$ 
for $\mathcal{R}$ such that
$$
R(N, \theta) \leq R(M, \theta) + \delta (\mathcal{R}, \mathcal{Q}).
$$
In consequence
$$
R_{minmax}(\mathcal{R}) \leq  R_{minmax}(\mathcal{Q}) +  
\delta (\mathcal{R}, \mathcal{Q}).
$$
\end{lem}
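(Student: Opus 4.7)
The proof is essentially spelled out in the computation immediately preceding the lemma, so my plan is to package that display into a clean argument and then derive the minimax corollary.

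First, given a measurement $M\colon L^{\infty}(\mathcal{X},\Sigma_{\mathcal{X}},P)\to\mathcal{A}$ for $\mathcal{Q}$, I would pick a channel $T\colon\mathcal{A}\to\mathcal{B}$ that attains (or nearly attains, up to an arbitrary $\epsilon>0$) the infimum in the definition \eqref{eq.deficiency} of $\delta(\mathcal{R},\mathcal{Q})$, and set $N:=T\circ M$. Since $M$ is a positive unital map and $T$ is a channel, the composition $N$ is again a positive unital map $L^{\infty}(\mathcal{X},\Sigma_{\mathcal{X}},P)\to\mathcal{B}$, hence a bona fide measurement on $\mathcal{R}$, and its outcome distribution is $P^{(N)}_{\theta}=\psi_{\theta}\circ N=\psi_{\theta}\circ T\circ M$.

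Next I would compute the risk:
$$
R(N,\theta)=\int_{\mathcal{X}}W_{\theta}(x)\,P^{(N)}_{\theta}(dx)=\psi_{\theta}\bigl(T\circ M(W_{\theta})\bigr).
$$
Adding and subtracting $\phi_{\theta}\bigl(M(W_{\theta})\bigr)$ and applying the triangle inequality, I bound this by
$$
\bigl|(\psi_{\theta}\circ T-\phi_{\theta})\bigl(M(W_{\theta})\bigr)\bigr|+\phi_{\theta}\bigl(M(W_{\theta})\bigr)\leq \|\psi_{\theta}\circ T-\phi_{\theta}\|\cdot\|M(W_{\theta})\|+R(M,\theta).
$$
The key observation here is that $0\leq W_{\theta}\leq 1$ combined with positivity and unitality of $M$ gives $0\leq M(W_{\theta})\leq\mathbf{1}$, so $\|M(W_{\theta})\|\leq 1$. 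Taking the supremum over $\theta$ in $\|\psi_{\theta}\circ T-\phi_{\theta}\|$ and using the (near) optimality of $T$ yields $R(N,\theta)\leq R(M,\theta)+\delta(\mathcal{R},\mathcal{Q})$ (modulo the $\epsilon$ slack, which I would absorb at the end).

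For the second assertion, I take the supremum over $\theta$ on both sides to get $R_{max}(N)\leq R_{max}(M)+\delta(\mathcal{R},\mathcal{Q})$, then the infimum over $M$ on the right, noting that every such $M$ gives rise to an admissible $N$ for $\mathcal{R}$, so $R_{minmax}(\mathcal{R})\leq\inf_{M}R_{max}(N)\leq R_{minmax}(\mathcal{Q})+\delta(\mathcal{R},\mathcal{Q})$. The only mild subtlety is whether the infimum in \eqref{eq.deficiency} is attained; if not, one runs the argument with a sequence $T_{n}$ with $\sup_{\theta}\|\phi_{\theta}-\psi_{\theta}\circ T_{n}\|\leq\delta(\mathcal{R},\mathcal{Q})+1/n$ and passes to the limit in the minimax inequality, which is the step I would treat most carefully.
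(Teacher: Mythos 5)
Your proposal is correct and follows essentially the same route as the paper: the paper's proof is exactly the displayed computation preceding the lemma, defining $N=T\circ M$ with $T$ the channel realising the infimum in \eqref{eq.deficiency}, bounding $\psi_{\theta}(T\circ M(W_{\theta}))$ by $\|\psi_{\theta}\circ T-\phi_{\theta}\|+\phi_{\theta}(M(W_{\theta}))$ via $0\leq W_{\theta}\leq 1$, and then passing to suprema and infima for the minimax statement. Your only additions --- noting explicitly that positivity and unitality of $M$ give $\|M(W_{\theta})\|\leq 1$, and handling a possibly non-attained infimum with an $\epsilon$-slack (or a sequence $T_{n}$) --- are harmless refinements of the same argument, the paper simply assuming the infimum is achieved.
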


\section{Local asymptotic normality in statistics}
\label{idea}

In this section we describe the notion of local asymptotic normality and its significance
in statistics \cite{LeCam,Torgersen,Strasser,vanderVaart}. Suppose that we observe 
$X_{1}, \dots ,X_{n}$ where $X_{i}$ take values in a measurable space 
$(\mathcal{X}, \Sigma_{\mathcal{X}})$ and are are independent, identically distributed with distribution $P_{\theta}$ indexed by a parameter $\theta$ belonging to an open subset $\Theta\subset\mathbb{R}^{m}$.
The full sample is a single observation from the product $P_{\theta}^{n}$ of
$n$ copies of $P_{\theta}$ on the sample space $(\Omega^{n}, \Sigma^{n})$.
Local asymptotic normality means that for large $n$ such statistical experiments can be approximated by Gaussian experiments after a suitable reparametrisation. 
Let $\theta_{0}$ be a fixed point and define a local parameter
$u=\sqrt{n} (\theta-\theta_{0})$ characterizing points in a small neighbourhood 
of $\theta_{0}$, and rewrite $P_{\theta}^{n}$ as
$P_{\theta_{0}+ u/\sqrt{n}}^{n}$ seen as a distribution depending on the
parameter $u$. Local asymptotic normality means that for large $n$ the experiments
$$
\left\{P_{\theta_{0} + u/\sqrt{n}}^{n} : u \in \mathbb{R}^{m}\right\}
\qquad {\rm and} \qquad
\left\{ N( u, I_{\theta_{0}}^{-1}) : u \in \mathbb{R}^{m}\right\},
$$
have the same statistical properties when the models $\theta\mapsto P_{\theta}$ are sufficiently  `smooth'.
The point of this result is that while the original experiment may be difficult to analyse, the limit one is a tractable {\it Gaussian shift} experiment in which we observe a single sample from the normal distribution with unknown mean $u$ and fixed variance 
matrix $I_{\theta_{0}}^{-1}$. Here
$$
\left[I_{\theta_{0}}\right]_{ij} =
\mathbb{E}_{\theta_{0}}
\left[ \ell_{\theta_{0}, i} \ell_{\theta_{0}, j}
 \right],
$$
is the Fisher information matrix at $\theta_{0}$, with $\ell_{\theta,i} := \partial\log p_{\theta}/\partial\theta_{i}$  the score function and 
$p_{\theta}$ is the density of $P_{\theta}$ with respect to a reference 
probability distribution $P$.

There exist two formulations of the result depending on the notion of convergence which one uses. In this paper we only discuss the {\it strong} version based on convergence with respect to the Le Cam distance, and we refer to 
\cite{vanderVaart} for another formulation using the so called weak convergence (convergence in distribution of finite dimensional marginals of the likelihood ratio process), and to \cite{Guta&Jencova} for its generalization to quantum statistical experiments.

Before formulating the theorem, we explain what sufficiently smooth means. The least restrictive condition is that $p_{\theta}$ is {\it differentiable in quadratic mean}, \emph{i.e.} there exists a measurable function $\ell_{\theta}:\mathcal{X}\to\mathbb{R}$ such that 
as $u\to 0$
$$
\int \left[ 
p_{\theta+u}^{1/2} - p_{\theta}^{1/2} - u^{t} \ell_{\theta} p_{\theta}^{1/2} 
\right]^{2} dP \to 0.
$$ 
Note that $\ell_{\theta}$ must still be interpreted as score function since under some regularity conditions we have $\partial p_{\theta}^{1/2} / \partial \theta_{i} = \frac{1}{2}(\partial \log p_{\theta}/\partial \theta_{i}) p^{1/2}_{\theta}$.
\begin{thm}\label{th.lan}
Let $\mathcal{E}:= \{P_{\theta} : \theta \in \Theta \}$ be a statistical experiment with 
$\Theta \subset \mathbb{R}^d$ and $P_{\theta}\ll P$ such that the map $\theta\to p_{\theta}$  is differentiable in quadratic mean. Define
\begin{align*}
\mathcal{E}_n = \{P_{\theta_{0} +u/\sqrt{n} }^{n} :  \|u\|\leq C  \} , \qquad
\mathcal{F} = \{ N(u, I_{\theta_{0}}^{-1} ) :  \| u\|\leq C  \},
\end{align*}
with $I_{\theta_{0}}$ the Fisher information matrix of 
$\mathcal{E}$ at point $\theta_{0}$, and $C$ a positive constant. 
Then $ \Delta(\mathcal{E}_n,\mathcal{F}) \to 0$. In other words, there exist sequences of randomizations $T_n$ and $S_n$ such that:
\begin{align*}
\lim_{n\to\infty}\, \sup_{\|u\|\leq C} \,\left\|
T_n(P_{\theta_{0} +u/\sqrt{n}}^{n}) - N(u, I_{\theta_{0}}^{-1}) \right\| & = 0,
\\
\lim_{n\to\infty}\,\sup_{\| u \|\leq C} \,\left\|
P_{\theta_{0} +u/\sqrt{n}}^{n} - S_n( N(u, I_{\theta_{0}}^{-1})) \right\|  & = 0.
\end{align*}
\end{thm}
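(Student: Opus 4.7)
The plan is to reduce the two-sided randomization to a quantitative analysis of the log-likelihood ratio process and then build the channels from a central-limit statistic. The starting point is the classical LAN expansion: using differentiability in quadratic mean at $\theta_0$, a second-order Taylor-type argument yields
\begin{equation*}
\log \frac{dP^n_{\theta_0+u/\sqrt n}}{dP^n_{\theta_0}} = u^t \Delta_{n} - \tfrac{1}{2} u^t I_{\theta_0} u + R_n(u),
\end{equation*}
where $\Delta_n := n^{-1/2} \sum_{i=1}^n \ell_{\theta_0}(X_i)$ and $R_n(u)\to 0$ in $P^n_{\theta_0}$-probability, uniformly for $\|u\|\leq C$. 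The multivariate CLT gives $\Delta_n \Rightarrow N(0, I_{\theta_0})$ under $P^n_{\theta_0}$, and Le Cam's third lemma propagates this to $\Delta_n \Rightarrow N(I_{\theta_0} u, I_{\theta_0})$ under $P^n_{\theta_0 + u/\sqrt n}$.

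I would define $T_n$ by first forming the score-estimator $\hat u_n := I_{\theta_0}^{-1}\Delta_n$ from the sample, then convolving with a vanishing Gaussian smoothing of variance $\varepsilon_n\downarrow 0$; this guarantees absolute continuity and drives the $L^1$ distance of the output to $N(u, I_{\theta_0}^{-1})$ to zero. For $S_n$, I would exploit the asymptotic sufficiency of $\Delta_n$: given $Y\sim N(u, I_{\theta_0}^{-1})$, define the surrogate $\tilde\Delta_n := I_{\theta_0} Y$ (which mimics the law of $\Delta_n$ under $P^n_{\theta_0+u/\sqrt n}$), then complete the sample by drawing from a regular conditional distribution $P^n_{\theta_0}(\,\cdot\,|\,\Delta_n = \tilde\Delta_n)$, which the LAN expansion shows is asymptotically independent of $u$. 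The resulting law differs from $P^n_{\theta_0+u/\sqrt n}$ by a total-variation error governed by $R_n$ together with the Gaussian approximation error on the law of $\Delta_n$.

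The main obstacle, which is what separates strong LAN (in Le Cam distance) from weak LAN (in distribution of the likelihood ratio process), is upgrading weak convergence to total-variation convergence, uniformly on $\|u\|\leq C$. Since the score $\ell_{\theta_0}$ is only assumed $L^2$, a direct Berry-Esseen bound is unavailable; the Gaussian smoothing in $T_n$ trades a small controllable total-variation error for absolute continuity, and combining it with the Hellinger inequality $\|P-Q\|_1\leq \sqrt{2}\,H(P,Q)$ applied to the smoothed densities yields the required uniform bound. Uniformity in $u$ is then a matter of equicontinuity of $u\mapsto R_n(u)$, which follows from quadratic-mean differentiability by a standard chaining estimate over the compact ball $\|u\|\leq C$.
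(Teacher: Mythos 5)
The paper itself does not prove Theorem \ref{th.lan}; it quotes it as a classical result with references to Le Cam, Torgersen, Strasser and van der Vaart, so there is no in-paper argument to compare against --- the closest analogue inside the paper is the explicit treatment of the multinomial case in Lemmas \ref{lclassical} and \ref{sigma}, with the smoothing kernel $\tau^n$ of Definition \ref{def.markov.tau} and the discretization kernel $\sigma^n$ of \eqref{sigmalambda}. Judged on its own terms, your forward direction is the standard route (LAN expansion under differentiability in quadratic mean, Le Cam's third lemma, Gaussian smoothing with slowly vanishing variance to upgrade weak convergence of the law of $\hat u_n$ to total-variation convergence), and apart from the vague appeal to ``chaining'' for uniformity on $\|u\|\leq C$ (the score is only in $L^2$, so uniformity is usually obtained through the sequence formulation of DQM and contiguity rather than chaining) it can be made rigorous.

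The construction of $S_n$, however, has a genuine gap, and it sits exactly where the real difficulty of the strong (Le Cam) version lies. First, the conditional distribution $P^n_{\theta_0}(\,\cdot\mid\Delta_n=t)$ is defined only for almost every $t$ with respect to the law of $\Delta_n$ under $P^n_{\theta_0}$, and that law may be singular with respect to Lebesgue measure (lattice-valued scores, e.g.\ Bernoulli data); plugging in the continuous variable $\tilde\Delta_n=I_{\theta_0}Y$ therefore does not define a meaningful Markov kernel. What is needed is an explicit law-matching step --- a randomized discretization or quantile coupling sending $N(I_{\theta_0}u,I_{\theta_0})$ to (something with exactly) the law of the statistic, uniformly in $u$ --- which is precisely the role played by $\sigma^n$ in the paper's multinomial lemma. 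Second, the assertion that ``the LAN expansion shows the conditional is asymptotically independent of $u$'' is the asymptotic sufficiency statement one has to prove, not a consequence of the expansion one may quote; as stated it risks circularity, since total-variation control of this conditional structure is essentially the theorem itself. The standard repair is to introduce the auxiliary family $dQ^n_u\propto\exp\bigl(u^t\Delta_n-\tfrac{1}{2}u^tI_{\theta_0}u\bigr)\,dP^n_{\theta_0}$, for which $\Delta_n$ is exactly sufficient, to prove $\sup_{\|u\|\leq C}\|P^n_{\theta_0+u/\sqrt n}-Q^n_u\|_1\to0$ (uniform $L^1$ convergence of likelihood ratios via contiguity and uniform integrability, which is strictly more than the in-probability statement $R_n(u)\to0$), and only then to build $S_n$ as law-matching kernel followed by the $u$-independent conditional. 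Without these two ingredients the second display of the theorem is not established.
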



\begin{remark}
{\rm
Note that the statement of the Theorem is not of Central Limit  type which typically involves convergence {\it in distribution} to a Gaussian distribution at a {\it single} point $\theta_{0}$. Local asymptotic normality states that the convergence is {\it uniform} around the point $\theta_{0}$, and moreover the variance of the limit Gaussian is fixed whereas the variance obtained from the Central Limit Theorem depends on the point $\theta$. Additionally, the randomization transforming the 
data $(X_{1},\dots, X_{n})$ into the Gaussian variable is the same for all 
$\theta=\theta_{0} + u/\sqrt{n}$ and thus does not require  \emph{a priori} the knowledge of $\theta$.
}
\end{remark}

\begin{remark}
{\rm 
Local asymptotic normality is the basis of many important results in asymptotic optimality theory and explains the asymptotic normality of certain estimators such as the maximum likelihood estimator. The quantum version introduced in the next section plays a similar role for the case of quantum statistical model. An asymptotically optimal estimation strategy based on local asymptotic normality was derived in \cite{Guta&Janssens&Kahn} for two-dimensional systems. 
}
\end{remark}

\begin{remark}\label{rem.l2.space}
{\rm
Let us define the real Hilbert space 
$L^{2}(\theta_{0})= ( \mathbb{R}^{m}, (\cdot, \cdot)_{\theta_{0}} )$ with inner product
$$
(u,v)_{\theta_{0}} =  u^{T} I_{\theta_{0}} v.
$$
By multiplying with $I_{\theta_{0}}$ we see that limit experiment can be equivalently chosen to be $N(I_{\theta_{0}}u , I_{\theta_{0}})$. 
The characteristic function of $X\sim N( I_{\theta_{0}} u , I_{\theta_{0}})$ is 
\begin{equation}\label{eq.characteristic.gaussian.shift}
F_{u}(w):= \mathbb{E}_{\theta_{0}}[ \exp( i w^{T} X )] =
\exp\left(-\frac{1}{2} \| w\|_{\theta_{0}}^{2} + i (w, u)_{\theta_{0}} \right).
\end{equation}
A similar expression will be encountered in section \ref{sec.qlan} for the case of 
quantum Gaussian shift experiment.
}
\end{remark}

\begin{ex}\label{ex.limit.diagonal} Let $P_{\mu} = (\mu_{1},\dots , \mu_{d})$ be a probability distribution with unknown parameters $(\mu_{1},\dots, \mu_{d-1})\in \mathbb{R}^{d-1}_{+}$ satisfying $\mu_{i}>0$ and $\sum_{i\leq d-1} \mu_{i}<1$.
The Fisher information at a point $\mu$ is 
\begin{equation}\label{eq.Fisher.info.classic}
I(\mu)_{ij} = \sum_{k=1}^{d-1} \mu_{k} ( \delta_{ik} \mu_{i}^{-1}\cdot \delta_{jk}\mu_{j}^{-1}) +  (1- \sum_{l=1}^{d-1}\mu_{l})^{-1}= \delta_{ij} \mu_{i}^{-1} +  (1- \sum_{l=1}^{d-1}\mu_{l})^{-1},
\end{equation}
and its inverse is 
\begin{equation}\label{eq.covariance.diagonal}
V(\mu)_{ij} :=  [I(\mu)^{-1}]_{ij} = \delta_{ij} \mu_{i}  -\mu_{i}\mu_{j}.
\end{equation}
Thus the limit experiment in this case is $\mathcal{F}:= (N(u, V(\mu)) : u \in\mathbb{R}^{d-1},  \|u\| \leq C)$. 
\end{ex}
This experiment will appear again in Theorem \ref{main}, as the classical part of the limit Gaussian shift experiment. 
%

 
\section{Local asymptotic normality in quantum statistics}
\label{sec.qlan}

In this section we present the main result of the paper. Local asymptotic normality for 
$d$-dimensional quantum systems means roughly the following: the sequence 
$\mathcal{Q}_{n}$ of experiments consisting of joint states $\rho^{\otimes n}$ of $n$ identical quantum systems prepared independently in the same state $\rho$, converges  to a limit experiment $\mathcal{R}$ which is a quantum-classical Gaussian model involving displaced thermal equilibrium states of $d(d-1)/2$ oscillators and a $(d-1)$-dimensional classical Gaussian shift model.  As in the classical case, the result has a local nature reflecting the $1/\sqrt{n}$ rate of convergence of state estimation. A neighbourhood of a fixed diagonal state 
$\rho_{0}={\rm Diag}(\mu_{1}, \dots, \mu_{d})$ is parametrised 
by (changes in the) diagonal parameters $\vec{u}\in \mathbb{R}^{d-1}$ and off-diagonal parameters $\vec{\zeta}\in \mathbb{C}^{d(d-1)/2}$. The latter can be implemented by small unitary rotations. The  limit Gaussian model has a classical part 
$N(\vec{u}, V(\mu))$ with fixed known variance  $V(\mu)$, and a quantum part 
$\otimes_{j<k} \Phi_{j,k}^{\zeta_{j,k}}$ with each  $\Phi_{j,k}^{\zeta_{j,k}}$ being a thermal equilibrium state with $\beta_{j,k} = \ln(\mu_{j}/\mu_{k})$, 
displaced in phase space by an amount proportional to $\zeta_{j,k}$. 

The reason for choosing the above parametrisation is twofold. Firstly, it unveils the important separation between 'classical' and 'quantum' parameters, and the further separation among the different off-diagonal parameters. Secondly, it is very convenient for the proof. However as we will see in ???, the limit experiment can be formulated in a `coordinate-free' way in terms of quasifree states on $CCR$-algebra. Although it is 
not needed in the main theorem, we include this formulation linking our result to the 
Quantum Central Limit Theorem. We stress again that local asymptotic normality is not a consequence of the Central Limit Theorem, indeed the latter is not even an ingredient in the proof but gives an indication as to what is the limit state when all parameters are zero.

\subsection{The $n$-tuple of $d$-dimensional systems}

As explained in section \ref{idea} for the classical case, our theory will be local in nature, so we will be interested in a (shrinking) neighbourhood of an arbitrary but fixed faithful state
\begin{equation}
\label{rho0}
\rho_0 = 
\begin{bmatrix} 
\mu_1 & 0 & \dots & 0
\\
0 & \mu_2 & \ddots& \vdots 
\\
\vdots & \ddots & \ddots & 0 \\
0 & \dots & 0 &\mu_d 
\end{bmatrix} 
\qquad \qquad \mathrm{with }\ \mu_1 > \mu_2 > \dots > \mu_d > 0,
\end{equation}
which for technical reasons is chosen to have different eigenvalues. A  sufficiently small neighbourhood of $\rho_{0}$ in the state space can be 
parametrised by $\theta := (\vec{u}, \vec{\zeta})$ as follows
\begin{equation}\label{rho.theta.tilde}
\tilde{\rho}_{\theta} 
:=
\begin{bmatrix} 
\mu_1 + u_1 & \zeta_{1,2}^* & \dots & \zeta_{1,d}^*
\\
\zeta_{1,2} & \mu_2  + u_2 & \ddots& \vdots 
\\
\vdots & \ddots & \ddots & \zeta_{d-1,d}^* \\
\zeta_{1,d} & \dots & \zeta_{d-1,d} &\mu_d - \sum_{i=1}^{d-1} u_i
\end{bmatrix},
\qquad   u_i\in \mathbb{R}, ~ \zeta_{j,k}\in \mathbb{C}.
\end{equation}
Indeed, note that if $\theta$ is small enough then $\tilde{\rho}_{\theta}$ is a density matrix. 

Let $\delta:= \inf_{1\leq i \leq d} \mu_i - \mu_{i+1}$, with $\mu_{d+1} = 0$, be the separation between the eigenvalues. In the first order in $\theta / \sqrt{\delta}$, the family $\tilde{\rho}_{\theta}$ is obtained by first perturbing the diagonal elements of $\rho_{0}$ with $\vec{u}$ and then performing a small unitary transformation with 
\begin{equation}
U(\quant)  := 
\exp \left[ i \left( 
  \sum_{1\leq j<k\leq d} \frac{{\rm Re}(\quantsub_{j,k}) T_{j,k} 
+ {\rm Im}(\quantsub_{j,k}) T_{k,j}}{\sqrt{\mu_j - \mu_k }}\right)  \right] \\
\end{equation}
where $T_{j,k}$ are generators of the Lie algebra of $SU(d)$ defined in \eqref{generators_algebra}. The advantage of the latter parametrisation is that we can fully exploit the machinery of irreducible group representations. 
For this reason, in all subsequent computations we will work with the `unitary' family 
\begin{equation}\label{rho.theta}
\rho_{\theta} 
:= U(\vec{\zeta}) 
\begin{bmatrix} 
\mu_1 + u_1 & 0 & \dots & 0
\\
0 & \mu_2  + u_2 & \ddots& \vdots 
\\
\vdots & \ddots & \ddots & 0 \\
0 & \dots & 0 & \mu_d - \sum_{i=1}^{d-1} u_i
\end{bmatrix} U^*(\vec{\zeta}),
\qquad   u_i\in \mathbb{R},~ \zeta_{j,k}\in \mathbb{C}.
\end{equation}
but we keep in mind the relationship with \eqref{rho.theta.tilde}.

As in the classical case, the parameter $\theta$  will be scaled by the factor 
$1/\sqrt{n}$ meaning that we zoom in around $\rho_{0}$ with the rate equal to the 
typical estimation rate based on $n$ samples. 
Let $\rho^{\theta,n}:= \rho_{\theta/\sqrt{n}}^{\otimes n}$ and let $\mathcal{Q}_{n}$ be the sequence of statistical experiments 
\begin{equation}
\label{Q_exp}
\mathcal{Q}_{n} := \left\{\rho^{\theta,n} :\theta\in\Theta_{n}\right\},
\end{equation}
consisting of $n$ systems, each one prepared in a state $\rho_{\theta/\sqrt{n}}$ situated in a local neighborhood of $\rho_{0}$. 
The local parameter $\theta=(\overrightarrow{u},\overrightarrow{\zeta})$ 
belongs to a neighborhood $\Theta_{n}$ of the origin of 
$\mathbb{R}^{d-1} \times \mathbb{C}^{d(d-1)/2}$ which is allowed to grow slowly 
with $n$ in a way that will be made precise later. 

One of the principal tools in our result is the representation theory of the special unitary group  $SU(d)$. Due to lack of space we shall not include any proofs and refer to \cite{Fulton,Goodman&Wallach,Fulton&Harris} for details. In particular we will be working 
with the well known tensor representation which will be analysed in increasing depth across the following sections.

The space $(\mathbb{C}^{d})^{\otimes n}$ carries two commuting group 
representations: that of  $SU(d)$ given by 
\begin{equation}\label{eq.unitary.rep.tensor}
\pi_{n}(U): |\psi_{1}\rangle \otimes \dots  \otimes |\psi_{n}\rangle
\mapsto U|\psi_{1}\rangle \otimes \dots  \otimes U|\psi_{n}\rangle, \qquad U\in SU(d),
\end{equation}
and that of the permutation group $S(n)$ given by 
\begin{equation}\label{eq.unitary.rep.permutations}
\tilde{\pi}_{d}(\tau):  |\psi_{1} \rangle \otimes \dots \otimes |\psi_{n}\rangle \mapsto 
|\psi_{\tau^{-1}(1)}\rangle \otimes\dots \otimes |\psi_{\tau^{-1}(n)}\rangle, 
\qquad \tau \in S(n).
\end{equation}
Since the two group representations commute with each other, 
the representation space decomposes into a direct sum of tensor products 
of irreducible representations. It turns out that the irreducible representations 
of $SU(d)$ and $S(n) $ are indexed by {\it Young diagrams} with $d$ rows for the former 
and $n$ boxes for the latter. A Young diagram is defined by a tuple of ordered integers 
$\lambda = (\lambda_{1}  \geq \lambda_{2} \dots\geq  \lambda_{k})$ with $\lambda_{i}$ the number of boxes on row $i$ (see Figure \ref{fig.young.diagram}).
\begin{figure}[h!]
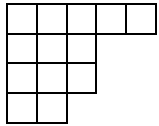

\begin{displaymath}
\yng(5,3,3,2) 
\end{displaymath}
\caption{Young diagram with $\lambda=(5,3,3,2)$.}
\label{fig.young.diagram}
\end{figure}
As we shall see later this pictorial representation will be very useful in understanding 
the structure of the irreducible representations $(\mathcal{H}_{\lambda}, \pi_{\lambda})$ of $SU(d)$. 

The following theorem called {\it Schur-Weyl duality} shows that the only tensor 
products appearing in the above mentioned direct sum  are those of irreducible representations indexed by the same $\lambda$, and in particular  the algebras  
generated by $\pi_{n}(u)$ and  respectively $\tilde{\pi}_{d}(\tau)$ are each other's commutant!

\begin{thm}\label{th.sud.sn}
Let $\pi_{n}$ and $\tilde{\pi}_{d}$ be the representations of $SU(d)$ and respectively 
$S(n)$ on $(\mathbb{C}^{d})^{\otimes n}$. Then the representation space decomposes into a direct sum of tensor products of irreducible representations of $SU(d)$ and 
$S(n)$ indexed by Young diagrams with $d$ lines and $n$ boxes:
\begin{eqnarray*}
(\mathbb{C}^{d})^{\otimes n} &\cong& 
\bigoplus_{\lambda}
\mathcal{H}_{\lambda}\otimes \mathcal{K}_{\lambda} ,\\
\pi_{n} &\equiv& 
\bigoplus_{\lambda} 
\pi_{\lambda} \otimes \mathbf{1}_{\mathcal{K}_{\lambda}},\\
\tilde{\pi}_{d} &\equiv& 
\bigoplus_{\lambda} 
\mathbf{1}_{\mathcal{H}_{\lambda}} \otimes \tilde{\pi}_{\lambda}.
\end{eqnarray*}
\end{thm}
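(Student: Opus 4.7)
The plan is to establish Schur--Weyl duality via the double commutant theorem and then identify the labels. Since this is a classical result, I would sketch it rather than redo all the standard pieces.

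First, I would verify that the two actions commute: for any $U \in SU(d)$ and $\tau \in S(n)$, both $\pi_n(U)$ and $\tilde{\pi}_d(\tau)$ act on simple tensors by operations that manifestly commute (apply $U$ to each factor versus permute the factors). Hence the algebras $\mathcal{A} := \mathrm{span}\,\pi_n(SU(d))$ and $\mathcal{B} := \mathrm{span}\,\tilde{\pi}_d(S(n))$ in $\mathrm{End}((\mathbb{C}^d)^{\otimes n})$ satisfy $\mathcal{A} \subseteq \mathcal{B}'$ and $\mathcal{B} \subseteq \mathcal{A}'$.

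Next, the heart of the argument: show $\mathcal{A} = \mathcal{B}'$. The commutant $\mathcal{B}'$ consists of endomorphisms of $(\mathbb{C}^d)^{\otimes n}$ that are invariant under the $S(n)$-conjugation on $\mathrm{End}(\mathbb{C}^d)^{\otimes n}$, i.e.\ the symmetric tensors in $\mathrm{End}(\mathbb{C}^d)^{\otimes n}$. By a standard polarization argument, this symmetric subspace is linearly spanned by $\{A^{\otimes n} : A \in \mathrm{End}(\mathbb{C}^d)\}$. Since $SU(d)$ is Zariski-dense in $GL(d,\mathbb{C})$ after a scalar, and polynomial identities in $A$ that vanish on $SU(d)$ vanish on all of $M_d(\mathbb{C})$, one obtains $\mathcal{A} = \mathcal{B}'$. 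Taking commutants once more and using the bicommutant theorem for the $*$-algebra $\mathcal{B}$ (a group algebra, hence semisimple) gives $\mathcal{B} = \mathcal{A}'$, so the two algebras are each other's commutants.

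Now I would invoke the general structure theorem for mutually commuting semisimple $*$-algebras on a finite-dimensional Hilbert space: there is a canonical decomposition
\[
(\mathbb{C}^d)^{\otimes n} \cong \bigoplus_{\alpha \in \hat{\mathcal{A}}} \mathcal{H}_\alpha \otimes \mathcal{K}_\alpha ,
\]
where $\alpha$ runs over the isomorphism classes of irreducibles of $\mathcal{A}$ appearing in the representation, each $\mathcal{H}_\alpha$ is the corresponding irreducible $\mathcal{A}$-module, and $\mathcal{K}_\alpha = \mathrm{Hom}_{\mathcal{A}}(\mathcal{H}_\alpha, (\mathbb{C}^d)^{\otimes n})$ is automatically an irreducible $\mathcal{B}$-module; moreover the correspondence $\alpha \leftrightarrow \mathcal{K}_\alpha$ is a bijection between the irreducibles appearing on each side.

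Finally, I would identify the common label set with the Young diagrams of size $n$ with at most $d$ rows. The irreducibles of $S(n)$ are the Specht modules, canonically indexed by partitions $\lambda \vdash n$; the polynomial irreducibles of $SU(d)$ are indexed by dominant weights, equivalently by partitions with at most $d$ parts. One then checks that exactly those $\lambda$ with $\ell(\lambda) \leq d$ and $|\lambda| = n$ appear: partitions with more than $d$ rows are excluded because the Young symmetriser $c_\lambda$ annihilates $(\mathbb{C}^d)^{\otimes n}$ (any antisymmetrisation over $d+1$ indices in $\mathbb{C}^d$ is zero), and the remaining $c_\lambda (\mathbb{C}^d)^{\otimes n}$ is known to realize the irreducible $\mathcal{H}_\lambda$ of $SU(d)$ (this is essentially the Weyl construction). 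The main obstacle in the whole argument is the polarization/density step showing $\mathcal{A} = \mathcal{B}'$; everything else is then book-keeping plus general theory.
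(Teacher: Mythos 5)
The paper does not actually prove this statement: Theorem \ref{th.sud.sn} is quoted as the classical Schur--Weyl duality, with the proof deferred to the cited references (Fulton; Goodman--Wallach; Fulton--Harris), so there is no in-paper argument to compare yours against. Your sketch is the standard double-commutant proof found in those references, and it is essentially correct: the commuting of the two actions, the identification of the $S(n)$-commutant with the symmetric part of $\mathrm{End}(\mathbb{C}^d)^{\otimes n}$ via polarization, the bicommutant step for the semisimple group algebra, the general decomposition theorem for a pair of mutually commutant semisimple $*$-algebras, and the exclusion of partitions with more than $d$ rows via vanishing of the antisymmetrisation are all the right ingredients. One step you should phrase more carefully is the density argument: $SU(d)$ is Zariski-dense in $SL(d,\mathbb{C})$, not in $GL(d,\mathbb{C})$; one then uses the homogeneity of $A\mapsto A^{\otimes n}$ together with $\mathbb{C}^{\times}\cdot SL(d,\mathbb{C})=GL(d,\mathbb{C})$, and finally the Zariski density of $GL(d,\mathbb{C})$ in $M_d(\mathbb{C})$, to conclude that a functional vanishing on $\{U^{\otimes n}:U\in SU(d)\}$ vanishes on all $A^{\otimes n}$. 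A very minor remark: labelling by Young diagrams with exactly $n$ boxes is unambiguous here even though $SU(d)$-irreps only depend on $\lambda$ modulo columns of length $d$, since the $\lambda$ occurring are distinct partitions of the fixed number $n$; it is worth saying this explicitly, as the paper uses the $n$-box diagram $\lambda$ as the common index throughout.
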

In particular $\rho^{\theta,n}= \rho_{\theta/\sqrt{n}}^{\otimes n} $ and $\tilde{\pi}_d(\tau)$ commute for all $\tau$.
Hence we have the block diagonal form for the joint states
\begin{equation}
\label{prerhon}
\rho^{\theta, n} = \bigoplus_{\lambda} 
p^{\theta,n}_{\lambda} \rho^{\theta, n}_{\lambda } \otimes 
\frac{\mathbf{1}_{\mathcal{K}_{\lambda}}}{M_{n}(\lambda) },
\end{equation}
where $M_{n}(\lambda) $ is the dimension of $\mathcal{K}_{\lambda}$, 
$p^{\theta,n}_{\lambda}$ is a probability distribution over the Young diagrams, and 
$\rho^{\theta, n}_{\lambda }$ is a density matrix on $\mathcal{H}_{\lambda}$. 
From \eqref{rho.theta} and the Schur-Weyl duality, we get the expression of the block states  
\begin{equation}\label{eq.block.states}
\rho^{\theta,n}_{\lambda} =U_{\lambda}(\vec{\zeta}/\sqrt{n})\, \rho^{u, 0,n}_{\lambda} \,U_{\lambda}(\vec{\zeta}/\sqrt{n})^{*}.
\end{equation}

We interpret the decomposition \eqref{prerhon} as follows: by doing a `which block' measurement we obtain information about $\theta$ through the probability density 
$p^{\theta,n}_{\lambda}$. In fact it is easy to see that $p^{\theta,n}_{\lambda}$  does not depend on $\vec{\zeta}$, so it only gives information about the diagonal parameters 
$\vec{u}$. Later on we shall see that the model $p^{\theta,n}$ has the same limit as the classical multinomial model described in  Example \ref{ex.limit.diagonal}. Once this information has been obtained, one still possesses a conditional quantum state $\rho^{\theta,n}_{\lambda}$. It turns out that this state carries information about the rotation parameters $\vec{\zeta}$, and we will show that the statistical model described by the conditional state converges to a `purely quantum' Gaussian shift experiment.

\subsection{Displaced thermal equilibrium states of a harmonic oscillator}
\label{sec.displaced.thermal.eq.states}
The ground state of a quantum harmonic oscillator or the laser state of a monochromatic light pulse are well known examples of quantum Gaussian states. Both physical systems are described by the same algebra of observables generated 
by the canonical `position' and `momentum' observables ${\bf Q}$ and ${\bf P}$  satisfying the Heisenberg commutation relation
\begin{equation}\label{eq.heisenber.comm.}
{\bf Q}{\bf P}- {\bf P}{\bf Q} = i \mathbf{1}.
\end{equation}

These observables can be represented on the Hilbert space $L^{2}(\mathbb{R})$ as 
\begin{equation}
({\bf Q}\psi)(x) = x\psi(x),\qquad
({\bf P}\psi)(x) = -i \frac{d\psi}{dx}(x), \qquad \psi\in L^{2}(\mathbb{R}).
\end{equation} 
The space $L^{2}(\mathbb{R})$ has a special orthonormal basis 
$\{  \left\lvert  0 \right\rangle,  \left\lvert  1 \right\rangle, \dots\}$ with the vector 
$ \left\lvert  m \right\rangle$ given by 
$$
H_{m}(x) e^{-x^{2}/2} / (\sqrt{\pi}2^{m}m!)^{1/2},
$$ 
where $H_{m}$ are the Hermite polynomials. These are the eigenvectors of the number operator ${\bf N}:= \frac{1}{2}({\bf Q}^{2}+{\bf P}^{2}-\mathbf{1})$ counting the number of `excitations' of the oscillator or the number of photons in the case of the light beam, such that ${\bf N}  \left\lvert  m \right\rangle=m  \left\lvert  m \right\rangle$. 

The creation and annihilation operators 
$$
{\bf a}^{*}= ({\bf Q}- i{\bf P} )/\sqrt{2}, \qquad {\bf a}=  ({\bf Q}+ i{\bf P} )/\sqrt{2}
$$ 
satisfy $ [{\bf a}, {\bf a}^{*}]=\mathbf{1}$ and act as `ladder' operators on the basis 
$\psi_{k}$: 
\begin{eqnarray*}
{\bf a} \left\lvert  m \right\rangle= \sqrt{m}  \left\lvert  m-1 \right\rangle,\qquad
{\bf a}^{*} \left\lvert  m \right\rangle = \sqrt{m+1}  \left\lvert  m +1\right\rangle.
\end{eqnarray*}
In particular the following identity holds: ${\bf N}={\bf a}^{*}{\bf a}$.

It can be easily checked that both ${\bf Q}$ and ${\bf P}$ have Gaussian distribution with respect to the vacuum state $ \left\lvert  0 \right\rangle$. In fact they are `jointly Gaussian' 
$$
\left\langle 0 \right\vert \exp(iu{\bf Q} + iv{\bf P}) | \left\vert 0\right\rangle= 
\exp\left(-\frac{1}{4}(u^{2}+v^{2})\right).
$$
We will often use the complex form of the unitary {\it Weyl operators} 
$$
W(z):= \exp( z{\bf a}^{*}- \bar{z} {\bf a})= \exp(ip_{0}{\bf Q} - iq_{0} {\bf P}) , \qquad 
z= (q_{0}+i p_{0})/\sqrt{2}\in \mathbb{C}, 
$$
which satisfy the Weyl relations
$$
W(z)^{*}W(z^{\prime})W(z) = \exp\left( 2i {\rm Im}(\bar{z}^{\prime}z) \right)W(z^{\prime}). 
$$ 
The coherent (vector) states $\left\vert z\right\rangle $ are obtained by displacing the vacuum state with Weyl operators 
\begin{equation}
\left\vert z\right\rangle := W(z)\left\vert 0\right\rangle = \exp(-|z|^{2}/2) \sum_{m=0}^{\infty} \frac{z^{m}}{\sqrt{m!}} 
\left\vert m\right\rangle.
\end{equation}
They are Gaussian states with the same variance as the vacuum, and means  
$\left\langle z\right\vert {\bf Q} \left\vert z\right\rangle =  \sqrt{2}{\rm Re}(z)$ and 
$\left\langle z\right\vert {\bf P} \left\vert z\right\rangle =  \sqrt{2}{\rm Im}(z)$:
$$
\left\langle z\right\vert W(z^{\prime}) \left\vert z\right\rangle = 
\exp\left(-\frac{1}{2}  |z-z^{\prime}|^{2} +2i {\rm Im}(\bar{z}^{\prime}z) \right).
$$

Besides, coherent states, an important role in our discussion will be played by the thermal equilibrium states. For every $\beta>0$ we define the Gaussian state
\begin{equation}\label{eq.phi.beta}
\phi_{\beta}(W(z))= \exp\left(-\frac{|z|^{2}}{2\tanh (\beta/2)}\right).
\end{equation}
Its density matrix consisting of a mixture of $k$-photon states with geometrical weights
\begin{equation}\label{eq.thermal.state}
\Phi_{\beta} =  (1-e^{-\beta}) \sum_{k=0}^{\infty} e^{-k \beta } \left\vert k\right\rangle\left \langle k \right \vert. 
\end{equation}
and can also be obtained by 'smearing' the coherent states with a Gaussian kernel:
\begin{equation}\label{phi_integrale_gaussienne}
\Phi_{\beta}= \frac{e^{\beta}-1}{\pi} \int_{\mathbb{C}} \exp\left(-(e^{\beta}-1) |z|^{2}\right) 
\left\vert z\right\rangle \left\langle z\right\vert dz .
\end{equation}
The thermal equilibrium states can be shifted in `phase space' by means 
of  displacement operations $D^{z}$ which act by adjoining with unitaries 
$ W(z)$, i.e. 
$$
D^{z}(\cdot):= {\rm Ad}[W(z)](\cdot) = W(z)^{*}\cdot W(z). 
$$
The result is a Gaussian state $\phi^{z}_{\beta}$ with the same variance as $\phi_{\beta}$ and the same means as $\vert z\rangle$:
\begin{equation}\label{eq.displacement.Fock}
\phi^{z}_{\beta} (W(z^{\prime})) :=
\exp\left(- \frac{|z|^{2}}{2\tanh (\beta/2)}+ 2i{\rm Im}(\bar{z}^{\prime}z )\right) ,
\qquad
\Phi_{\beta}^{z} :=  D^{z} (\Phi_{\beta}) :=W(z)^{*} \Phi_{\beta} W(z).
\end{equation}
\subsection{The multimode Fock space and the limit Gaussian shift experiment}
\label{subsec.multimode.fock}
We now consider $d(d-1)/2$ commuting harmonic oscillators, with a joint state consisting of independent Gaussian states. Let us define the {\it multimode Fock space}
$$
\mathcal{F} := \bigotimes_{ 1\leq j<k\leq d}  L^{2}(\mathbb{R}),
$$
in which we identify the number basis
\begin{equation}\label{eq.def.m}
\left\vert {\bf m} \right\rangle =  \bigotimes_{j<k} \left\vert m_{j,k} \right\rangle, 
\qquad {\bf m}= \left\{ m_{j,k} \in \mathbb{N}: j<k \right\}.
\end{equation}
For each of the oscillators we define the thermal equilibrium state
\begin{equation}\label{eq.phijk}
\Phi_{j,k} := \Phi_{\beta_{j,k}} , \qquad \beta_{j,k} = \ln(\mu_{j}/\mu_{k}),
\end{equation}
where $\{\mu_{1},\dots ,\mu_{d}\}$ are the eigenvalues of the density matrix 
$\rho_{0}$ (cf. \eqref{rho0}). We now use the Weyl operators to displace these states by an amount proportional to the off-diagonal elements $\zeta_{j,k}$ of $\rho^{\theta}$ 
(cf. \eqref{rho.theta.tilde} and \eqref{rho.theta})
$$
\Phi_{j,k}^{\zeta_{j,k}}:= 
W\left(\frac{\zeta_{j,k}}{2\sqrt{\mu_{j} - \mu_{k} }}\right)^{*} \,
\Phi_{j,k} \,
W\left(\frac{\zeta_{j,k}}{2\sqrt{\mu_{j} - \mu_{k} }}\right).
$$ 
We now define the joint state $\phi^{\vec{\zeta}}$ of the oscillators 
with density matrix 
\begin{equation}\label{eq.phi.vec.zeta.}
 \Phi^{\vec{\zeta}}= \bigotimes_{j<k} \Phi_{j,k}^{\zeta_{j,k}} \in \mathcal{T}_{1}(\mathcal{F}),
\end{equation}
where $\mathcal{T}_{1}(\mathcal{F})$ is the space of trace-class operators on 
$\mathcal{F}$.

The states $\Phi^{\vec{\zeta}}$ form the quantum part of the limit Gaussian experiment. The classical part is identical to the $(d-1)$-dimensional Gaussian shift model 
$N(\vec{u} , V(\mu))$ of Example \ref{ex.limit.diagonal}, where 
$\mu=\{\mu_{1}, \dots, \mu_{d}\}$.
\begin{defin}
On the algebra $L^{\infty}(\mathbb{R}^{d-1}) 
\otimes \mathcal{B}(\mathcal{F})$ we define normal state $\phi^{\theta}$ 
with density 
\begin{equation}\label{eq.phi.theta}
\Phi^{\theta} := \mathcal{N}(\vec{u} , V(\mu))\otimes  \Phi^{\vec{\zeta}} \in L^{1}(\mathbb{R}^{d-1}) \otimes \mathcal{T}_{1}(\mathcal{F}),
\end{equation}
where $\mathcal{N}(\vec{u} , V(\mu))$ is the Gaussian density of Example \ref{ex.limit.diagonal}. The quantum-classical Gaussian experiment $\mathcal{R}$ is 
defined by 
$$
\mathcal{R} = \{ \Phi^{\theta} : 
\theta =(\vec{u}, \vec{\zeta})\in \mathbb{R}^{d-1}\times \mathbb{C}^{d(d-1)/2}\}.
$$
\end{defin}

\subsection{The main theorem}

We are now ready to formulate the main result of the paper. In view of subsequent application to optimal state estimation, it is essential to consider (slowly) growing domains of the local parameters. For given $\beta,\gamma>0$ we define
\[
\Paramglob_{n,\beta , \gamma } =  \left\{ (\quant, \clas) :  \lVert \quant \rVert_{\infty} \leq n^{\beta } ,  \left\lVert \clas \right\rVert_{\infty} \leq n^{\gamma }\right\} .
\]

Recall that $\delta $ is the separation between the eigenvalues of $\rho_0$ given by equation \eqref{rho0}. Though we use parametrisation \eqref{rho.theta} for density matrices $\rho_{\glob}$, recall that in the first order this is approximated by $\tilde{\rho}_{\theta}$ defined in \eqref{rho.theta.tilde}. In fact it can be shown that the same theorem holds for the latter parametrisation.

\begin{thm}
\label{main}
Let  $\delta > 0$, let $\beta < 1 / 9$ and $\gamma < 1/4$.
Let the quantum experiments 
\begin{equation*}
\mathcal{Q} _n  =  \left\{ \rho^{\glob ,n} : \glob\in \Paramglob_{n, \beta , \gamma } \right\} , \qquad
\mathcal{R}_{n}  = \left\{ \Phi^{\glob} : \glob \in  \Paramglob_{n, \beta , \gamma } \right\},
\end{equation*}
where $\rho^{\glob ,n } = \rho_{\glob / \sqrt{n}}^{\otimes n}$ is the state on 
$M\left((\mathbb{C} ^{d})^{\otimes n}\right)$ given by equation \eqref{rho.theta}, and 
$\Phi^{\glob}$ is given by 
\eqref{eq.phi.theta}.

Then, 
there exist channels (completely positive, normalised maps) 
\begin{eqnarray}
T_n &:& 
M(\mathbb{C} ^{d})^{\otimes n} \to L^{1}(\mathbb{R}^{d-1}) \otimes \mathcal{T}_{1}(\mathcal{F})\\
S_n &:& L^{1}(\mathbb{R}^{d-1}) \otimes \mathcal{T}_{1}(\mathcal{F}) \to M(\mathbb{C} ^{d})^{\otimes n}
\end{eqnarray}
with  $\mathcal{T}_{1}(\mathcal{F})$ is the space of trace-class operators on $\mathcal{F}$, such that 
\begin{align}
\label{Tn}
\sup_{\glob \in \Paramglob_{n,\beta , \gamma } }  \left\lVert \Phi^{\glob} - T_n(\rho^{\glob, n}) \right\rVert_1  = O (n^{- \epsilon } / \delta) , \\ 
\label{Sn}
\sup_{\glob \in \Paramglob_{n,\beta , \gamma } }  \left\lVert S_n(\Phi^{\glob}) - \rho^{\glob, n} \right\rVert_1  =O (n^{- \epsilon } / \delta) ,
\end{align}
where $\epsilon > 0$ depends only on $\delta$, $\beta$ and $\gamma$. In particular we have
$$
\lim_{n\to \infty}\Delta(\mathcal{Q} _n, \mathcal{R} _n) =0,
$$ 
where $\Delta(\cdot, \cdot)$ is the Le Cam distance defined in 
\eqref{eq.lecam.distance}. 
\end{thm}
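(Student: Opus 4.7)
My plan is to build $T_n$ and $S_n$ directly out of the Schur--Weyl decomposition \eqref{prerhon} together with the isometries $V_\lambda : \mathcal{H}_\lambda \to \mathcal{F}$ advertised in the introduction. For $T_n$ I first apply the block conditional expectation implemented by averaging over the $S(n)$ action (a channel, since this algebra is the commutant of $\pi_n(SU(d))$ by Theorem \ref{th.sud.sn}), then output the Young-diagram label $\lambda$ convolved with a fixed Gaussian kernel, producing a density on $\mathbb{R}^{d-1}$, tensored with $V_\lambda \rho^{\theta,n}_\lambda V_\lambda^*$; $S_n$ reverses the construction using $V_\lambda^*$ on a discretisation of the Gaussian input. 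A triangle inequality then breaks \eqref{Tn} into three pieces: a classical error comparing the smoothed distribution of $\lambda$ with $N(\vec{u},V(\mu))$, a conditional quantum error $\sum_\lambda p^{\theta,n}_\lambda \|V_\lambda \rho^{\theta,n}_\lambda V_\lambda^* - \Phi^{\vec{\zeta}}\|_1$ restricted to typical $\lambda$, and a negligible remainder from atypical $\lambda$.

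For the classical piece, the key observation from \eqref{eq.block.states} is that $p^{\theta,n}_\lambda$ depends only on $\vec{u}$. I would then apply classical LAN (Theorem \ref{th.lan}) to the multinomial eigenvalue model of Example \ref{ex.limit.diagonal}, whose Fisher information is $V(\mu)^{-1}$, using the classical fact that $p^{\theta,n}_\lambda$ is total-variation close to the rescaled multinomial on Young-diagram shapes, a point already exploited in \cite{Keyl&Werner}. The same estimate pins $\lambda/n$ within an $n^{-1/2}\log n$ neighbourhood of $(\mu_1, \ldots, \mu_d)$ except on a tail of mass $O(n^{-\infty})$, supplying the notion of ``typical $\lambda$'' used in the sequel.

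For the quantum conditional piece, on each typical $\lambda$ I would further split the error via the triangle inequality, matching the three lemmas promised earlier. At $\theta=0$ the state $\rho^{0,n}_\lambda$ is diagonal in $\{|{\bf m}, \lambda\rangle\}$ and Lemma \ref{len0} gives uniform convergence of its diagonal entries to those of $\bigotimes_{j<k}\Phi_{j,k}$ for $|{\bf m}| \leq n^\eta$; together with the quasi-orthogonality estimate $|\langle {\bf m}, \lambda | {\bf l}, \lambda\rangle| = O(n^{(9\eta-2)|{\bf m} - {\bf l}|/12})$ for $\eta < 2/9$ this yields $\|V_\lambda \rho^{0,n}_\lambda V_\lambda^* - \Phi^{\vec{0}}\|_1 \to 0$. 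Lemmas \ref{ldisplacement} and \ref{lgrouplimit} then identify $\mathrm{Ad}[U_\lambda(\vec{\zeta}/\sqrt{n})]$ under the embedding $V_\lambda$ with the multimode Weyl displacement $\bigotimes_{j<k}\mathrm{Ad}[W(\zeta_{j,k}/(2\sqrt{\mu_j-\mu_k}))]$; conjugating both sides upgrades the $\theta=0$ estimate to general $\theta \in \Theta_{n,\beta,\gamma}$. The reverse bound \eqref{Sn} follows by the same estimates applied to $V_\lambda^*$, using that $\Phi^{\vec{\zeta}}$ is concentrated (for any $\eta>0$) on number states with $|{\bf m}|\leq n^\eta$.

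The main obstacle, and the origin of the restrictive exponents $\beta<1/9$ and $\gamma<1/4$, is the simultaneous tuning of three scales: the radius $n^\beta$ allowed for the displacements $\vec{\zeta}$, the $n^\gamma$ scale of diagonal perturbations, and the quasi-orthogonality window $|{\bf m}| \leq n^\eta$ with $\eta<2/9$. The Weyl shift $W(\zeta_{j,k}/(2\sqrt{\mu_j-\mu_k}))$ moves the thermal bulk onto number states of order $|\zeta|^2$, which must stay inside the quasi-orthogonality regime even after accounting for the $\sqrt{n}$ fluctuations of $\lambda$; optimising this constraint fixes $\beta<1/9$. The heaviest technical step will therefore be the combinatorial analysis behind the quasi-orthogonality bound and Lemma \ref{len0}, which rests on fine asymptotics of matrix elements of $U_\lambda(\vec{\zeta}/\sqrt{n})$ in the semistandard-tableau basis and on sharp counting along the typical rows $\lambda_i \approx n\mu_i$.
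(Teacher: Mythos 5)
Your proposal follows essentially the same route as the paper: the channels are built blockwise from the Schur--Weyl decomposition with a smoothing kernel on the Young-diagram label and the quasi-orthogonality isometries $V_{\lambda}$, the error is split into a classical piece, a typical-block quantum piece (handled by Lemma \ref{len0} together with Lemmas \ref{ldisplacement} and \ref{lgrouplimit} via the coherent-state integral representation of the thermal state), and an atypical-block remainder, and $S_n$ is the near-inverse built from $V_{\lambda}^*$ and a discretising kernel. The only cosmetic differences are that the paper smooths $p^{\theta,n}_{\lambda}$ with a uniform box kernel rather than a Gaussian and proves the classical convergence with explicit polynomial rates uniformly over $\|\vec{u}\|\leq n^{\gamma}$ (Lemma \ref{lclassical}) instead of invoking Theorem \ref{th.lan}, which as stated only covers fixed bounded neighbourhoods without rates.
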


In other words, we get polynomial speed of convergence of the approximation, which is enough to build two-step evaluation strategies in the finite experiments globally asymptotically equivalent to strategies in the limit experiment \cite{Guta&Kahn4}. 
\subsection{The relation between LAN and CLT}
One way to think of local asymptotic normality is the following: we would like to 
understand the asymptotic behaviour of the collective (fluctuation) observables 
\eqref{eq.fluctuation.obs} with respect to a {\it whole neighborhood} of the state $\rho$, how the limit distribution changes as we change the reference state $\rho^{\otimes n}$. 

The quantum Central Limit Theorem describes the asymptotic behaviour of the 
same observables with respect to a {\it fixed} state, and is one of the ingredients in the proof of a different version of  LAN based on {\it weak convergence} 
\cite{Guta&Jencova}. However, in the case of strong convergence, which is the object of this paper, CLT does not play any role since we are interested in convergence in norm rather than in distribution, and uniformly over a range of parameters. 

The purpose of the section is to derive a `coordinate free' version of the limit Gaussian experiment using the Central Limit Theorem and the notion of symmetric logarithmic derivative. The reader interested in the proof of main theorem can skip the following pages and  continue with section~\ref{prepreuve}.

\subsubsection{Quantum Central Limit Theorem}

Let $\rho$ be a fixed faithful state on $M(\mathbb{C}^{d})$. To $\rho$ we associate an algebra of canonical commutation relations carrying a Gaussian state $\phi$. The Quantum Central Limit Theorem \cite{Petz} says that $\phi$ is the limit distribution of certain multi-particle observables with respect to of product states $\rho^{\otimes n}$.

Let 
$$
(A, B)_{\rho} := \mathrm{Tr}(\rho\, A\circ B), \qquad {\rm where~} A\circ B := \frac{AB+ BA}{2},
$$
be a positive inner product on the real linear space of {\it selfadjoint operators} 
$M(\mathbb{C}^{d})_{sa}$. We define the Hilbert space with inner product $(\cdot, \cdot)_{\rho}$.
$$
L^{2}(\rho)= \{ A\in M(\mathbb{C}^{d})_{sa} : {\rm Tr}(A\rho)=0\}. 
$$

Let $\sigma$ be the {\it symplectic form} on $L^{2}(\rho)$
$$
\sigma(A,B) = \frac{i}{2}\mathrm{Tr}(\rho\, [A, B]).
$$
The $C^{*}$-algebra of canonical commutation relations 
$CCR(L^{2}(\rho), \sigma)$ is generated by the Weyl operators $W(A)$ satisfying the relations
$$
W(A)^{*} = W(-A), \qquad W(A)W(B) = W(A+B)\exp(-i\sigma(A,B)), \quad A,B \in 
L^{2}(\rho).
$$
On $CCR(L^{2}(\rho), \sigma)$ we define the Gaussian (quasifree) state 
\begin{equation}\label{eq.quasifree}
\phi (W(A)) := \exp\left(-\frac{1}{2} \| A\|_{\rho}^{2}\right), \qquad \| A\|_{\rho}^{2}= (A,A)_{\rho}.
\end{equation}
The state $\phi$ is regular, i.e. there exists a representation $(\pi, \mathcal{H})$ 
of the algebra $CCR(L^{2}(\rho), \sigma)$ such that the one parameter family 
$t\mapsto \pi(W(tA))$ is weakly continuous and $\phi$ is a normal state on the von Neumann algebra generated by $\pi(CCR(L^{2}(\rho), \sigma))$. 
This means that there exist selfadjoint 'field operators' $B(A)$ such that 
$\pi(W(tA)) = \exp(it B(A))$, and there exists a density matrix 
$\Phi_{\pi} \in \mathcal{T}_{1}(\mathcal{H})$ such that
$$
\phi (W(A)) = \mathrm{Tr} \left( \exp(iB(A)) \Phi_{\pi} \right) , \qquad A\in L^{2}(\rho).
$$
The representation $(\pi,\mathcal{H})$ can be obtained through the GNS construction, or by `diagonalising' the CCR algebra as we will see in  a moment. From \eqref{eq.quasifree} we deduce that the distribution of $B(A)$ with respect to $\phi$ is a centred normal distribution with variance $\|A\|_{\rho}^{2}$.  From the Weyl relations it follows that the fields satisfy the following canonical commutation relations
$$  
[B(A), B(C)] = 2i\sigma(A, C)\mathbf{1}, \qquad A,C \in L^{2}(\rho).
$$ 
 
 Consider now the tensor product  $\bigotimes_{k=1}^{n}M(\mathbb{C}^{d})$ which is generated by elements of the form
\begin{equation}\label{eq.xk}
A^{(k)} = \mathbf{1}\otimes \dots \otimes A \otimes \dots \otimes \mathbf{1},
\end{equation}
with $A$ acting on the $k$-th position of the tensor product. We are interested in the asymptotics as $n\to\infty$ of the joint distribution under the state $\rho^{\otimes n}$, of `fluctuation' elements of the form
\begin{equation}\label{eq.fluctuation.obs}
F_{n}(A) :=\frac{1}{\sqrt{n}} \sum_{k=1}^{n} A^{(k)}.
\end{equation}
\begin{thm}{\bf [Quantum CLT]}\label{th.clt}
Let $A_{1}, \dots , A_{s}\in L^{2}(\rho)$. Then the following holds
\begin{eqnarray*}
&&
\lim_{n\to\infty} {\rm Tr}
\left(\rho^{\otimes n} \left(\prod_{l=1}^{s} F_{n}(A_{l}) \right)\right) =
\phi \left( \prod_{l=1}^{s}\left( B(A_{l}) \right)\right),\\
&&
\lim_{n\to\infty} {\rm Tr} 
\left( \rho^{\otimes n} \left( \prod_{l=1}^{s} \exp( iF_{n}(A_{l}) ) \right)\right) =
\phi\left( \prod_{l=1}^{s} W( A_{l} )  \right).
\end{eqnarray*}
\end{thm}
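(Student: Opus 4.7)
The plan is to prove both identities together by exploiting the tensor-product structure of $\rho^{\otimes n}$. Since operators on distinct sites commute, for each fixed $l$ one has $e^{iF_n(A_l)} = \bigotimes_{k=1}^n e^{iA_l/\sqrt{n}}$, and interchanging factors across different $k$'s once more yields
\begin{equation*}
\prod_{l=1}^{s} e^{iF_n(A_l)} = \bigotimes_{k=1}^{n}\Bigl(\prod_{l=1}^{s} e^{iA_l/\sqrt{n}}\Bigr),
\end{equation*}
so that
\begin{equation*}
\Tr\Bigl(\rho^{\otimes n}\prod_{l=1}^s e^{iF_n(A_l)}\Bigr) = t_n^{\,n}, \qquad t_n := \Tr\Bigl(\rho\,\prod_{l=1}^s e^{iA_l/\sqrt{n}}\Bigr).
\end{equation*}
This reduction underlies everything that follows, since the large-$n$ behaviour of an $n$-fold power is governed by the $1/n$ asymptotics of $t_n-1$.

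For the second (Weyl) identity I would Taylor-expand each factor $e^{iA_l/\sqrt{n}}$ up to order $1/n$, multiply out the $s$ factors in order, and apply $\Tr(\rho\,\cdot\,)$. The $1/\sqrt{n}$ term vanishes by the centering condition $\Tr(\rho A_l)=0$ built into $A_l\in L^2(\rho)$, while the $1/n$ term involves only $\Tr(\rho A_lA_m)$ for $l\leq m$. Decomposing $A_lA_m$ into symmetric and antisymmetric parts and using the paper's definitions of $(\cdot,\cdot)_\rho$ and $\sigma$ gives $\Tr(\rho A_lA_m) = (A_l,A_m)_\rho - i\sigma(A_l,A_m)$; combined with $\sum_l\|A_l\|_\rho^2 + 2\sum_{l<m}(A_l,A_m)_\rho = \|\sum_l A_l\|_\rho^2$ this yields
\begin{equation*}
t_n = 1 + \frac{1}{n}\Bigl[-\tfrac{1}{2}\Bigl\|\sum_l A_l\Bigr\|_\rho^2 + i\sum_{l<m}\sigma(A_l,A_m)\Bigr] + O(n^{-3/2}).
\end{equation*}
Raising to the $n$-th power and passing to the limit produces a Gaussian factor times a phase in the symplectic form; the quasifree side reproduces the same expression after iterating the Weyl relation $W(A)W(B) = W(A+B)e^{-i\sigma(A,B)}$ to collapse $\prod_l W(A_l)$ into a single Weyl operator $W(\sum_l A_l)$ times a total symplectic phase and then invoking the defining identity $\phi(W(C)) = e^{-\frac{1}{2}\|C\|_\rho^2}$.

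For the first (moment) identity I would expand
\begin{equation*}
\prod_{l=1}^{s} F_n(A_l) = n^{-s/2}\sum_{k_1,\ldots,k_s=1}^{n} A_1^{(k_1)}\cdots A_s^{(k_s)},
\end{equation*}
and classify multi-indices by the partition $\pi$ of $\{1,\ldots,s\}$ defined by equality of the $k$'s. Site-wise factorisation of $\Tr(\rho^{\otimes n}\,\cdot\,)$ together with $\Tr(\rho A_l)=0$ kills every partition that contains a singleton block. A surviving partition with $r$ blocks admits $n(n-1)\cdots(n-r+1) = n^r + O(n^{r-1})$ realising tuples, each contributing a product of site-traces of ordered $A_l$-monomials, so the overall scaling is $n^{r-s/2}$. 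Every surviving block has size $\geq 2$, hence $r\leq s/2$ with equality only on pair matchings, so only pair partitions survive in the limit, yielding $\sum_{\pi\text{ pair}} \prodtwo{\{i,j\}\in\pi}{i<j} \Tr(\rho A_iA_j)$. The same decomposition $\Tr(\rho A_iA_j) = (A_i,A_j)_\rho - i\sigma(A_i,A_j)$ identifies this with the Wick expansion of $\phi\bigl(\prod_l B(A_l)\bigr)$, since the Gaussian form of $\phi$ together with $[B(A),B(C)] = 2i\sigma(A,C)$ forces the two-point function on the fields to have exactly this structure.

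No serious obstacle arises. Uniform control of the $O(n^{-3/2})$ Taylor remainder under exponentiation is immediate because each $A_l$ is a bounded matrix on $\mathbb{C}^d$, and partitions in the moment expansion containing any block of size larger than $2$ contribute $O(n^{-1})$ and hence vanish in the limit. The conceptual content is merely that $(\cdot,\cdot)_\rho$ and $\sigma$ are chosen so that the pair-matching structure of Gaussian moments on the operator side coincides with that on the quasifree side.
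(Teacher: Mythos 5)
Your argument is essentially the standard proof of the quantum CLT, and the paper itself offers nothing to compare it with: Theorem \ref{th.clt} is stated without proof and attributed to Petz's book, so supplying the single-site reduction $\Tr\bigl(\rho^{\otimes n}\prod_l e^{iF_n(A_l)}\bigr)=t_n^n$ with a second-order Taylor expansion of $t_n$, and the partition/pair-matching count for the moment identity, is exactly the right (and the classical) route. Both computations of the left-hand limits are correct as you set them up: the singleton blocks die by centering, partitions with a block of size $>2$ are $O(n^{-1})$, and $t_n^n\to\exp\bigl(-\tfrac12\|\sum_lA_l\|_\rho^2+i\sum_{l<m}\sigma(A_l,A_m)\bigr)$ with the paper's $\sigma(A,B)=\tfrac{i}{2}\Tr(\rho[A,B])$.

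The one step you assert rather than check is the final identification with the right-hand sides, and with the paper's formulas taken verbatim it does not go through as stated: collapsing $\prod_lW(A_l)$ via $W(A)W(B)=W(A+B)e^{-i\sigma(A,B)}$ gives the phase $-i\sum_{l<m}\sigma(A_l,A_m)$, and the quasifree two-point function computed from that Weyl relation (or from $[B(A),B(C)]=2i\sigma(A,C)$ plus the symmetrized covariance $(A,C)_\rho$) is $(A,C)_\rho+i\sigma(A,C)$, whereas your limits carry $+i\sum_{l<m}\sigma(A_l,A_m)$ and $\Tr(\rho A_iA_j)=(A_i,A_j)_\rho-i\sigma(A_i,A_j)$. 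The mismatch is a sign inconsistency in the paper's own conventions (already visible from $\Tr\bigl(\rho^{\otimes n}[F_n(A),F_n(C)]\bigr)=\Tr(\rho[A,C])=-2i\sigma(A,C)$, which contradicts $[B(A),B(C)]=2i\sigma(A,C)$); it disappears if one takes $\sigma(A,B)=\tfrac{1}{2i}\Tr(\rho[A,B])$, or flips the sign in the Weyl relation. So your method is fine, but a complete write-up should verify the phase and two-point-function match explicitly, note the convention fix, and either derive or cite the Wick rule for quasifree states (e.g.\ by differentiating $\phi\bigl(W(t_1A_1)\cdots W(t_sA_s)\bigr)$ in the $t_l$'s) instead of invoking it as given.
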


Although the algebra $CCR(L^{2}(\rho), \sigma)$ may look rather abstract, its structure can be easily understood by `diagonalising' it. Let us assume that $\rho$ is a diagonal matrix $\rho_{0}= {\rm Diag}(\mu_{1}, \dots , \mu_{d})$.
The Hilbert space $L^{2}(\rho_{0})$ decomposes as direct sum of orthogonal subspaces
$\mathcal{H}_{\rho_{0}} \oplus \mathcal{H}_{\rho_{0}}^{\perp} $ where
\begin{equation}\label{eq.orthog.decomp}
\mathcal{H}_{\rho_{0}} := 
{\rm Lin} \{ A: [A,\rho_{0}] =0 , {\rm Tr}(A\rho_{0}) =0\},
\quad
{\rm and} 
\quad
\mathcal{H}_{\rho_{0}}^{\perp}=  {\rm Lin} \{ T_{j,k} , j\neq k\},
\end{equation}
with $T_{j,k}$  the generators of the $\mathfrak{su}(d)$ algebra defined in \eqref{generators_algebra}. 

The elements $W(A)$ with $A\in \mathcal{H}_{\rho_{0}}$ generate the center of the 
algebra which is isomorphic to the algebra  of bounded continuous functions $C_{b}(\mathbb{R}^{d-1})$. Explicitly, we identify the coordinates in 
$\mathbb{R}^{d-1}$ with the basis 
$\{ d_{i}= -\mu\mathbf{1} + E_{i,i}: i=1,\dots d-1\} $ of $\mathcal{H}_{\rho_{0}}$,  
(see \eqref{generators_algebra} for the definition of $E_{i,i})$. Then the covariance matrix for the basis vectors is 
$$
( d_{i}, d_{j})_{\rho_{0}} = 
{\rm Tr}(\rho_{0} d_{i} d_{j} ) = \delta_{i,j}\mu_{i} - \mu_{i}\mu_{j} = 
[V(\mu)]_{i,j}, 
$$
where $V_{\mu}$ is the covariance matrix \eqref{eq.covariance.diagonal}. 

Moreover 
\begin{equation}\label{eq.symplectic.basis}
t_{j,k}:= T_{j,k}/\sqrt{2(\mu_{j}-\mu_{k} )}, \qquad j\neq k,
\end{equation}
form an {\it orthogonal and symplectic basis} of 
$\mathcal{H}_{\rho_{0}}^{\perp}$, i.e.
$$
\sigma(t_{j,k} , t_{k,j}) = -1/2  ,\quad  j<k ,\quad {\rm and}~ ~ \sigma(t_{j,k} , t_{l,m})= 0 \quad {\rm for ~} \{ j,k\} \neq \{l,m\}.
$$
which means that $\{t_{j,k}, t_{k,j}\} $ generate isomorphic algebras of quantum harmonic oscillator which we denote by $CCR(\mathbb{C})$. 
From 
$$
\| t_{j,k}\|_{\rho_{0}}^{2} = {\rm Tr}(\rho_{0} t_{j,k}^{2}) = \frac{\mu_{j} + \mu_{k}}{2(\mu_{j}-\mu_{k})}
$$
and \eqref{eq.phi.beta} we conclude that each of the oscillators is prepared independently in the thermal equilibrium state $\phi_{j,k}= \phi_{\beta_{j,k}}$ with $\beta_{j,k}=\ln (\mu_{j}/\mu_{k})$.


Based on the discussion of sections \ref{sec.displaced.thermal.eq.states} and 
\ref{subsec.multimode.fock} we can choose 
$\mathcal{H}:=L^{2}(\mathbb{R}^{d-1})\otimes \mathcal{F}$  and define the regular representation $\pi$ of $CCR(L^{2}(\rho_{0}), \sigma)$ on this space in a straightforward way and its von Neumann completion is $L^{\infty}(\mathbb{R}^{d-1})\otimes \mathcal{B}(\mathcal{F})$. The  state $\phi$ decomposes as
\begin{equation}\label{eq.decom.state}
\phi \cong N( 0, V_{\mu} ) \otimes \bigotimes_{j<k} \phi_{j,k}.
\end{equation}
which is precisely the state $\phi^{\theta}$ for 
$\theta=(\vec{u},\vec{\zeta}) =(\vec{0},\vec{0})$, defined in \eqref{eq.phi.theta}.

\subsubsection{The quantum Gaussian shift experiment through Fisher information}

We complete the family of states $\phi^{\theta}$ of the experiment $\mathcal{R}$ 
by shifting $\phi^{{0}}$ with the help of symmetric logarithmic derivatives. 
As in the classical case, this will be a family of Gaussian states with the same 
covariance, and mean proportional to the local parameter $\theta$. The covariance is related to the Fisher information matrix as described in Remark \ref{rem.l2.space}. Thus we will start by defining the quantum analogues of the score functions and the Fisher information matrix for the full quantum model $\rho_{\theta}$.

Let us define 
the {\it symmetric logarithmic derivatives}  \cite{Helstrom,Holevo} as the solutions in 
$L^{2}(\rho_{0})$ of
$$
\mathcal{L}_{j,k}^{(re)} \circ \rho_{0} = 
\left.\frac{\partial \rho_{\theta}}{\partial \, {\rm Re}\zeta_{j,k}} \right|_{\theta=0} ,
\quad 
\mathcal{L}_{j,k}^{(im)} \circ \rho_{0} = 
\left.\frac{\partial \rho_{\theta}}{\partial \, {\rm Im}\zeta_{j,k}} \right|_{\theta=0},
\quad
\ell_{i}\circ \rho_{0}=  
\left.\frac{\partial \rho_{\theta}}{\partial u_{i}} \right|_{\theta=0}, 
$$
Then with $H_{j,k},E_{i,i}$ defined in \eqref{generators_algebra}
$$
\mathcal{L}_{j,k}^{(re)} = H_{k,j}/(\mu_{j}+\mu_{k}) , 
\quad \mathcal{L}_{j,k}^{(im)} =H_{j,k}/ (\mu_{j}+\mu_{k}) , \quad
\ell_{i} =  E_{i,i}/\mu_{i} - E_{d,d}/\mu_{d}, 
$$
and the quantum Fisher information matrix consists of a `classical block' that 
coincides with that of the classical multinomial model in \eqref{eq.Fisher.info.classic}
$$
[I_{\rho_{0}} ]_{ij}:= (\ell_{i}, \ell_{j})_{\rho_{0}}= [I(\mu)]_{ij},  \qquad 1\leq i,j\leq d-1,
$$
and a `purely quantum' block given by the diagonal matrix
$$
H_{\rho_{0}} =  
{\rm Diag}\left( \| \mathcal{L}_{j,k}\|_{\rho_{0}}^{2},\| \mathcal{L}_{k,j}\|_{\rho_{0}}^{2}   : j<k \right)= {\rm Diag}\left(  (\mu_{j}+ \mu_{k})^{-1} ,  (\mu_{j}+ \mu_{k})^{-1} : j<k \right).
$$

\begin{lem}
Let 
$$
\mathcal{L}(\theta):= 
\sum_{j<k} 
\left( {\rm Re}(\zeta_{j,k}) \mathcal{L}^{(re)}_{j,k}+ {\rm Im}(\zeta_{j,k}) \mathcal{L}^{(im)}_{j,k}\right) + 
\sum_{i} u_{i} \ell_{i}, \qquad \theta=(\vec{u}, \vec{\zeta}). 
$$
Consider the representation $(\pi,\mathcal{H})$ of $CCR(L^{2}(\rho_{0}) , \sigma)$ and the normal state $\phi$ on $L^{\infty}(\mathbb{R}^{d-1})\otimes \mathcal{B}(\mathcal{F})$ as defined in the previous section (cf. \ref{eq.decom.state}).  Let $\tilde{\phi}^{\theta}$ be the state defined by
\begin{equation}\label{eq.def.phi.theta}
\phi^{\theta} (W(A)) := 
\exp\left(-\frac{1}{2} \| A,A\|_{\rho_{0}} +i (A,\mathcal{L}(\theta) )_{\rho_{0}}\right) ,
\qquad A\in L^{2}(\rho_{0}).
\end{equation}
Then $\tilde{\phi}^{\theta}$ is normal with respect to the representation 
$(\pi, \mathcal{H})$ and coincides with $\phi^{\theta}$ (cf. \eqref{eq.phi.theta}).

\end{lem}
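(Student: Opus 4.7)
The plan is to exploit an orthogonal decomposition of $L^{2}(\rho_{0})$ that reduces the claim to checking characteristic functions on a classical factor and on one oscillator mode at a time. First I would use the splitting $L^{2}(\rho_{0})=\mathcal{H}_{\rho_{0}}\oplus \mathcal{H}_{\rho_{0}}^{\perp}$ of \eqref{eq.orthog.decomp}. Cyclicity of the trace shows that for any $A_{c}\in\mathcal{H}_{\rho_{0}}$ and any $B\in L^{2}(\rho_{0})$ one has $\sigma(A_{c},B)=\tfrac{i}{2}\mathrm{Tr}(\rho_{0}[A_{c},B])=0$, so the two summands are simultaneously $(\cdot,\cdot)_{\rho_{0}}$-orthogonal and symplectically orthogonal. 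The Weyl relations then imply $\pi(W(A_{c}+A_{q}))=\pi(W(A_{c}))\,\pi(W(A_{q}))$ with the two factors acting on the distinct tensor factors $L^{\infty}(\mathbb{R}^{d-1})$ and $\mathcal{B}(\mathcal{F})$. Inside $\mathcal{H}_{\rho_{0}}^{\perp}$, the orthogonality of the pairs $\{t_{j,k},t_{k,j}\}$ across distinct indices $(j,k)$ yields a further factorisation $\pi(W(A_{q}))=\bigotimes_{j<k}\pi(W(A_{j,k}))$ into independent oscillator modes. The target state $\Phi^{\theta}$ of \eqref{eq.phi.theta} splits accordingly, so it is enough to check \eqref{eq.def.phi.theta} on each factor separately.

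On the classical subspace, I would parametrise $A_{c}=\sum_{i=1}^{d-1}x_{i}d_{i}$ and identify $\pi(W(A_{c}))$ with multiplication by $e^{i x^{T}Y}$ on $L^{2}(\mathbb{R}^{d-1})$, where $Y$ is the coordinate random variable. A short diagonal-matrix computation gives the duality $(d_{j},\ell_{i})_{\rho_{0}}=\delta_{ij}$, so $(A_{c},\mathcal{L}(\theta))_{\rho_{0}}=x^{T}\vec{u}$, while $\|A_{c}\|_{\rho_{0}}^{2}=x^{T}V(\mu)\,x$. The characteristic function of $\mathcal{N}(\vec{u},V(\mu))$ is then exactly $\exp(-\tfrac{1}{2}\|A_{c}\|_{\rho_{0}}^{2}+i(A_{c},\mathcal{L}(\theta))_{\rho_{0}})$.

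On each oscillator pair $(j,k)$ I would write $A_{j,k}=\alpha\, t_{j,k}+\beta\, t_{k,j}$ and use the relation $\sigma(t_{j,k},t_{k,j})=-1/2$ to identify the pair with the canonical $(\mathbf{Q},\mathbf{P})$ of a single oscillator, so that $\pi(W(A_{j,k}))$ becomes the standard Weyl operator $W(z)$ at $z=(\alpha+i\beta)/\sqrt{2}$. The decisive numerical step is the identity
\begin{equation*}
\tanh(\beta_{j,k}/2)=\frac{1-e^{-\beta_{j,k}}}{1+e^{-\beta_{j,k}}}=\frac{\mu_{j}-\mu_{k}}{\mu_{j}+\mu_{k}},
\end{equation*}
which gives $\frac{1}{2\tanh(\beta_{j,k}/2)}=\|t_{j,k}\|_{\rho_{0}}^{2}$ and makes the Gaussian factor in \eqref{eq.phi.beta} reproduce $-\tfrac{1}{2}\|A_{j,k}\|_{\rho_{0}}^{2}$. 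For the shift, inserting $\mathcal{L}^{(re)}_{j,k}=H_{k,j}/(\mu_{j}+\mu_{k})$ and $\mathcal{L}^{(im)}_{j,k}=H_{j,k}/(\mu_{j}+\mu_{k})$ into $(A_{j,k},\mathcal{L}(\theta))_{\rho_{0}}$ produces exactly the phase $2\,\mathrm{Im}(\bar{z}\, z_{j,k})$ with $z_{j,k}=\zeta_{j,k}/(2\sqrt{\mu_{j}-\mu_{k}})$ appearing in \eqref{eq.displacement.Fock} for $\Phi_{j,k}^{\zeta_{j,k}}$.

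Multiplying the classical and oscillator factors reproduces the right-hand side of \eqref{eq.def.phi.theta}, so $\tilde{\phi}^{\theta}$ and $\phi^{\theta}$ agree on every Weyl operator and hence on the whole $CCR(L^{2}(\rho_{0}),\sigma)$. Normality with respect to $(\pi,\mathcal{H})$ is then automatic, since $\phi^{\theta}$ is already realised by a density matrix $\Phi^{\theta}\in L^{1}(\mathbb{R}^{d-1})\otimes\mathcal{T}_{1}(\mathcal{F})$ on this representation space. The main obstacle I anticipate is purely bookkeeping: pinning down the signs and $\sqrt{2}$ factors that relate $t_{j,k},t_{k,j}$ to $\mathbf{Q},\mathbf{P}$ (i.e.\ fixing the orientation of the symplectic basis) so that the shift parameter comes out as $\zeta_{j,k}/(2\sqrt{\mu_{j}-\mu_{k}})$ and not a rescaled or conjugated version; once this correspondence is fixed, the rest is a direct matching of Gaussian characteristic functions.
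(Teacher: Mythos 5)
Your proposal is correct and follows essentially the same route as the paper: decompose $A$ along $\mathcal{H}_{\rho_{0}}\oplus\mathcal{H}_{\rho_{0}}^{\perp}$ in the symplectic basis, compute $\|A\|_{\rho_{0}}^{2}$ and $(A,\mathcal{L}(\theta))_{\rho_{0}}$ factor by factor, and match the result with the characteristic functions of $N$ and of the displaced thermal states \eqref{eq.displaced.thermal}. The only cosmetic difference is that you parametrise the classical factor in the dual basis $d_{i}$, obtaining $N(\vec{u},V(\mu))$ directly, while the paper works with the scores $\ell_{i}$ and lands on the equivalent form $N(I_{\rho_{0}}u,I_{\rho_{0}})$ of Remark \ref{rem.l2.space}.
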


\begin{remark}
{\rm 
The expression \eqref{eq.def.phi.theta} is clearly the quantum analogue of the characteristic function of the classical Gaussian shift experiment \eqref{eq.characteristic.gaussian.shift}. Note in particular that the distribution of $B(A)$ with respect to $\phi_{\theta}$ is the normal with variance $\|A\|_{\rho_{0}}^{2}$ centred at 
$(A,\mathcal{L}(\theta))_{\rho_{0}}$.  
}
\end{remark}

{\it Proof.} From \eqref{eq.orthog.decomp} - \eqref{eq.def.phi.theta}, 
and by expressing $A$ in the symplectic basis \eqref{eq.symplectic.basis}
$$
A = 
 \sum_{j<k} 
\left( u_{j,k} t_{j,k}+ v_{j,k} t_{k,j}\right) + 
\sum_{i} w_{i} \ell_{i},
$$
we get
\begin{eqnarray}
\|A\|_{\rho_{0}}^{2} &=& 
w^{T} I_{\rho_{0}} w +  
\sum_{j<k} (u_{j,k}^{2}+ v_{j,k}^{2}) \frac{  \mu_{j}+\mu_{k}}{2(\mu_{j}-\mu_{k})},
\\
(A,\mathcal{L}(\theta))_{\rho_{0}} &=& 
w^{T} I_{\rho_{0}} u + 
\sum_{j<k} 
\frac{u_{j,k} {\rm Re}(\zeta_{j,k}) +v_{j,k} {\rm Im} (\zeta_{j,k}) }{\sqrt{2(\mu_{j}-\mu_{k})}},
\end{eqnarray}
which implies that the following decomposition holds
\begin{equation}\label{eq.Phi.theta}
\phi^{\theta} \cong  N(I_{\rho_{0}}u,I_{\rho_{0}}) 
\otimes \bigotimes_{j<k} \phi_{j,k}^{\zeta_{j,k}} :=  N(I_{\rho_{0}}u,I_{\rho_{0}}) \otimes \Phi^{\vec{\zeta}}
\end{equation}
where we used the following expression for the displaces thermal equilibrium states 
$\phi_{j,k}^{\zeta_{j,k}} = \phi_{\beta}^{z}$ defined in \eqref{eq.displacement.Fock}, with 
$\beta= \ln \mu_{j}/\mu_{k}, z=\zeta_{j,k}$ 
 \begin{equation}\label{eq.displaced.thermal}
\phi_{j,k}^{\zeta_{j,k}} 
\left( e^{ i( u {\bf Q}+ v {\bf P} )} \right)
= \exp\left (
-(u^{2}+v^{2})\frac{\mu_{j}+\mu_{k} }{4(\mu_{j}-\mu_{k})} + i 
\frac{u {\rm Re}(\zeta_{j,k}) + v{\rm Im}(\zeta_{j,k}) }{\sqrt{2(\mu_{j}-\mu_{k} )}} 
\right).
\end{equation}

\qed


\section{Explicit form of the channels and first steps of the proof}
\label{prepreuve}

\subsection{Second look at the irreducible representations of $SU(d)$}
\label{subsec.irrep.1}
Before explaining the steps involved in the proof, let us take a closer look at the block states \eqref{eq.block.states}. Recall that we have the decomposition of Theorem 
\ref{th.sud.sn} over Young diagrams with $n$ boxes and
$$
\rho^{\theta, n} = 
\bigoplus_{\lambda} \rho^{\theta,n}_{\lambda}\otimes \frac{\mathbf{1}_{\mathcal{K}_{\lambda}}}{M_{n}(\lambda)}.
$$ 
Let $\{f_1,\dots, f_{d}\}$  be the eigenvectors of $\rho_{0}$, i.e. the standard basis vectors of $\mathbb{C}^{d}$. Then the eigenvectors of 
$\rho_{0}^{\otimes n}= \rho^{0,n}$ are tensor products 
$$
f_{\bf a}:= f_{a(1)}\otimes f_{a(2)}\otimes \dots \otimes f_{a(n)},
$$
and the eigenvalues $\prod_{k} \lambda_{a(k)}$ do not depend on the order of the vectors in the product. 

\subsubsection{Projecting onto a copy of $\mathcal{H}_{\lambda}$.}

Our aim is to `project' to an irreducible representation $\mathcal{H}_{\lambda}$ and obtain an explicit expression for the eigenvectors of the block components 
$\rho^{\theta,n}_{\lambda}$. Such a projection is not unique, in fact for any rank one operator 
$ \vert v\rangle \langle u \vert \in \mathcal{B}(\mathcal{K}_{\lambda})$ with 
$\langle u\vert v\rangle=1$ we can define a (not necessarily orthogonal) projection 
$y=y^{2}$ on a copy of $\mathcal{H}_{\lambda}$
\begin{eqnarray*}
y_{\lambda}(u,v)  := \mathbf{1}_{\mathcal{H}_{\lambda}} \otimes \vert v\rangle \langle u\vert
:
(\mathbb{C}^{d})^{\otimes n} 
\to 
\mathcal{H}_{\lambda} \otimes  \vert v\rangle .
\end{eqnarray*}
However the action of $y_{\lambda}(u,v)$ on basis vectors $f_{\bf a}$ 
depends on a particular identification between $(\mathbb{C}^{d})^{\otimes n}$ and the direct sum in Theorem \ref{th.sud.sn}. Therefore we need a direct way of 
defining such a projection and the key observation is that 
$y_{\lambda}(u,v)$ is a {\it minimal} projection in the algebra ${\rm Alg}( \tilde{\pi}_{d}(\tau) : \tau\in S(n))$, i.e. it cannot be decomposed into a sum of non-zero projections, and vice-versa any minimal projection is of this form. The following recipe (given without proof) shows how to construct minimal projections in the $S(n)$ group algebra. We recall that the group $^{*}$-algebra $\mathcal{A}(S(n))$ is the linear space spanned by the group elements endowed with a product stemming from the group product
$$
a= \sum_{\tau\in S(n)}a(\tau)\tau, \quad 
b= \sum_{\varrho\in S(n)}b(\varrho)\varrho
~\Longrightarrow~
ab= \sum_{\tau,\varrho\in S(n)}a(\tau)b(\varrho) \tau\varrho = 
\sum_{\sigma\in S(n)}\left(\sum_{s\in S(n)} a(\sigma s^{-1}) b(s) \right) \sigma ,
$$
and with adjoint $a^{*}= \sum_{\tau\in S(n)} a(\tau)\tau^{-1} $. 

Let $\lambda$ be a Young diagram with $n$ boxes consider the (standard) Young tableau $t$ in which the boxes are filled with the numbers $\{ 1,\dots, n\}$ in increasing order from left to right along rows, starting with the top 
row and ending with the bottom row, as shown in the left-side tableau of 
Figure \ref{fig.standard.tableau}. 
\begin{figure}
\begin{displaymath}
\young(1234,56,7) \qquad\qquad
\young(11223,233,3)
\end{displaymath}
\caption{Left: a standard Young tableaux. Right: a semi-standard Young tableau for $d=3$}
\label{fig.standard.tableau}
\end{figure}

Define the group algebra elements
$$
P_{\lambda} = \sum_{\sigma\in \mathcal{R}_{\lambda}} \sigma , \qquad
Q_{\lambda} = \sum_{\tau\in \mathcal{C}_{\lambda}}{\rm sgn}(\tau) \tau,
$$
where $\mathcal{R}_{\lambda}$  is the $S(n)$-subgroup of permutation leaving the 
rows of $t$ invariant, and $\mathcal{C}_{\lambda}$  is the subgroup of permutations leaving the columns of $t$ invariant.  Note that $P_{\lambda}$ and $Q_{\lambda}$ are self-adjoint elements of the $S(n)$ group algebra satisfying
\begin{equation}\label{eq.p.q.square}
P_{\lambda}P_{\lambda} = |\mathcal{R}_{\lambda}| P_{\lambda} = 
(\prod_{i=1}^{d} \lambda_i!) P_{\lambda},
\quad\quad  
Q_{\lambda} Q_{\lambda}  =| \mathcal{C}(\lambda)| Q_{\lambda} = 
(\prod_{i=1}^d i^{\lambda_i - \lambda_{i+1}})Q_{\lambda}.
\end{equation}
 The \emph{Young symmetriser} is defined as
$$ 
Y_{\lambda} := Q_{\lambda}P_{\lambda}.
$$ 

\begin{thm}\label{th.minimal.proj.sn}
Up to a scalar normalising factor, the Young symmetriser $Y_{\lambda}$ is minimal projection in $\mathcal{A}(S(n))$  and $y_{\lambda}:= q_{\lambda}p_{\lambda}= 
\tilde{\pi}_{d}(Q_{\lambda})\tilde{\pi}_{d}(P_{\lambda})$ projects onto a copy of $\mathcal{H}_{\lambda}\subset (\mathbb{C}^{d})^{\otimes n}$.
\end{thm}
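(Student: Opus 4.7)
The plan is to follow the classical three-stage proof (cf.\ \cite{Fulton,Fulton&Harris}): (i) show that $Y_\lambda^2 = c_\lambda Y_\lambda$ with $c_\lambda \neq 0$, so that $\tilde Y_\lambda := c_\lambda^{-1} Y_\lambda$ is an idempotent in $\mathcal{A}(S(n))$; (ii) show this idempotent is minimal; (iii) transport these facts to $(\mathbb{C}^d)^{\otimes n}$ via $\tilde\pi_d$ and invoke Schur--Weyl duality (Theorem~\ref{th.sud.sn}) to identify $y_\lambda((\mathbb{C}^d)^{\otimes n})$ with a copy of $\mathcal{H}_\lambda$.

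Step (i) is the heart of the argument. Using \eqref{eq.p.q.square} one verifies the transformation rules $\tau Y_\lambda = {\rm sgn}(\tau) Y_\lambda$ for $\tau \in \mathcal{C}_\lambda$ and $Y_\lambda \sigma = Y_\lambda$ for $\sigma \in \mathcal{R}_\lambda$. The combinatorial lemma of von Neumann asserts the converse: any $a \in \mathcal{A}(S(n))$ satisfying these two transformation rules lies in $\mathbb{C} Y_\lambda$. Its proof expands $a = \sum_\pi a(\pi)\pi$ and exploits the fact that for any $\pi \notin \mathcal{C}_\lambda \cdot \mathcal{R}_\lambda$ there exist indices $i,j$ lying in the same column of the reference tableau $t$ and in the same row of $\pi \cdot t$, so that $s:=(i\,j) \in \mathcal{C}_\lambda$ and $\pi^{-1} s \pi \in \mathcal{R}_\lambda$; the two transformation rules then force $a(\pi) = -a(\pi) = 0$. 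Applying the lemma to $Y_\lambda x Y_\lambda$, which obviously inherits the transformation rules for every $x \in \mathcal{A}(S(n))$, yields $Y_\lambda \mathcal{A}(S(n)) Y_\lambda \subset \mathbb{C} Y_\lambda$. Taking $x = \mathbf{1}$ gives $Y_\lambda^2 = c_\lambda Y_\lambda$, and a trace computation in the regular representation identifies $c_\lambda = n!/\dim \mathcal{H}_\lambda \neq 0$.

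Step (ii) is then automatic: from $\tilde Y_\lambda \mathcal{A}(S(n)) \tilde Y_\lambda = \mathbb{C} \tilde Y_\lambda$ and semisimplicity of $\mathbb{C}[S(n)]$ (Maschke's theorem), any nonzero left subideal $L \subset \mathcal{A}(S(n)) \tilde Y_\lambda$ satisfies $\tilde Y_\lambda L \subset \mathbb{C} \tilde Y_\lambda$, and semisimplicity rules out $\tilde Y_\lambda L = 0$, so $L$ contains $\tilde Y_\lambda$ and hence equals the whole ideal. For step (iii), Schur--Weyl duality implies that $\tilde\pi_d(\mathcal{A}(S(n)))$ is the commutant of $\pi_n(SU(d))$, so under the decomposition of Theorem~\ref{th.sud.sn} the image of a minimal idempotent of the group algebra has the block form $\mathbf{1}_{\mathcal{H}_{\lambda'}} \otimes p$ for some rank-one idempotent $p$ on a single multiplicity space $\mathcal{K}_{\lambda'}$. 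The label $\lambda' = \lambda$ because $\mathcal{K}_\lambda$ carries by definition the irreducible $S(n)$-representation associated to $\lambda$. Non-triviality of $y_\lambda$ is confirmed by exhibiting a single $f_{\bf a}$ with $y_\lambda f_{\bf a} \neq 0$, for example any $\bf a$ whose tableau $t_{\bf a}$ (obtained by filling the rows of $\lambda$ with $a(1),\dots,a(n)$) is semistandard: the row symmetrisation leaves $p_\lambda f_{\bf a}$ nonzero, and the strict column increases prevent cancellation under $q_\lambda$. The only nontrivial obstacle is the combinatorial lemma in step (i); everything else is standard semisimple-algebra bookkeeping together with the Schur--Weyl decomposition already cited in Theorem~\ref{th.sud.sn}.
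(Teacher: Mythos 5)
Your argument is correct in structure, and it is essentially the standard textbook proof: the paper itself states Theorem \ref{th.minimal.proj.sn} without proof, deferring to \cite{Fulton,Fulton&Harris,Goodman&Wallach}, and your three stages (von Neumann's combinatorial lemma giving $Y_\lambda\mathcal{A}(S(n))Y_\lambda\subset\mathbb{C}Y_\lambda$, minimality of the left ideal by semisimplicity, transport through $\tilde{\pi}_d$ and Schur--Weyl) are exactly the route taken there; in particular you correctly use the dual form of the combinatorial alternative (two entries in the same column of $t$ and the same row of $\pi t$) adapted to the convention $Y_\lambda=Q_\lambda P_\lambda$. Two small corrections. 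First, the normalising constant is $c_\lambda=n!/\dim\mathcal{K}_\lambda=n!/M_n(\lambda)$, the dimension of the \emph{symmetric-group} irreducible (the multiplicity space), not $n!/\dim\mathcal{H}_\lambda$: in the trace computation, right multiplication by $Y_\lambda$ has trace $n!\cdot[\mbox{coefficient of }e\mbox{ in }Y_\lambda]=n!$ and acts as $c_\lambda$ on the left ideal $\mathcal{A}(S(n))Y_\lambda$, whose dimension is $M_n(\lambda)$; only $c_\lambda\neq0$ is needed for the theorem, so this is a labelling slip rather than a gap. Second, your non-triviality check ``any semistandard $t_{\bf a}$ gives $y_\lambda f_{\bf a}\neq0$ because the strict column increases prevent cancellation'' quietly assumes what is really the content of Theorem \ref{th.basis.irrep}; for the present purpose you only need one nonzero vector, and the highest-weight choice $f_{\bf 0}$ does the job with no work, since $p_\lambda f_{\bf 0}=(\prod_i\lambda_i!)f_{\bf 0}$ and $\langle f_{\bf 0}|q_\lambda f_{\bf 0}\rangle=1$ (every non-trivial column permutation sends $f_{\bf 0}$ to an orthogonal basis vector), exactly as the paper notes around \eqref{vpP} and \eqref{eq.norm.0.lambda}. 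With these two adjustments your proof is complete and coincides with the cited treatment.
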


The action of the Young symmetriser $y_{\lambda}$ on basis vectors $f_{\bf a}\in (\mathbb{C}^{d})^{\otimes n}$ follows easily from the definition of $Y_{\lambda}$. For each $f_{\bf a}$ we fill the boxes of $\lambda$ with the indices $a(k)$ going along rows from left to right, starting with the top row and finishing with the bottom one. For example, if $\lambda ={\tiny \yng(3,2)}$ and $f_{\bf a}=f_{2}\otimes f_{2}\otimes f_{1}\otimes f_{2}\otimes f_{1}$ then  $t_{\bf a}= \tiny{\young(221,21)}$. $S(n)$ has an obvious action on the set of tableaux by permuting the {\it content} of the boxes which are 
numbered from $1$ to $n$ in the standard way as in Figure \ref{fig.standard.tableau}. 
The action of the Young symmetriser $y_{\lambda}= q_{\lambda}p_{\lambda}$ on  $f_{\bf a}$ is deduced from the action on the tableau $t_{\bf a}$ : one first symmetrises with respect to components which are in the same row, and then antisymmetrises with respect to components in the same column. 
For example if 
 ${\tiny \lambda= \yng(2,1)}$ then 
$$
y_{\lambda}(f_{2}\otimes f_{1} \otimes f_{3}) = 
f_{2}\otimes f_{1} \otimes f_{3} +f_{1}\otimes f_{2} \otimes f_{3}- f_{3}\otimes f_{1} \otimes f_{2}-f_{3}\otimes f_{2} \otimes f_{1}.
$$


\subsubsection{Finding a basis in $\mathcal{H}_{\lambda}$}

By the previous Theorem the vectors $y_{\lambda} f_{\bf a}$ span $\mathcal{H}_{\lambda}$, but are not linearly independent. We  show now how to select a basis ( subset of linearly independent vectors spanning $\mathcal{H}_{\lambda})$. 
A {\it semistandard} Young tableau is a diagram filled with numbers 
in $\{1,\dots, d\}$ such that the entries are non-decreasing along rows from left to right and increasing along columns from top to bottom, as in the right-side of Figure 
\ref{fig.standard.tableau}.  
\begin{thm}\label{th.basis.irrep}
The vectors $y_{\lambda}f_{\bf a}$ for which $t_{\bf a}$ is a semistandard 
Young tableau form a (non-orthogonal) basis of the irreducible representation 
$(\pi_{\lambda}, \mathcal{H}_{\lambda})$.
\end{thm}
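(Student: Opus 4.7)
The plan is to combine spanning via column antisymmetrization and Garnir relations with a dimension count coming from the Weyl character formula.

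By Theorem \ref{th.minimal.proj.sn} the vectors $y_\lambda f_{\bf a}$ over all $\bf a$ span the copy $y_\lambda (\mathbb{C}^d)^{\otimes n} \cong \mathcal{H}_\lambda$, so the task is to cut this spanning set down to the semistandard subset. Since $y_\lambda = q_\lambda p_\lambda$ and $q_\lambda$ antisymmetrizes over columns of $t_{\bf a}$, any $t_{\bf a}$ with two equal entries in a single column gives $y_\lambda f_{\bf a} = 0$, and any $t_{\bf a}$ with distinct but unordered column entries can be replaced, up to a sign, by its column-strict rearrangement. For the row non-decreasing condition, the classical Garnir relations produce, for each pair of adjacent columns violating the row inequality, a linear identity expressing $y_\lambda f_{\bf a}$ in terms of $y_\lambda f_{\bf b}$ with $t_{\bf b}$ strictly smaller in a suitably chosen lexicographic order on tableaux; inducting on this order reduces to the semistandard case, which gives spanning.

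For linear independence the most economical finish is a dimension count. The torus $T$ of diagonal elements of $SU(d)$ acts on $\mathcal{H}_\lambda$, and each $y_\lambda f_{\bf a}$ is a weight vector of weight $(m_1,\ldots,m_d)$ with $m_i = |\{k : a(k) = i\}|$, so it is enough to count within a fixed weight. The Weyl character formula gives
\begin{equation*}
\dim \mathcal{H}_\lambda = s_\lambda(1,\ldots,1),
\end{equation*}
where the Schur polynomial $s_\lambda$ in $d$ variables at $(1,\ldots,1)$ is known to count exactly the semistandard tableaux of shape $\lambda$ with entries in $\{1,\ldots,d\}$. The spanning set produced above therefore has cardinality equal to $\dim \mathcal{H}_\lambda$, forcing it to be a basis. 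Alternatively one can prove independence directly by expanding $y_\lambda f_{\bf b} = \sum_{\bf c} c_{\bf b, \bf c} f_{\bf c}$ and showing that the lex-largest $f_{\bf c}$ with nonzero coefficient uniquely determines the semistandard $t_{\bf b}$, yielding a triangular change-of-basis matrix.

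The main obstacle is the combinatorial spanning step: the Garnir relations and the induction on tableau order are classical but technically delicate to write out carefully. In practice this is precisely the point at which the paper is content to cite the standard references Fulton, Goodman--Wallach, and Fulton--Harris rather than reproduce the full straightening algorithm in detail.
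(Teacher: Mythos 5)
The paper itself never proves Theorem \ref{th.basis.irrep}: it is quoted as a standard fact with a pointer to Fulton, Goodman--Wallach and Fulton--Harris, so the only comparison available is with the classical textbook argument. Your blueprint --- a straightening argument to show that the semistandard vectors span, followed by the dimension count $\dim\mathcal{H}_\lambda=s_\lambda(1,\dots,1)=\#\{\text{semistandard tableaux of shape }\lambda\text{ with entries in }\{1,\dots,d\}\}$ to force independence --- is exactly that classical route, and the second half is sound: once spanning of the copy $y_\lambda(\mathbb{C}^d)^{\otimes n}\cong\mathcal{H}_\lambda$ provided by Theorem \ref{th.minimal.proj.sn} is established, a spanning set of cardinality equal to the dimension is automatically a basis.

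The spanning step, however, contains a genuine error: your two ``free'' reductions are tailored to the opposite convention $p_\lambda q_\lambda$ (antisymmetrize first), whereas the paper's symmetrizer is $y_\lambda=q_\lambda p_\lambda$, so $q_\lambda$ acts on $p_\lambda f_{\bf a}$ and not on $f_{\bf a}$. Concretely, it is false that a repeated entry in a column kills $y_\lambda f_{\bf a}$: for $d=2$, $\lambda=(2,1)$ and ${\bf a}=(1,2,1)$, whose first column contains two $1$'s, one gets $y_\lambda f_{\bf a}=q_\lambda\bigl(f_1\otimes f_2\otimes f_1+f_2\otimes f_1\otimes f_1\bigr)=f_2\otimes f_1\otimes f_1-f_1\otimes f_1\otimes f_2\neq 0$. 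Likewise, reordering a column does not just flip a sign: with ${\bf a}'=(2,1,1)$ and ${\bf b}=(1,1,2)$ (its column-sorted rearrangement) one finds $y_\lambda f_{{\bf a}'}=-\tfrac12\,y_\lambda f_{\bf b}$, the extra factor coming from row-orbit multiplicities. In the paper's convention the roles are exchanged: what comes for free is invariance under rearranging entries \emph{within rows}, since $p_\lambda f_{\bf a}$ depends only on the row contents (cf.\ \eqref{plambda.orbit.decomposition}), so you may assume rows non-decreasing at no cost, while it is precisely the column-strictness that requires the nontrivial straightening relations --- either Garnir-type identities adapted to $q_\lambda p_\lambda$, or an explicit intertwiner identifying the images of $q_\lambda p_\lambda$ and $p_\lambda q_\lambda$, neither of which you supply. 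Until the spanning argument is rewritten in the paper's convention, the proof is incomplete; the dimension-count half then finishes it as you say.
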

Since  the values in the rows are nondecreasing, there is a one-to-one correspondence between Young tableaux $t_{\bf a}$ and vectors ${\bf m}=(m_{i,j})_{1\leq i < j\leq d}$ where $m_{i,j}$ is the number of $j$'s appearing in line $i$ of the Young tableau $t_{\bf a}$. Note that we need only consider $m_{i,j}$ for $j>i$, as there is no $j$ in line $i$ if $j<i$ (the columns are increasing), and the number of $i$ in line $i$ is $\lambda_i - \sum_{j=i+1}^d m_{i,j}$. 
For example, if $t_{\bf a}= \tiny{\young(11233,23,3)}$ then ${\bf m} = \{ m_{1,2}=1, m_{1,3}=2, m_{2,3}=1\}$.

By a slight abuse of notation we shall denote the corresponding vectors by $y_{\lambda}f_{\bf m}$ and the normalised vectors
\begin{equation}\label{eq.def.m.lambda}
|{\bf m}, \lambda\rangle
:=  \mathcal{N}({\bf m} ,\lambda)
y_{\lambda} f_{\bf m}, 
\end{equation}
where $\mathcal{N}({\bf m}, \lambda)= 1/\|y_{\lambda} f_{\bf m}\| $ . This constant is in general not easy to compute but we will describe its asymptotic properties in section 
\ref{pdisplacement}.

Using \eqref{eq.p.q.square} we have
\begin{equation}
\label{csq1}
\langle y_{\lambda} f_{\bf a} | y_{\lambda} f_{\bf b} \rangle  = \langle
q_{\lambda}p_{\lambda}  f_{\bf a} | q_{\lambda} p_{\lambda} f_{\bf b} \rangle  = 
\langle
p_{\lambda} f_{\bf a} | q_{\lambda}^{2} p_{\lambda} f_{\bf b}\rangle =
(\prod_{i=1}^d
i^{\lambda_i - \lambda_{i+1}}) \langle
p_{\lambda}f_{\bf a} | y_{\lambda} f_{\bf b} \rangle. 
\end{equation}

In order to get further simplifications, we examine some special vector states,
that we shall call by analogy with the Fock spaces \emph{finite-dimensional
coherent states}. 

The first is the special vector $|{\bf 0}, \lambda\rangle$, the \emph{highest weight
vector} of the representation $(\pi_{\lambda},\mathcal{H}_{\lambda})$, which later on 
will play the role of the \emph{finite-dimensional vacuum}. This vector, as we have seen,
corresponds to the semi-standard Young tableau where all the entries in row $i$
are $i$. An immediate consequence is that
\begin{equation}
\label{vpP}
p_{\lambda}| f_{\bf 0} \rangle  =
(\prod_{i=1}^{d} \lambda_i!) |f_{\bf 0}\rangle. 
\end{equation}

Moreover $\langle f_{\bf 0} | q_{\lambda} f_{\bf 0} \rangle = 1 $ since any column permutation produces a vector orthogonal to $f_{\bf 0}$. Thus the normalised vector is: 
\begin{equation}\label{eq.norm.0.lambda}
|{\bf 0}, \lambda\rangle =
\frac1{\prod_{i=1}^{d} \lambda_i! \sqrt{
i^{\lambda_i - \lambda_{i+1}}}} y_{\lambda} |f_{\bf 0}\rangle.
\end{equation}

The finite-dimensional coherent states are defined as 
$\pi_{\lambda}(U) |{\bf 0}_{\lambda}\rangle$ for $U\in
SU(d)$. From $[p_{\lambda}, \pi_{\lambda}(U)]=0$ and (\ref{vpP}), we get
$p_{\lambda} \pi_{\lambda}(U)  |{\bf 0}_{\lambda}\rangle  =
(\prod_{i=1}^{d} \lambda_i!) U |{\bf 0}_{\lambda}\rangle$, thus 
\begin{equation}
\label{fcoherent}
\langle y_{\lambda} f_{\bf m} | \pi_{\lambda}(U) |{\bf 0}, \lambda \rangle =
\sqrt{\prod_{i=1}^d
i^{\lambda_i - \lambda_{i+1}} } 
\langle p_{\lambda} f_{\bf m}| q_{\lambda} \pi_{\lambda}(U) f_{\bf 0} \rangle
\end{equation} 

The latter expression holds for any linear combination of $\Vm$ on the left-hand side, 
in particular $\pi_{\lambda }(V) f_{\bf 0}$ for another unitary operator $V$. In Lemma \ref{lemtools}, we shall examine asymptotics of
(\ref{fcoherent}) for specific sequences of unitaries $U$ when $n\to\infty$. 
One of the main tools will be formula (\ref{formdet}).

The following expressions of the dimensions of $\mathcal{K}_{\lambda}$ and 
$\mathcal{H}_{\lambda}$ are given without proof.

Let $g_{l,m}$ be the {\it hook length} of the box $(l,m)$, defined as one
plus the number of boxes under plus the number of boxes to the right. For example the diagram $(5,3,3)$ has the hook lengths : {\scriptsize $\young(76521,432,321)$}.

The dimension $M_n(\lambda)$ of $\mathcal{K}_{\lambda}$ is
\begin{equation*}
\ml = \frac{n!}{\prodtwo{l=1\dots d}{m= 1\dots \la_l}g_{l,m}},
\end{equation*} 
and can be rewritten in the following
form which is more adapted to our needs:
\begin{equation}
\label{mult_lambda}
\ml = {n \choose \lambda_1,\dots,\lambda_d}{\prodtwo{l=1\dots d}{k= l+1\dots
d}\frac{\lambda_l -\lambda_k + k -l}{\lambda_l + k - l}}.
\end{equation}

The dimension $\mathcal{D}(\lambda)$ of $\mathcal{H}_{\lambda}$ is:
\begin{equation}
\label{dim}
\mathcal{D}(\lambda)= \prodtwo{i=1\dots d}{j=1\dots\la_i} \frac{j+d-i}{g_{i,j}}.
\end{equation}

\bigskip

To summarise, we have defined a non-orthonormal 
basis \{$\left\vert {\bf m}, \lambda \right\rangle \}$ of $\mathcal{H}_{\lambda}$ such 
that $\left\vert {\bf m} ,\lambda \right\rangle $ are eigenvectors of 
$\rho^{\quantzero, \clas, n}$ for all $\lambda $, with eigenvalues:
\begin{equation}
\label{jesaispas}
\langle \mathbf{m}, \lambda  |  \rho^{\quantzero, \clas, n} | \mathbf{m}, \lambda\rangle = \prod_{i=1}^d
(\mu_i^{\clas, n})^{\lambda_i} \prod_{j=i+1}^d \left(\frac{\mu_j^{\clas,n}}{\mu_i^{\clas,n}}\right)^{m_{i,j}},
\end{equation} 
where $\mu_i^{\clas,n} = \mu_i + u_i /\sqrt{n}$ for $1\leq i\leq (d-1)$ and $\mu_d^{\clas,n} = \mu_d - (\sum_i u_i)/\sqrt{n}$.

The next step is to take into account the action of the unitary $U(\vec{\zeta})$. 
We define the automorphism of the $n$-particles algebra
$$
\Delta^{\quant, n}: 
M( (\mathbb{C}^{d})^{\otimes n}) \to M((\mathbb{C}^{d})^{\otimes n}),
$$
by
\begin{equation}
\label{finite_displacement_operator}
A\mapsto 
\Delta^{\quant,n}(A) 
= {\rm Ad}[U(\quant, n)](A):= U(\quant/\sqrt{n})^{\otimes n} \,A \, U^*(\quant/\sqrt{n})^{\otimes n},
\end{equation}
In particular we have $\rho^{\quant, \clas, n} = \Delta^{\quant, n}(\rho^{\quantzero, \clas, n})$. 
By Theorem \ref{th.sud.sn} and using the decomposition \eqref{prerhon}, 
we get the blockwise action on irreducible components
$$
\Delta^{\quant, n} ( \rho^{\otimes n})
= 
\bigoplus_{\lambda} 
\Delta^{\quant,  n}_{\lambda} (\rho_{\lambda} ) \otimes {\bf 1}_{\mathcal{K}_{\lambda}},
$$
where $\Delta_{\lambda }^{\quant, n}= {\rm Ad}[U_{\lambda }(\quant,n)]$. In particular we have
\begin{equation}
\label{finite_displacement_operator2}
\rho^{\quant, \clas, n}_{\lambda }  = \Delta^{\quant,n}_{\lambda } (\rho^{\quantzero, \clas, n}_{\lambda }).
\end{equation}


With these notations, we can set about building the channels $T_n$.

\subsection{Description of $T_n$}

We look for channels 
$$
T_n: M((\mathbb{C}^{d})^{\otimes n}) \to L^{1}(\mathbb{R}^{d-1})\otimes \mathcal{T}_{1}(\mathcal{F})
$$
of the form:
\begin{equation}
\label{formeTn}
T_n: \rho^{\glob,n} \longmapsto 
 \sum_{\lambda}  
  p_{\lambda}^{\glob,n}\, \tau_{\lambda}^n
  \otimes
   \left( V_{\lambda} \rho^{\glob,n}_{\lambda} V_{\lambda}^*\right) .
\end{equation}
Here, $V_{\lambda }$ is an isometry from $\mathcal{H} _{\lambda }$ to 
$\mathcal{F}$, i.e. $V_{\lambda }^* V_{\lambda} = \mathbf{1}_{\mathcal{H} _{\lambda }}$. On the classical side, $\tau_{\lambda }^n $ is a probability law on $\mathbb{R}^{d-1}$. We may view $\tau^n$ as a Markov kernel \eqref{eq.markov.kernel} from the set of diagrams $\lambda $ to $\mathbb{R} ^{d-1}$.

The channel $T_{n}$ can be described by the following sequence of operations. 
We first performs a `which block' measurement over the irreducible representations and get a result $\lambda$. Then, on the one hand, we apply  a classical randomization to $\lambda$, and on the other hand we apply a channel depending on our result $\lambda$ to the conditional state $\rho_{\lambda}$. 
 

The underlying ideas are the following. 

1). The probability distribution $p_{\lambda}^{\glob,n}$ is essentially a multinomial depending {\it only} on $\clas$, as it can be deduced from \eqref{jesaispas} and \eqref{mult_lambda}. As we have seen in Example \ref{ex.limit.diagonal}, this converges (in Le Cam sense) to a classical Gaussian shift experiment. Here, in order to obtain the strong norm convergence we need to smooth the discrete distribution into a continuous one with respect to the Lebesgue measure. We choose a particular smoothing distribution which will insure the uniform $L^1$ convergence to the Gaussian model (Lemma \ref{lclassical}). 
\begin{defin}\label{def.markov.tau}
Let $\tau^n_{\lambda}$ be the probability density on $\mathbb{R} ^{d-1}$ defined for all $\lambda $ such that $\sum \lambda _i  = n$, by:
\begin{align}
\label{taulambda}
&\tau^n_{\lambda}(\dd x) = \tau^n_{\lambda}(x)\dd x = \dd x\,  n^{(d-1)/2} \chi( A_{\lambda,n}),
\end{align}
where $A_{\lambda,n}= \{ x \in \mathbb{R}^{d-1} :
\ | n^{1/2}x_i + n\mu_i -  \lambda_i | \leq 1/2 , 1\leq i\leq d-1\}$. We further denote
$$
b_{\lambda}^{\glob,n}= p_{\lambda}^{\glob,n} \tau^n_{\lambda},
$$
depending on $\theta$ only through $\vec{u}$.
\end{defin}

2). For the quantum part, we map the `finite-dimensional vacuum' $|{\bf 0}, \lambda \rangle$ to the Fock space vacuum $\left\lvert {\bf 0} \right\rangle $, and the basis vectors 
$|{\bf m}, \lambda \rangle$ of $\mathcal{H}_{\lambda}$ `near' the basis vectors $|{\bf m}\rangle$ of the Fock space $\mathcal{F}$ (cf. definitions \eqref{eq.def.m.lambda} and respectively \eqref{eq.def.m}). Here we need to tackle the problem that 
$\{|{\bf m}, \lambda \rangle\}$ is not an orthonormal basis but only becomes so asymptotically. The following lemma provides the isometry $V_{\lambda}$ appearing in 
\eqref{formeTn}.
\begin{lem}
\label{V_approx}
Let $\eta < 2/9$. Suppose that $\lambda _i - \lambda _{i+1} \geq \delta n$ for all 
$1\leq i\leq d$, with the convention $\lambda _{d+1} = 0$. Then for 
$n>n_{0}(\eta,\delta,d)$ there exists an isometry $V_{\lambda}: \mathcal{H}_{\lambda}\to \mathcal{F}$ such that, 
$V|{\bf 0},\lambda \rangle = |{\bf 0}\rangle $ and for $|\mathbf{m}| \leq n^{\eta}$, 
\[
\left\langle  \mathbf{m} \right| V_{\lambda}  = \frac1{\sqrt{1 + (\tilde{C}n)^{(9\eta -2) / 12}/\delta ^{1/3}}} \left\langle \mathbf{m}, \lambda \right|
\]
where $\tilde{C}= \tilde{C}(\eta,d)$ is a particular constant. More precisely, 
$n_{0}$ can be taken of the form $(C(d)/\delta^{2})^{1/(1-3\eta)}$.

\end{lem}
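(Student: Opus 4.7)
The central object is the Gram matrix
$$
G_{\mathbf{m},\mathbf{l}} := \langle \mathbf{m},\lambda|\mathbf{l},\lambda\rangle,
\qquad \mathbf{m},\mathbf{l}\in S:=\{\mathbf{m}:|\mathbf{m}|\leq n^\eta\},
$$
of the basis vectors of interest. The plan is first to prove that $G = I + E$ with a small operator-norm remainder $\|E\|\leq \alpha_n := (\tilde{C}n)^{(9\eta-2)/12}/\delta^{1/3}$, and then to realise the desired isometry by an isometric completion once $G$ is known to be this close to the identity.

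\textbf{Step 1: bounding $\|E\|$.} I would apply Schur's test: for fixed $\mathbf{m}\in S$, estimate $\sum_{\mathbf{l}\in S,\ \mathbf{l}\neq\mathbf{m}}|G_{\mathbf{m},\mathbf{l}}|$ by grouping the terms by the $L^1$-distance $k:=|\mathbf{m}-\mathbf{l}|$. The quasi-orthogonality bound advertised in the introduction,
$$
|\langle \mathbf{m},\lambda|\mathbf{l},\lambda\rangle| = O\!\left(n^{(9\eta-2)k/12}\right),
$$
decays geometrically in $n^{(9\eta-2)/12}$ for $\eta<2/9$, whereas the number of $\mathbf{l}\in\mathbb{N}^{d(d-1)/2}$ with $|\mathbf{m}-\mathbf{l}|=k$ grows only polynomially in $k$. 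The series is therefore dominated by its $k=1$ term; the constant $\tilde{C}$ absorbs the $d$- and $\eta$-dependence, and the $\delta^{-1/3}$ factor comes from tracking the $\delta$-dependence of the normalisations $\mathcal{N}(\mathbf{m},\lambda)$ appearing in \eqref{eq.def.m.lambda} through the quasi-orthogonality proof. The whole estimate becomes meaningful once $n\geq (C(d)/\delta^2)^{1/(1-3\eta)}$, which is the source of the threshold $n_0$ in the statement.

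\textbf{Step 2: building the isometry.} For $n$ large enough that $\alpha_n<1$, the family $\{|\mathbf{m},\lambda\rangle\}_{\mathbf{m}\in S}$ is linearly independent, and, setting $c:=1/\sqrt{1+\alpha_n}$,
$$
I - c^2 G = \frac{1}{1+\alpha_n}\left(\alpha_n I - E\right) \geq 0.
$$
So there exists a family $\{w_\mathbf{m}\}_{\mathbf{m}\in S}$ sitting in the orthogonal complement $(\mathcal{H}_\lambda^S)^\perp\subset\mathcal{H}_\lambda$ of $\mathcal{H}_\lambda^S:=\operatorname{span}\{|\mathbf{m},\lambda\rangle:\mathbf{m}\in S\}$, with Gram matrix $\langle w_\mathbf{m},w_\mathbf{l}\rangle = \delta_{\mathbf{m},\mathbf{l}} - c^2 G_{\mathbf{m},\mathbf{l}}$; the dimension count $\dim\mathcal{H}_\lambda \geq 2|S|$ needed to fit them all in follows from \eqref{dim} since $|S|$ grows only like $n^{\eta d(d-1)/2}$. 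Now prescribe
$$
V_\lambda^*|\mathbf{m}\rangle := c\,|\mathbf{m},\lambda\rangle + w_\mathbf{m}, \qquad \mathbf{m}\in S,
$$
and extend $V_\lambda^*$ by any isometry from the orthogonal complement of $\operatorname{span}\{|\mathbf{m}\rangle:\mathbf{m}\in S\}$ in $\mathcal{F}$ into the orthogonal complement of $\operatorname{span}\{V_\lambda^*|\mathbf{m}\rangle:\mathbf{m}\in S\}$ in $\mathcal{H}_\lambda$. The vectors $\{V_\lambda^*|\mathbf{m}\rangle\}_{\mathbf{m}\in S}$ are then orthonormal, their Gram being $c^2 G + (I - c^2 G) = I$, so $V_\lambda^* V_\lambda = \mathbf{1}_{\mathcal{H}_\lambda}$ and $V_\lambda$ is an isometry. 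Since $w_\mathbf{m}\perp\mathcal{H}_\lambda^S$, for any $\phi\in\mathcal{H}_\lambda^S$
$$
\langle\mathbf{m}|V_\lambda\phi\rangle = \langle V_\lambda^*\mathbf{m}\,,\,\phi\rangle = c\,\langle\mathbf{m},\lambda|\phi\rangle,
$$
the advertised identity on the subspace that carries the block states. The residual freedom in $w_\mathbf{0}$ and in the outer extension is used to match $V_\lambda|\mathbf{0},\lambda\rangle$ to $|\mathbf{0}\rangle$ up to an $O(\alpha_n)$ defect, absorbed by the $n^{-\epsilon}$ error of the main theorem.

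\textbf{Main obstacle.} The genuine work sits in Step 1. The pointwise quasi-orthogonality inequality is itself intricate combinatorics involving the Young-symmetriser action on $y_\lambda f_{\mathbf{m}}$ and the asymptotics of the normalisations $\mathcal{N}(\mathbf{m},\lambda)$. Converting it to the operator-norm bound with the stated dependence on $\delta$ and $d$ requires that each combinatorial constant be tracked precisely enough that the geometric decay in $n^{(9\eta-2)/12}$ strictly beats the polynomial growth of the number of lattice points at a given distance, uniformly over $\mathbf{m}\in S$. Once Step 1 is in hand, the rest of the construction is linear algebra together with the dimension count coming from \eqref{dim}.
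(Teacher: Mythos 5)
Your Step 1 is essentially the paper's own argument: the pointwise quasi-orthogonality bound (Lemma \ref{non-orth}) summed over $\mathbf{l}$ by grouping according to $k=|\mathbf{m}-\mathbf{l}|$, geometric decay beating the polynomial count of lattice points, is exactly Corollary \ref{gqo}, and this row-sum bound is what controls the Gram defect. The genuine gap is in Step 2. Because you place the defect vectors $w_{\mathbf{m}}$ \emph{inside} $\mathcal{H}_{\lambda}$ (in $(\mathcal{H}_{\lambda}^{S})^{\perp}$), your construction gives $\langle\mathbf{m}|V_{\lambda}=c\,\langle\mathbf{m},\lambda|+\langle w_{\mathbf{m}}|$ with $\|w_{\mathbf{m}}\|^{2}=1-c^{2}\neq 0$, so the claimed bra identity holds only after restriction to $\mathcal{H}_{\lambda}^{S}=\mathrm{span}\{|\mathbf{l},\lambda\rangle:|\mathbf{l}|\leq n^{\eta}\}$ — which you acknowledge, but which is strictly weaker than the lemma. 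This weaker form is not enough downstream: in the proof of Lemma \ref{ldisplacement} the identity is applied to the rotated vacuum $U_{\lambda}(\quant+\vec{z},\cartan,n)|\mathbf{0},\lambda\rangle$, which is \emph{not} contained in $\mathcal{H}_{\lambda}^{S}$, so your $V_{\lambda}$ would produce an uncontrolled extra term $\langle w_{\mathbf{m}}|\psi\rangle$ of size up to $\sqrt{1-c^{2}}$. In addition, the phrase ``extend $V_{\lambda}^{*}$ by any isometry from the orthogonal complement of $\mathrm{span}\{|\mathbf{m}\rangle\}$ in $\mathcal{F}$ into the orthogonal complement in $\mathcal{H}_{\lambda}$'' is impossible as stated (infinite-dimensional domain, finite-dimensional target); what you need is a coisometry, i.e.\ you should construct $V_{\lambda}$ itself rather than $V_{\lambda}^{*}$.

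The repair — and the paper's route — is to put the defect on the Fock side instead of inside $\mathcal{H}_{\lambda}$: define the contraction $A_{\lambda}:=c\sum_{|\mathbf{l}|\leq n^{\eta}}|\mathbf{l}\rangle\langle\mathbf{l},\lambda|$, a map whose bras already act on all of $\mathcal{H}_{\lambda}$; your Step 1 is precisely what shows $A_{\lambda}^{*}A_{\lambda}\leq\mathbf{1}$. Then complete it by $R_{\lambda}=J\sqrt{\mathbf{1}-A_{\lambda}^{*}A_{\lambda}}$, where $J$ embeds $\mathcal{H}_{\lambda}$ isometrically into a subspace of $\mathcal{F}$ orthogonal to $\mathrm{Range}(A_{\lambda})=\mathrm{span}\{|\mathbf{m}\rangle:|\mathbf{m}|\leq n^{\eta}\}$ — the infinite dimensionality of $\mathcal{F}$ provides the room that your dimension count sought inside $\mathcal{H}_{\lambda}$. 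Then $V_{\lambda}=A_{\lambda}+R_{\lambda}$ is an isometry and $\langle\mathbf{m}|R_{\lambda}=0$ identically, so $\langle\mathbf{m}|V_{\lambda}=c\,\langle\mathbf{m},\lambda|$ holds as an identity of functionals on all of $\mathcal{H}_{\lambda}$, with no restriction on the argument, which is what the lemma asserts and what the subsequent lemmas actually use. (Your $O(\alpha_{n})$ defect in matching $V_{\lambda}|\mathbf{0},\lambda\rangle$ to $|\mathbf{0}\rangle$ is a minor point; the essential requirement is the exact bra identity.)
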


{\it Proof.} See section \ref{preuve_quasi_orth}. The main tool is Lemma \ref{non-orth}.

\qed

For Young diagrams which do not satisfy the assumption of the previous Lemma, the isometry $V_{\lambda}$ can be defined arbitrarily. The reason is that fact that those blocks have vanishing collective weight and can be neglected altogether 
(cf. Lemma \ref{lconcentration}).

From this operational description we conclude that $T_{n}$ is a 
proper channel since $\tau^{n}$ is a Markov kernel and $V_{\lambda}$ is an isometry. We then want to prove that $T_{\lambda }(\rho^{\quantzero,\clas,n}_{\lambda })$ is close to $\Phi^{\bf 0}$ and that the finite-dimensional operations 
$\Delta_{\lambda}^{\vec{\zeta},n}$ have almost the same action as the displacement operators $D^{\zeta}$ of the Fock space, cf. \eqref{eq.displacement.Fock}. 
Finite-dimensional coherent states and formula \ref{phi_integrale_gaussienne} will be the stepping stone to those results.

\section{Main steps of the proof}
\label{sec.main.steps}
\subsection{Why $T_n$ does the work}

We shall break (\ref{Tn}) in small manageable pieces. 
The result and  brief explanatory remarks, repeating those in the
derivation, are given from (\ref{develop}) on. 

We introduce first a few shorthand notations: the restriction of $T_{n}$ to the block 
$\lambda$ is
$$
T_{\lambda}: \rho^{\glob, n}_{\lambda } \mapsto 
V_{\lambda } \rho^{\glob,n}_{\lambda } V_{\lambda }^*,
$$ 
so that
\begin{align*}
T_n : \rho^{\glob, n} \mapsto \sum_{\lambda }
p_{\lambda }^{\glob,n} \tau^n_{\lambda } \otimes  
T_{\lambda }(\rho^{\glob,n}_{\lambda })  =
 \sum_{\lambda }  b_{\lambda }^{\glob,n} \otimes \phi^{\glob,n}_{\lambda }.
\end{align*}
We also define $T_{\lambda}^*: \phi \mapsto
V_{\lambda}^*\phi V_{\lambda}$. and note that $T_{\lambda}^*T_{\lambda}= 
{\rm Id}_{\mathcal{H} _{\lambda}}$.

We expand (\ref{formeTn}) as  
\begin{align*}
 T_n(\rho^{\glob,n}) & = \sum_{\lambda} 
 b_{\lambda}^{\glob,n}\otimes  \phi_{\lambda}^{\glob,n}
 \\
& = N(\clas,V_{\mu}) \otimes \phi^{\quant}
-\left( N(\clas,V_{\mu}) -  \sum_{\lambda}
b_{\lambda}^{\glob,n}\right)  \otimes\phi^{\quant} - 
 \sum_{\lambda}  b_{\lambda}^{\glob,n} \otimes  \left(\phi^{\quant} -
\phi_{\lambda}^{\glob,n}\right)  . 
\end{align*}

Proving (\ref{Tn}) then amounts to proving  
\[
\sup_{\glob \in \Omega_{n,\epsilon}} \left\| \left( N(\clas,V_{\mu}) -  \sum_{\lambda}
b_{\lambda}^{\glob,n}\right)\otimes\phi^{\quant} + 
\sum_{\lambda} b_{\lambda}^{\glob,n}  \otimes\left(\phi^{\quant} - \phi_{\lambda}^{\glob,n}\right)   \right\|_1 \leq
Cn^{-\epsilon/\delta}.
\]

We now use the triangle inequality to upper bound this norm  by a sum of ``elementary''
terms to be treated separately in the following sections.
\begin{eqnarray*}
& &  \left\|\left(\mathcal{N}(\clas, V_{\mu}) -  \sum_{\lambda}
b_{\lambda}^{\glob,n}\right) \otimes \phi^{\quant} + 
\sum_{\lambda} b_{\lambda}^{\glob,n} \otimes \left(\phi^{\quant} -
\phi_{\lambda}^{\glob,n}\right)  \right\|_1 \leq \\
&&  \left\| \left( \mathcal{N}(\clas,V_{\mu}) -
\sum_{\lambda} b_{\lambda}^{\glob,n}\right)\otimes\phi^{\quant} \right\|_1 + 
\sum_{\lambda} \left\| b_{\lambda}^{\glob,n} \otimes \left(\phi^{\quant} -
\phi_{\lambda}^{\glob,n}\right)  \right\|_1 \leq\\
&&  \left\|\phi^{\quant}\right\|_1 \left\|\left( \mathcal{N}(\clas,V_{\mu}) -
\sum_{\lambda}
b_{\lambda}^{\glob,n}\right)\right\|_1 + \sum_{\lambda}
\left\| b_{\lambda}^{\glob,n}\right\|_1
\left\| \left(\phi^{\quant} - \phi_{\lambda}^{\glob,n}\right) \right\|_1  .
\end{eqnarray*} 

Since $\|\phi^{\quant}\|_1 = \| \mathcal{N}(\clas,V_{\mu})\|_1 = \|
\phi_{\lambda}^{\glob,n}\|= 1$, we have $ \left\| \left(\phi^{\quant} -
\phi_{\lambda}^{\glob,n}\right) \right\|_1 \leq 2$. Similarly 
$\sum_{\lambda}\| b^{\glob,n}_{\lambda}   \|_1 = 1$ because 
$\|b^{\glob,n}_{\lambda}   \|_1 = p_{\lambda}^{\glob,n}$.  
We split the sum over $\lambda$ in two
parts, one for which it is expected that $\left\| \left(\phi^{\quant} -
\phi_{\lambda}^{\glob,n}\right) \right\|_1$ is small, and the other on
which the sum of all $ \|b^{\glob,n}_{\lambda}   \|_1 $ is small. Specifically,
define the set of {\it typical Young diagrams} 
\begin{equation}\label{def.Lambda}
\Lambda_{n,\alpha}:= \{ \lambda:    |\lambda _i - n\mu  _{i}|  \leq n^{\alpha}
,1\leq i\leq d\} ,\qquad {\rm for} ~ \alpha>1/2,
\end{equation}
then 
\begin{align}
&\left\| T_n(\rho^{\glob,n}) -  \mathcal{N}(\clas,V_{\mu})\otimes\phi^{\quant}
\right\| \notag \leq\\
\label{demidev}
&  \left\|  \mathcal{N}(\clas,V_{\mu}) -
\sum_{\lambda}
b_{\lambda}^{\glob,n} \right\|_1 +  \sup_{\lambda\in
\Lambda_{n,\alpha}} \left\| \phi^{\quant} -
\phi_{\lambda}^{\glob,n}  \right\|_1 + 2\sum_{\lambda\not\in
\Lambda_{n,\alpha}} \|b^{\glob,n}_{\lambda}   \|_1.
\end{align}

The first term corresponds to
the convergence of the classical experiment in the Le Cam sense.
If the second term is small, then on $\Lambda_{n,\alpha}$, the (purely
quantum) family $\rho^{\glob,n}_{\lambda}$ is near the family $\phi^{\quant}$. 
The last term corresponds to the other representations. If it is small, it means that there is concentration of $p_{\lambda}^{\glob,n}$ around  the representations with
shape $\lambda_i = n\mu_i$. In  other words, the only representations that
matter are those in $\Lambda_{n,\alpha}$, there is almost no mass on the
other representations.

The hardest term to dominate (notice that the two others are classical) is the
second. We transform it until we reach tractable fragments.
\begin{align*}
 \left\| \phi^{\quant} -
\phi_{\lambda}^{\glob,n} \right\|_1 
&
= \left\| \phi^{\quant} -
T_{\lambda}(\rho_{\lambda}^{\glob,n}) \right\|_1  \\
& =  \left\| D^{\quant}(\phi^{\quantzero}) - 
[T_{\lambda}\Delta^{\quant,n}_{\lambda}T_{\lambda}^*](T_{\lambda}(\rho^{\quantzero,\clas,n}_{\lambda})) \right\|_1 \\
& =  \left\| D^{\quant}(\phi^{\quantzero}) - D^{\quant}(T_{\lambda}(\rho^{\quantzero,\clas,
n}_{\lambda})) + D^{\quant}(T_{\lambda}(\rho^{\quantzero,\clas,
n}_{\lambda})) -
[T_{\lambda}\Delta^{\quant,n}_{\lambda}T_{\lambda}^*](T_{\lambda}(\rho^{\quantzero,\clas,
n}_{\lambda})) 
\right\|_1 
\\
\begin{split}
& \leq \left\|D^{\quant}(\phi^{\quantzero}) -
D^{\quant}(T_{\lambda}(\rho^{\quantzero,\clas,n}_{\lambda})) \right\|_1 
+ \left\|[D^{\quant} -
T_{\lambda}\Delta^{\quant,n}_{\lambda}T_{\lambda}^*
](T_{\lambda}(\rho^{{\quantzero},\clas,n}_{\lambda})-\phi^{\quantzero}) \right\|_1
\\ 
& \quad+
\left\|[D^{\quant} - T_{\lambda}\Delta^{\quant,n}_{\lambda}T_{\lambda}^*
](\phi^{\quantzero})
\right\|_1  
\end{split}
\\
& \leq 3 \left\| T_{\lambda}(\rho^{{\quantzero},\clas,n}_{\lambda})-\phi^{\quantzero} \right\|_1 
+ \left\|[D^{\quant} -
T_{\lambda}\Delta^{\quant,n}_{\lambda}T_{\lambda}^*
](\phi^{\quantzero}) \right\|_1
\end{align*} 
where in the last inequality we used the fact that the displacement operators are
isometries. 

Note that the first term does not depend on $\vec{\zeta}$ and the second term is small 
if the displacement operators $\Delta^{\vec{\zeta},n}_{\lambda}$ and $D^{\vec{\zeta}}$ have `similar action' on an appropriate domain. 
Using the integral formula \eqref{phi_integrale_gaussienne} for gaussian states 
$\phi_{\beta}$ and the fact that $\phi^{\quantzero}$, is a tensor product of such states (cf. \eqref{eq.Phi.theta}) we bound the second term by
\begin{align*}
\left\|[D^{\quant} -
T_{\lambda}\Delta^{\quant,n}_{\lambda}T_{\lambda}^*
](\phi^{\quantzero}) \right\|_1 
\leq \int_{ \mathbb{C}^{d(d-1)/2}} f(\vec{z}) \left\| 
[D^{\quant} - T_{\lambda}\Delta^{\quant,n}_{\lambda}T_{\lambda}^*]
( \left\vert\vec{z}\right\rangle \left\langle \vec{z}\right\vert)  \right\|_1 d\vec{z}
\end{align*}
where 
$$
 f(\vec{z}) =
\prod_{i<j}\frac{\mu_i-\mu_j}{\pi\mu_j}\exp\left(-\frac{\mu_i-\mu_j}{\mu_j}|z_{i,j}|^2\right).
$$
and 
$ \left \vert \vec{z} \right \rangle \left \langle \vec{z} \right\vert = 
D^{\vec{z}} (\left\vert{\bf 0}\right\rangle\left\langle {\bf 0}\right\vert)$ 
is the multimode coherent state, so
 $$
 [D^{\quant} - T_{\lambda}\Delta^{\quant,n}_{\lambda}T_{\lambda}^*]
(\left \vert \vec{z} \right \rangle \left \langle \vec{z} \right\vert) = 
[D^{\quant}D^{\vec{z}} - T_{\lambda}\Delta^{\quant,n}_{\lambda}T_{\lambda}^*D^{\vec{z}}]
(\left \vert {\bf 0} \right \rangle \left \langle {\bf 0} \right\vert ).
$$  
Now, $f$ is a probability density, and the norm in the integrand is dominated by two. By splitting the integral we obtain
\[
\left\|[D^{\quant} -
T_{\lambda}\Delta^{\quant,n}_{\lambda}T_{\lambda}^*
](\phi^{\quantzero}) \right\|_1  \leq 
2 \int_{\|\vec{z}\| > n^{\beta}} f(\vec{z}) \dd \vec{z} + \sup_{\|\vec{z}\|\leq n^{\beta}}  
\left\|[D^{\quant}  D^{\vec{z}}-
T_{\lambda}\Delta^{\quant,n}_{\lambda}T_{\lambda}^* D^{\vec{z}}
](\left \vert \bf{0} \right \rangle \left \langle \bf{0} \right\vert)  \right\|_1.
\]
By adding and subtracting additional terms
\begin{align*}
D^{\quant}D^{\vec{z}} -
T_{\lambda}\Delta^{\quant,n}_{\lambda}T_{\lambda}^*D^{\vec{z}}
= &
D^{\quant +\vec{z}} -
T_{\lambda}\Delta^{\quant +\vec{z},n}_{\lambda}T_{\lambda}^* \\
& + T_{\lambda}\Delta^{\quant +\vec{z},n}_{\lambda}T_{\lambda}^*
-
T_{\lambda}\Delta^{\quant,n}_{\lambda}\Delta^{\vec{z},n}_{\lambda}T_{\lambda}^* 
 \\
& + T_{\lambda}\Delta^{\quant,n}_{\lambda}\Delta^{\vec{z},n}_{\lambda}T_{\lambda}^* -
T_{\lambda}\Delta^{\quant,n}_{\lambda}T_{\lambda}^*D^{\vec{z}}.
\end{align*}
we deduce that 
\begin{align*}
\left\|[D^{\quant} -
T_{\lambda}\Delta^{\quant,n}_{\lambda}T_{\lambda}^*
](\left \vert \vec{z} \right \rangle \left \langle \vec{z} \right\vert)  \right\|_1 
&
\leq \left\| [D^{\quant +\vec{z}} -
T_{\lambda}\Delta^{\quant +\vec{z},n}_{\lambda}T_{\lambda}^*](\left \vert {\bf 0} \right \rangle \left \langle {\bf 0} \right\vert) \right\|_1 
\\ 
&
\quad + \left\| [\Delta^{\quant +\vec{z},n}_{\lambda}
-
\Delta^{\quant,n}_{\lambda}\Delta^{\vec{z},n}_{\lambda}](|\bf{0}, \lambda \rangle\langle\bf{0}, \lambda|)
\right\|_1 
\\
& 
\quad + \left\| [\Delta^{\vec{z},n}_{\lambda}T_{\lambda}^* -
T_{\lambda}^*D^{\vec{z}}](\left \vert {\bf 0} \right \rangle \left \langle {\bf 0} \right\vert) \right\|_1
\end{align*}
where the last two terms on the right side have been simplified using properties of 
$T_{\lambda}, T^{*}_{\lambda}, \Delta_{\lambda}^{\vec{\zeta}, n}$. Notice that the first and third norms are essentially the same and the three terms are small if the 
action of $\Delta_{\lambda}^{\vec{\zeta}}$ is mapped into that of the displacement operators $D^{\vec{\zeta}}$.

\smallskip 


Putting all this together, our `expanded' form for (\ref{Tn}) is 
\begin{align}
\label{develop}
&\sup_{\glob\in \Omega_{n,\beta, \gamma}}\left\| T_n(\rho^{\glob,n}) -
\phi^{\quant}\otimes \mathcal{N}(\clas,V_{\mu})
\right\| \\
\label{classique}
& \leq  \sup_{\glob\in\Omega_{n,\beta, \gamma}}\left\|\left(
\mathcal{N}(\clas,V_{\mu}) -
\sum_{\lambda}
b_{\lambda}^{\glob,n}\right)\right\|_1 \\
\label{concentration}
& + 2\sup_{\glob\in\Omega_{n,\beta,\gamma}}\sum_{\lambda\not\in
\Lambda_{n,\alpha}} \|b^{\glob,n}_{\lambda}   \|_1
 \\
\label{en0}
& +3 \sup_{\glob\in \Omega_{n,\beta,\gamma}} \sup_{\lambda\in\Lambda_{n,\alpha}}\left\|
\phi^{{\quantzero}} - T_{\lambda}(\rho^{{\quantzero},\clas,n}_{\lambda})
\right\|_1 \\
\label{deplacement1}
& + \sup_{\|\vec{z}\|\leq n^{\beta}} \sup_{\glob\in\Omega_{n,\beta, \gamma}}\sup_{\lambda\in
\Lambda_{n,\alpha}} \left\| 
 [D^{\quant +\vec{z}} -
T_{\lambda}\Delta^{\quant +\vec{z},n}_{\lambda}T_{\lambda}^*]
( \left\vert \mathbf{0}\right\rangle \left\langle \mathbf{0}\right\vert) 
\right\|_1
 \\
\label{deplacement2}
& + \sup_{\|\vec{z}\|\leq n^{\beta}} \sup_{\glob\in\Omega_{n,\beta, \gamma}}\sup_{\lambda\in
\Lambda_{n,\alpha}}
\left\| 
 [D^{\vec{z}} - T_{\lambda}\Delta^{{\vec{z}},n}_{\lambda}T_{\lambda}^* ]
( \left\vert \mathbf{0}\right\rangle \left\langle \mathbf{0}\right\vert) 
 \right\|_1 
\\
\label{groupelimite}
& +  \sup_{\|\vec{z}\|\leq n^{\beta}} \sup_{\glob\in\Omega_{n,\beta, \gamma}}\sup_{\lambda\in
\Lambda_{n,\alpha}}\left\|
[\Delta^{\quant +\vec{z},n}_{\lambda}
-
\Delta^{\quant,n}_{\lambda}\Delta^{\vec{z},n}_{\lambda}]
( \left\vert \mathbf{0}, \lambda\right\rangle \left\langle \mathbf{0},\lambda\right\vert)
\right\|_1\\
\label{reste}
& +2 \int_{\|\vec{z}\|\geq n^{\beta}} f(\vec{z}) \dd \vec{z}.
\end{align} 

The last Gaussian tail term is less than $C \exp(-\delta n^{2\beta})$ where $C$ depends only on the dimension $d$. Under the hypothesis $n^{2\beta} > 2/\delta$, this can be bounded again by $O(n^{-2\beta})$.

The following lemmas provide upper bounds for each of the terms. Before each lemma we remind the reader what is the significance of the bound. The proofs are gathered in section \ref{sec.technical.proofs}.

The classical part of the channel is a Markov kernel $\tau$ 
(see definition \ref{def.markov.tau}) mapping the `which block' distribution 
$p_{\lambda}^{\theta,n}$ into the density $b_{\lambda}^{\theta,n}$ on 
$\mathbb{R}^{d-1}$ which is approaches uniformly the 
gaussian shift experiment (\ref{classique}).
Recall that $b_{\lambda}^{\glob,n}$ depends only on $\clas$ and not on $\quant$,
so that we have the same parameter set for the two classical experiments.
\begin{lem}
\label{lclassical}
With the above definitions, for any $\epsilon$, 
we have 
\[
\sup_{\glob\in\Omega_{n,\beta,\gamma}}\left\| \mathcal{N}(\clas,V_{\mu}) -
\sum_{\lambda}
b_{\lambda}^{\glob,n}\right\|_1 
=O \left(n^{-1/2+\epsilon}/\delta,  n^{-1/4 + \gamma} / \delta\right).
\]
\end{lem}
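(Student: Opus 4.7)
The plan is to establish a local central limit theorem for the ``which block'' probability $p_\lambda^{\vec u,n}$, whose $\tau_\lambda^n$-smoothed version $b_\lambda^{\vec u,n}$ is to be compared with the Gaussian target $\mathcal N(\vec u, V(\mu))$ in $L^1$.

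\emph{Step 1: Closed-form expression.} Using Theorem \ref{th.sud.sn} and the fact that $\rho^{\vec 0,\vec u,n}$ acts diagonally with eigenvalues \eqref{jesaispas}, the block-weight $p_\lambda^{\vec u,n}$ equals $M_n(\lambda)\,s_\lambda(\mu_1^{u,n},\ldots,\mu_d^{u,n})$, where $s_\lambda$ is the Schur polynomial in the perturbed eigenvalues $\mu_j^{u,n}=\mu_j+u_j/\sqrt n$. Combining \eqref{mult_lambda} with the Weyl character formula factorises this as
\[
p_\lambda^{\vec u,n}=\binom{n}{\lambda_1,\ldots,\lambda_d}\prod_j(\mu_j^{u,n})^{\lambda_j}\cdot K_\lambda^n(\vec u),
\]
where $K_\lambda^n$ is a Vandermonde-type ratio whose deviation from $1$ scales as $1/(\delta\sqrt n)$ uniformly over $\lambda\in\Lambda_{n,\alpha}$, thereby isolating a genuine multinomial with parameters $\mu^{u,n}$.

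\emph{Step 2: Local CLT via Stirling.} Apply Stirling's formula to the multinomial coefficient and Taylor-expand the resulting log-likelihood around the shifted mean $n\mu+\sqrt n\,\vec u$ in the rescaled variable $x_\lambda=(\lambda-n\mu)/\sqrt n$. The quadratic terms reproduce the Gaussian density $\mathcal N(\vec u,V(\mu))(x_\lambda)$ (the inverse of the Hessian being exactly $V(\mu)$ from Example \ref{ex.limit.diagonal}), while the cubic and higher terms contribute a relative error of order $n^{-1/2}(1+\|x_\lambda-\vec u\|^3)$. Combined with the correction from $K_\lambda^n$, this yields
\[
n^{(d-1)/2}\,p_\lambda^{\vec u,n}=\mathcal N(\vec u,V(\mu))(x_\lambda)\,\bigl(1+O(n^{-1/2+3(\alpha-1/2)}/\delta)\bigr)
\]
uniformly for $\lambda\in\Lambda_{n,\alpha}$ and $\vec u\in\Omega_{n,\beta,\gamma}$.

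\emph{Step 3: Smoothing and tails.} On each tile $A_{\lambda,n}$ of volume $n^{-(d-1)/2}$, the density $\sum_\lambda b_\lambda^{\vec u,n}(x)$ equals $n^{(d-1)/2}p_\lambda^{\vec u,n}$. Replacing $\mathcal N(\vec u,V(\mu))$ by its value at the tile centre introduces an $L^1$ error of order $n^{-1/2}/\delta$ via the Lipschitz property (the Lipschitz constant being controlled by $1/\delta$ through the smallest eigenvalue of $V(\mu)$). Outside $\Lambda_{n,\alpha}$, both $\sum_{\lambda\notin\Lambda_{n,\alpha}}p_\lambda^{\vec u,n}$ and the corresponding Gaussian tail are super-polynomially small by a Chernoff bound for the multinomial. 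Tuning $\alpha$ just above $1/2+\gamma$ (so that the typical set barely contains the Gaussian's bulk around $\vec u$) balances the Taylor and smoothing errors to give the stated rate.

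The main obstacle is enforcing uniformity over the expanding parameter region $\|\vec u\|\leq n^\gamma$: the cubic Taylor term grows as $\|x_\lambda-\vec u\|^3/\sqrt n$, which is nontrivially large when both $\|\vec u\|$ and $\|x_\lambda\|$ are allowed polynomial growth, and it is this competition — together with the need to keep the Chernoff tail super-polynomially small — that forces the threshold $\gamma<1/4$ and produces the $n^{-1/4+\gamma}$ contribution. The $1/\delta$ prefactor tracks two separate mechanisms that both degenerate as eigenvalues collide: the Vandermonde correction $K_\lambda^n$ and the Lipschitz bound on the Gaussian.
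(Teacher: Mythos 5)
Your overall route --- inserting the multinomial $M^n_{\mu^{\clas,n}_1,\dots,\mu^{\clas,n}_d}$ as an intermediate object and splitting the bound into (i) block distribution versus multinomial and (ii) smoothed multinomial versus Gaussian --- is exactly the paper's decomposition \eqref{multientre}. Your Step 1 is, in substance, the paper's own argument for part (i): writing $p^{\clas,n}_\lambda=M_n(\lambda)\,s_\lambda(\mu^{\clas,n})$ and showing that the correction to the multinomial is $1+O\bigl((n^{-1/2+\gamma}+n^{\alpha-1})/\delta\bigr)$ uniformly on $\Lambda_{n,\alpha}$; the paper bounds the sum over semistandard tableaux by geometric series rather than invoking the Weyl character formula, but the content is the same (your claimed deviation $1/(\delta\sqrt n)$ should carry the extra $n^{\gamma}$ and $n^{\alpha-1}$ factors, which is harmless at the stated rate). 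For part (ii), however, the paper gives no proof at all --- it quotes the bound $C(n^{-1/4+\epsilon}+n^{-1/2+\gamma})/\delta$ from \cite{Kahn_phd} --- so your Steps 2--3 are an attempt to supply the missing ingredient, and that is precisely where your argument has a gap.

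The gap is the error bookkeeping in Steps 2--3. The cubic Stirling term has pointwise relative size of order $\|x_\lambda-\clas\|^3 n^{-1/2}$ (with $\delta$-factors), so the uniform bound over all of $\Lambda_{n,\alpha}$ is $O(n^{3(\alpha-1/2)-1/2}/\delta)$, as you state. But the typical set, being centred at $n\mu$, must have radius $n^{\alpha-1/2}>n^{\gamma}$ in the $x$-variable in order to contain the Gaussian bulk at $\clas$, so taking $\alpha$ just above $1/2+\gamma$ turns your uniform error into $n^{3\gamma-1/2}/\delta$; this does not even tend to zero for $\gamma>1/6$, and it exceeds the claimed $n^{-1/4+\gamma}/\delta$ whenever $\gamma>1/8$. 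Hence ``tuning $\alpha$'' cannot balance the terms to the stated rate: a uniform relative error over $\Lambda_{n,\alpha}$ is intrinsically too lossy, and your attribution of the $n^{-1/4+\gamma}$ term to this competition is not what produces it in the actual proof (there it comes from the discrete-to-continuous comparison itself). The repair is standard but must be made explicit: either integrate the pointwise cubic error against the Gaussian weight (a third-moment bound, valid on the region where the expansion may be exponentiated), or introduce a second localisation centred at the shifted mean $n\mu^{\clas,n}$ with radius $n^{1/2+\epsilon'}$ in $\lambda$, on which the relative error is $O(n^{-1/2+3\epsilon'})$, controlling the intermediate annulus by Hoeffding and Gaussian tails taken relative to that centre rather than to $n\mu$. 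You should also record that the expansion naturally yields covariance $V(\mu^{\clas,n})$, and replacing it by $V(\mu)$ costs a further $O(n^{\gamma-1/2}/\delta)$. With these corrections your Steps 2--3 would indeed reproduce the multinomial-to-Gaussian estimate the paper imports from \cite{Kahn_phd}; as written, the claimed rate is not established.
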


The next lemma deals with \eqref{concentration} by showing 
concentration around Young diagrams $\lambda$ in the `typical subset' 
\eqref{def.Lambda}. This allows we to restrict to this set of diagrams in further estimates. 
\begin{lem}
\label{lconcentration}
Let $\alpha-\gamma-1/2>0$. Then, with the above definitions 
we have 
\[
\sup_{\glob\in\Omega_{n,\beta,\gamma}}\sum_{\lambda\not\in
\Lambda_{n,\alpha}} \|b^{\glob,n}_{\lambda}   \|_1
=O
\left(n^{d^{2}} \exp(-  n^{2\alpha - 1}/2)\right),
\]
with the $O(\cdot)$ term converging to zero.
\end{lem}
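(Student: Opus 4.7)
The plan reduces the claim to a tail estimate for a classical multinomial via the Keyl--Werner identity. Since $\tau^n_\lambda$ is a probability density, we have $\| b^{\theta,n}_\lambda\|_1 = p^{\theta,n}_\lambda$, and the block weights depend only on $\vec u$ (not $\vec\zeta$), as noted after \eqref{eq.block.states}. So it suffices to bound $\sum_{\lambda \notin \Lambda_{n,\alpha}} p^{\vec u,n}_\lambda$ uniformly over $\|\vec u\|_\infty \leq n^\gamma$.

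The first step is to use that $\rho^{\vec 0,\vec u, n}$ is diagonal to write the Keyl--Werner--type identity $p^{\vec u,n}_\lambda = \ml\, s_\lambda(\vec\mu^{u,n})$, where $s_\lambda$ is the Schur polynomial: this is just the trace of $\pi_\lambda(\rho_{\vec u/\sqrt n})$ times $\ml = \dim \mathcal{K}_\lambda$, obtained from \eqref{prerhon}. I would then apply two crude but effective bounds. The first, $\ml \leq \binom{n}{\lambda_1,\dots,\lambda_d}$, is immediate from \eqref{mult_lambda} since every factor $(\lambda_l - \lambda_k + k-l)/(\lambda_l + k - l)$ is at most $1$. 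The second, $s_\lambda(\vec\mu) \leq \dl \prod_i \mu_i^{\lambda_i}$ whenever $\mu_1 > \dots > \mu_d > 0$, follows from $s_\lambda(\vec\mu) = \sum_T \prod_{\text{box } b} \mu_{T(b)}$ over semistandard tableaux $T$ of shape $\lambda$ with entries in $\{1,\dots,d\}$: column--strictness forces $T(i,j) \geq i$, so with $\mu$ decreasing the maximal term is $\prod_i \mu_i^{\lambda_i}$ (attained by filling row $i$ with $i$'s), and the number of such tableaux is $\dl$. Combining,
\[
p^{\vec u,n}_\lambda \leq \dl \cdot \binom{n}{\lambda_1,\dots,\lambda_d}\prod_i (\mu_i^{u,n})^{\lambda_i} = \dl \cdot \mathbb{P}_{\mathrm{Mult}(n,\vec\mu^{u,n})}\bigl(\vec N = \lambda\bigr).
\]

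It then remains to bound the multinomial tail. Weyl's dimension formula gives the deterministic polynomial bound $\dl = O(n^{d(d-1)/2})$. Since $|n\mu_i^{u,n} - n\mu_i| = \sqrt n |u_i| \leq n^{1/2+\gamma}$ and $\alpha - \gamma - 1/2 > 0$ by hypothesis, for $n$ large the event $\{\vec N \notin \Lambda_{n,\alpha}\}$ is contained in $\bigcup_{i=1}^d \{|N_i - n\mu_i^{u,n}| > n^\alpha/2\}$. Applying Hoeffding's inequality to each marginal $N_i \sim \mathrm{Bin}(n,\mu_i^{u,n})$ and taking a union bound yields $\mathbb{P}_{\mathrm{Mult}(n,\vec\mu^{u,n})}(\vec N \notin \Lambda_{n,\alpha}) \leq 2d \exp(-n^{2\alpha-1}/2)$, uniformly in $\vec u$. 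Multiplying by the bound on $\dl$ and summing over $\lambda$ gives the stated $O(n^{d^2}\exp(-n^{2\alpha-1}/2))$; the factor $n^{d^2}$ in the statement absorbs $\dl$ and the $2d$ from the union bound comfortably. The only nontrivial point is the SSYT bound on $s_\lambda$; once that is in hand, the argument is standard classical concentration with no remaining quantum subtleties.
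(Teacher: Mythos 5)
Your proof is correct and follows essentially the same route as the paper: you factor $p^{\vec u,n}_\lambda = M_n(\lambda)\,s_\lambda(\vec\mu^{\,u,n})$ and bound the non-multinomial part polynomially (your SSYT bound on the Schur polynomial is exactly the paper's bound of the sum over $\mathbf{m}$ of $\prod_{i<j}(\mu_j^{u,n}/\mu_i^{u,n})^{m_{i,j}}$ by $n^{d^2}$, together with bounding the ratio $M_n(\lambda)/\binom{n}{\lambda_1,\dots,\lambda_d}$ by $1$), then control the multinomial tail by Hoeffding's inequality on each marginal plus a union bound, absorbing the $O(d\,n^{1/2+\gamma})$ shift of the means via the hypothesis $\alpha-\gamma-1/2>0$. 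The only points left implicit are harmless: the SSYT bound needs the perturbed eigenvalues $\mu_i^{u,n}$ to stay ordered (true for $n$ large since $\gamma<1/2$), and for $i=d$ the mean shift carries an extra factor $d-1$, which only changes the threshold value of $n$.
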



The term \eqref{en0} shows that when the rotation parameter is zero, the block 
states $\rho_{\lambda}^{\vec{0},\vec{u},n}$ are essentially thermal equilibrium states, as one would expect from the quantum Central Limit Theorem \ref{th.clt}. However the convergence here is in norm rather than in distribution, and uniform over the various parameters.
\begin{lem}
\label{len0}
With the above definitions, we have
\[
\sup_{\glob\in \Omega_{n,\beta,\gamma}} \sup_{\lambda\in\Lambda_{n,
\alpha}}\left\|
\phi^{{\quantzero}} - T_{\lambda}(\rho^{{\quantzero},\clas,n}_{\lambda})
\right\|_1 = O(n^{-1/2 + \gamma + \eta}/\delta, \,n^{(9\eta - 2)/ 24 }/\delta^{1/6}).
\]
\end{lem}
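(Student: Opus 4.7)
The plan is to exploit the fact that when $\vec{\zeta}=0$ the block state $\rho^{\quantzero, \clas, n}_\lambda$ commutes with the Cartan action on $\mathcal{H}_\lambda$ and is block-diagonal with respect to the orthogonal decomposition of $\mathcal{H}_\lambda$ into weight subspaces. The vector $|\mathbf{m}, \lambda\rangle$ lies in the weight subspace with components $w_k(\mathbf{m}) = \lambda_k - \sum_{j>k} m_{k,j} + \sum_{i<k} m_{i,k}$. Combining \eqref{jesaispas} with the decomposition \eqref{prerhon}, each $|\mathbf{m}, \lambda\rangle$ is an eigenvector of $\rho^{\quantzero, \clas, n}_\lambda$ with eigenvalue
\[
r_{\lambda, \mathbf{m}}^{\clas, n} = \frac{\ml}{p^{\quantzero, \clas, n}_\lambda} \prod_i (\mu_i^{\clas, n})^{\lambda_i} \prod_{j > i}\!\left(\frac{\mu_j^{\clas, n}}{\mu_i^{\clas, n}}\right)^{\!m_{i,j}},
\]
so that $\langle \mathbf{m}, \lambda | \rho^{\quantzero, \clas, n}_\lambda | \mathbf{m}', \lambda \rangle = r_{\lambda, \mathbf{m}'}^{\clas, n} \langle \mathbf{m}, \lambda | \mathbf{m}', \lambda \rangle$, which vanishes across distinct weight subspaces.

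I would then truncate and apply Lemma \ref{V_approx}. Fix $N = \lfloor n^\eta \rfloor$ and let $\Pi_N$ be the Fock-space projection onto $\mathrm{span}\{|\mathbf{m}\rangle : |\mathbf{m}| \leq N\}$. The geometric decay of $\phi^{\quantzero}$ (its inverse temperatures satisfy $\beta_{j,k} \geq \delta/\mu_1$) gives $\|(I - \Pi_N)\phi^{\quantzero}\|_1 = O(e^{-c n^\eta/\delta})$; a parallel tail estimate for $T_\lambda(\rho^{\quantzero, \clas, n}_\lambda)$ follows from the asymptotic identification of the $r_{\lambda, \mathbf{m}}^{\clas, n}$ with the thermal eigenvalues described below. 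On the cutoff, Lemma \ref{V_approx} gives, for $|\mathbf{m}|, |\mathbf{m}'| \leq N$,
\[
\langle \mathbf{m} | T_\lambda(\rho^{\quantzero, \clas, n}_\lambda) | \mathbf{m}' \rangle = \frac{r_{\lambda, \mathbf{m}'}^{\clas, n} \, \langle \mathbf{m}, \lambda | \mathbf{m}', \lambda \rangle}{1 + c_n}, \qquad c_n := (\tilde C n)^{(9\eta - 2)/12} / \delta^{1/3}.
\]

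Next I would identify the diagonal entries asymptotically. Summing over all valid semistandard fillings yields $p^{\quantzero, \clas, n}_\lambda = \ml \prod_i (\mu_i^{\clas, n})^{\lambda_i} \sum_{\mathbf{m}} \prod_{j > i} (\mu_j^{\clas, n}/\mu_i^{\clas, n})^{m_{i,j}}$; for $\lambda \in \Lambda_{n, \alpha}$ the row-length and column-strictness constraints on $\mathbf{m}$ are non-binding except on an exponentially small tail (terms with some $m_{i,j}$ of order $\lambda_i$ are suppressed by $(\mu_j/\mu_i)^{\lambda_i}$), so the sum reduces to the geometric series $\prod_{j < k} (1 - \mu_k^{\clas, n}/\mu_j^{\clas, n})^{-1}$. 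Combined with the Taylor expansion $\mu_j^{\clas, n}/\mu_i^{\clas, n} = (\mu_j/\mu_i)(1 + O(n^{\gamma - 1/2}/\delta))$ iterated $O(n^\eta)$ times, this gives uniformly in $\lambda$, $\clas$ and $|\mathbf{m}| \leq N$,
\[
r_{\lambda, \mathbf{m}}^{\clas, n} = \prod_{j < k} \left(1 - \frac{\mu_k}{\mu_j}\right)\!\left(\frac{\mu_k}{\mu_j}\right)^{m_{j, k}} \cdot \bigl(1 + O(n^{-1/2 + \gamma + \eta}/\delta)\bigr),
\]
which matches the eigenvalue of $\phi^{\quantzero}$ at $|\mathbf{m}\rangle$ to leading order.

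Finally I would convert the matrix-element estimates into a trace norm bound. Write $\Delta = \Pi_N[T_\lambda(\rho^{\quantzero, \clas, n}_\lambda) - \phi^{\quantzero}]\Pi_N = \Delta_{\mathrm{diag}} + \Delta_{\mathrm{off}}$. The diagonal piece is controlled by $\|\Delta_{\mathrm{diag}}\|_1 = \sum_\mathbf{m} |\Delta_{\mathbf{m}, \mathbf{m}}| = O(n^{-1/2 + \gamma + \eta}/\delta)$ using the identification above together with $\sum_\mathbf{m} \langle \mathbf{m} | \phi^{\quantzero} | \mathbf{m} \rangle = 1$. The off-diagonal piece is block-diagonal across weight subspaces; within each weight-$w$ block of dimension $K_{\lambda, w}$ (polynomial in $n^\eta$), its entries are bounded by $r_{\lambda, w}^{\clas, n}\, c_n$ via the quasi-orthogonality cited just before Lemma \ref{V_approx}, and $\|\cdot\|_1 \leq \sqrt{\dim} \|\cdot\|_2$ per block combined with $\sum_w K_{\lambda, w} r_{\lambda, w}^{\clas, n} = 1$ yields $\|\Delta_{\mathrm{off}}\|_1 = O(\sqrt{c_n}) = O(n^{(9\eta - 2)/24}/\delta^{1/6})$; together with the exponentially small cutoff tails this gives the stated bound. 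The main obstacle is the uniform asymptotic analysis reproducing the thermal partition function $\prod_{j < k}(1 - \mu_k/\mu_j)$ from $\ml/p^{\quantzero, \clas, n}_\lambda$: this requires Stirling-type estimates on \eqref{mult_lambda} that accommodate both $\lambda$ deviations of order $n^\alpha$ from $n\mu$ and local parameters $\clas$ of order $n^\gamma$, with errors scaling as $1/\delta$ through the ratios of eigenvalues.
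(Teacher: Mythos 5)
Your strategy is essentially the paper's: diagonalise $\rho^{\quantzero,\clas,n}_{\lambda}$ in the (non-orthogonal) basis $|\mathbf{m},\lambda\rangle$ with the eigenvalues \eqref{jesaispas}, truncate to $|\mathbf{m}|\leq n^{\eta}$, transfer to the Fock basis via $V_{\lambda}$ (Lemma \ref{V_approx}), match the weights with the thermal eigenvalues up to $O(n^{-1/2+\gamma+\eta}/\delta)$, and charge the rest to the quasi-orthogonality defect. Two points in your write-up deserve correction, though both are repairable. First, the ``main obstacle'' you flag --- Stirling-type control of $\ml/p^{\quantzero,\clas,n}_{\lambda}$ from \eqref{mult_lambda} --- is not needed at all: since $\rho^{\quantzero,\clas,n}_{\lambda}$ has trace one and the eigenvalue attached to a weight is constant on its weight space, the prefactor is fixed directly by normalisation as the reciprocal of $\sum_{\mathbf{m}\in\lambda}\prod_{i<j}(\mu_j^{\clas,n}/\mu_i^{\clas,n})^{m_{i,j}}$, whose asymptotics is just a geometric series with an exponentially small tail; this is exactly the paper's estimate \eqref{estimate.2}, and no analysis of $\ml$ or $p_{\lambda}$ enters. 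Second, your off-diagonal bookkeeping does not deliver the claimed $O(\sqrt{c_n})$ as written: bounding each entry by $r_{\lambda,w}c_n$ and using $\|\cdot\|_1\leq\sqrt{\dim}\,\|\cdot\|_2$ per weight block leaves uncontrolled factors $K_{\lambda,w}^{3/2}$, and for $d\geq 3$ the weight multiplicities (Kostka numbers) restricted to $|\mathbf{m}|\leq n^{\eta}$ grow like a power of $n^{\eta}$, so the arithmetic does not close. The simpler and correct route is the entrywise bound $\|\Delta_{\mathrm{off}}\|_1\leq\sum_{\mathbf{m}'}r_{\lambda,\mathbf{m}'}\sum_{\mathbf{m}\neq\mathbf{m}'}|\langle\mathbf{m},\lambda|\mathbf{m}',\lambda\rangle|=O(c_n)$ via Corollary \ref{gqo}; the square root in the final rate $n^{(9\eta-2)/24}/\delta^{1/6}$ then comes not from the off-diagonal part but from the fidelity/truncation loss --- the mass of order $c_n$ that $V_{\lambda}$ sends outside the span of the $|\mathbf{m}\rangle$, $|\mathbf{m}|\leq n^{\eta}$, costs $O(\sqrt{c_n})$ in trace norm, which is the content of the paper's estimate \eqref{estimate.3}, $\|T_{\lambda}(|\mathbf{m},\lambda\rangle\langle\mathbf{m},\lambda|)-|\mathbf{m}\rangle\langle\mathbf{m}|\|_1=2\sqrt{1-|\langle\mathbf{m}|V_{\lambda}|\mathbf{m},\lambda\rangle|^2}$. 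With these two adjustments your argument reproduces the lemma with the stated rate.
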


The terms \eqref{deplacement1} and \eqref{deplacement2}  show that the `finite dimensional coherent states' obtained by performing small rotations on the `finite-dimensional vacuum' are uniformly close to their infinite dimensional counterparts, thus justifying the coherent state terminology.
\begin{lem}
\label{ldisplacement}
Let $\epsilon>0$ be such that $2 \beta + \epsilon \leq \eta < 2 / 9$. 

Then, 
\[
\sup_{\|\vec{z}\| \leq n^{\beta}}\,
\sup_{\|\cartan\| \leq n^{-1/2 + 2 \beta}/\delta }\, 
\sup_{\glob\in\Omega_{n,    \beta,\gamma}}\,
\sup_{\lambda\in\Lambda_{n,\alpha}} \,\left\| 
 [D^{\quant +\vec{z}} -
T_{\lambda}\Delta^{\quant +\vec{z},\cartan,n}_{\lambda}T_{\lambda}^*]
(\left\vert \mathbf{0}\right\rangle \left\langle \mathbf{0}\right\vert) 
\right\|_1
= R(n)
\]
with
\begin{multline}
\label{Rn}
R(n)^{2}=O\left(n^{(9\eta -2) / 12}\delta ^{-1/3},n^{-1 + 2\beta + \eta}\delta^{-1}, n^{-1/2 + 3\beta+2\epsilon} \delta^{-3/2}, n^{-1 + \alpha +2 \beta}\delta^{-1}, \right.\\
\left. n^{-1 + \alpha + \eta}\delta^{-1}, n^{-1 + 3\eta}\delta^{-1}, n^{-\beta}\right)
\end{multline}
For estimating the terms (\ref{deplacement1}, \ref{deplacement2}), the case when $\cartan = \vec{0}$ is sufficient. This more general form is useful for the proof of Lemma \ref{lgrouplimit}. The unitary operation is defined as 
$\Delta_{\lambda}^{\vec{\zeta},\xi,n} := {\rm Ad}[U_{\lambda}(\vec{\zeta},\xi,n) ]$ 
with $U(\vec{\zeta},\xi,n))$ the general $SU(d)$ element of \eqref{generalU}.
\end{lem}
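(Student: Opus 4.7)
The plan is to reduce the bound to a comparison of two pure states in $\mathcal{F}$ and then to exploit the explicit expansions of both in compatible bases. First I would observe that both arguments of the trace norm are pure states: $D^{\quant+\vec{z}}(\left\vert\mathbf{0}\right\rangle\left\langle\mathbf{0}\right\vert)$ is a multimode coherent state, while $T_\lambda\Delta^{\quant+\vec{z},\cartan,n}_{\lambda}T_\lambda^{*}(\left\vert\mathbf{0}\right\rangle\left\langle\mathbf{0}\right\vert)=V_\lambda U_\lambda(\quant+\vec{z},\cartan,n)\left\vert\mathbf{0},\lambda\right\rangle\left\langle\mathbf{0},\lambda\right\vert U_\lambda^{*}V_\lambda^{*}$, using that $V_\lambda^{*}\left\vert\mathbf{0}\right\rangle=\left\vert\mathbf{0},\lambda\right\rangle$ by the normalisation in Lemma \ref{V_approx}. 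Writing $\||\psi\rangle\langle\psi|-|\phi\rangle\langle\phi|\|_{1}\leq 2\||\psi\rangle-|\phi\rangle\|$ after a convenient global phase choice, the task becomes to estimate $\|D^{\quant+\vec{z}}|\mathbf{0}\rangle - V_\lambda U_\lambda(\quant+\vec{z},\cartan,n)|\mathbf{0},\lambda\rangle\|^{2}$.

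Next I would expand both vectors in the respective number bases $\{|\mathbf{m}\rangle\}$ and $\{|\mathbf{m},\lambda\rangle\}$. The coherent state has explicit coefficients $c^{\infty}_{\mathbf{m}}(\quant+\vec{z})=\langle\mathbf{m}|D^{\quant+\vec{z}}|\mathbf{0}\rangle$ of Gaussian–Poissonian type, and since $\|\quant+\vec{z}\|\le 2n^{\beta}$ with $\eta>\beta$, the tail $\sum_{|\mathbf{m}|>n^{\eta}}|c^{\infty}_{\mathbf{m}}|^{2}$ is super–polynomially small. On the finite–dimensional side, $U_\lambda(\quant+\vec{z},\cartan,n)|\mathbf{0},\lambda\rangle=\sum_{\mathbf{m}}c^{(n)}_{\mathbf{m}}|\mathbf{m},\lambda\rangle$, and the coefficients $c^{(n)}_{\mathbf{m}}$ are accessible through the identity \eqref{fcoherent} for finite dimensional coherent states. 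The core step of the proof is then to invoke the asymptotic analysis of the combinatorial tools developed in section \ref{sec.technical.proofs} (Lemma \ref{lemtools} and the determinantal formula mentioned after \eqref{fcoherent}) to show that, uniformly in $\lambda\in\Lambda_{n,\alpha}$, $\|\quant+\vec{z}\|\le 2n^\beta$, $\|\cartan\|\le n^{-1/2+2\beta}/\delta$, and $|\mathbf{m}|\le n^{\eta}$, one has $|c^{(n)}_{\mathbf{m}}-c^{\infty}_{\mathbf{m}}|$ bounded by a sum of the error terms appearing in \eqref{Rn}. Each term in $R(n)^{2}$ has a clear origin: $n^{-1+2\beta+\eta}/\delta$ and $n^{-1/2+3\beta+2\epsilon}/\delta^{3/2}$ arise from Taylor–expanding the argument of $U_\lambda$; $n^{-1+\alpha+2\beta}/\delta$ and $n^{-1+\alpha+\eta}/\delta$ from the replacement of the actual row lengths $\lambda_i$ by $n\mu_i$; $n^{-1+3\eta}/\delta$ from higher–order commutators of the generators; and $n^{-\beta}$ from the Gaussian tail beyond the typical set $\|\vec{z}\|\le n^\beta$.

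Finally, I would patch finite and infinite dimensional pictures via Lemma \ref{V_approx}: on the typical region $|\mathbf{m}|\le n^{\eta}$, $V_\lambda|\mathbf{m},\lambda\rangle$ has overlap $(1+(\tilde{C}n)^{(9\eta-2)/12}/\delta^{1/3})^{-1/2}$ with $|\mathbf{m}\rangle$, which contributes the leading $n^{(9\eta-2)/12}\delta^{-1/3}$ term to $R(n)^{2}$. Combining by the triangle inequality the three sources of error — truncation tails, coefficient asymptotics, and isometric mismatch — and using $\|\sum_\mathbf{m}(c^{(n)}_\mathbf{m}-c^{\infty}_\mathbf{m})|\mathbf{m}\rangle\|^{2}\le\sum|c^{(n)}_\mathbf{m}-c^{\infty}_\mathbf{m}|^{2}$, yields the stated bound. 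The principal obstacle is the uniform coefficient convergence $c^{(n)}_{\mathbf{m}}\to c^{\infty}_{\mathbf{m}}$: it requires a careful analysis of the matrix elements of $U_\lambda(\cdot)$ on the non-orthonormal Young-symmetriser basis and constitutes the technical heart of the paper; the remaining steps are essentially bookkeeping, though one must be attentive that every estimate holds uniformly in the five parameters $(\mathbf{m},\lambda,\quant,\vec{z},\cartan)$ over which the supremum is taken.
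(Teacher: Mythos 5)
Your overall route is the same as the paper's: reduce to an overlap of two pure states ($D^{\quant+\vec{z}}(\vert\mathbf{0}\rangle\langle\mathbf{0}\vert)$ versus $V_{\lambda}U_{\lambda}(\quant+\vec{z},\cartan,n)\vert\mathbf{0},\lambda\rangle$), expand both in number bases, control the coefficients via the combinatorial estimates of Lemma \ref{lemtools} and the isometry defect of Lemma \ref{V_approx}, and add a truncation tail. However, there is a genuine gap at the truncation step, which is exactly where the $n^{-\beta}$ term in \eqref{Rn} comes from. You truncate at $\{|\mathbf{m}|\leq n^{\eta}\}$ and argue the discarded coherent-state mass is super-polynomially small (true, since the Poisson mean is $O(n^{2\beta})\ll n^{\eta}$), and then ask for a coefficient comparison $c^{(n)}_{\mathbf{m}}\approx c^{\infty}_{\mathbf{m}}$ \emph{uniformly over all} $|\mathbf{m}|\leq n^{\eta}$. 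But the key estimate \eqref{numer} of Lemma \ref{lemtools} is only valid under the additional hypothesis $m_{i,j}\leq 2|\quantsub_{i,j}+z_{i,j}|n^{\beta+\epsilon}$; it degenerates when some component $|(\quant+\vec{z})_{i,j}|$ is of order one or smaller (e.g.\ if $\zeta_{i,j}+z_{i,j}=0$ the limiting coefficient vanishes and no relative comparison is possible). The paper therefore truncates to the smaller, displacement-adapted set $\mathcal{M}$ of \eqref{adaptedm}, uses \eqref{boundsubset} to pass to that set, and pays the Poisson-tail price \eqref{tail}, which is only $O(n^{-\beta})$ in the unfavourable regime — this is the true origin of the $n^{-\beta}$ term, not a ``Gaussian tail beyond $\|\vec{z}\|\leq n^{\beta}$'' (that tail is the separate term \eqref{reste} in Section \ref{sec.main.steps}, outside this lemma). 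As written, your scheme has no mechanism producing $n^{-\beta}$, and the uniform comparison you invoke does not hold on all of $\{|\mathbf{m}|\leq n^{\eta}\}$.

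A second, smaller defect: you propose to conclude via $\sum_{\mathbf{m}}|c^{(n)}_{\mathbf{m}}-c^{\infty}_{\mathbf{m}}|^{2}$ with a pointwise \emph{absolute} bound on each difference. Since there are polynomially many $\mathbf{m}$ with $|\mathbf{m}|\leq n^{\eta}$, an absolute bound of size $R(n)^{2}$ per coefficient does not yield $R(n)^{2}$ for the sum; what Lemma \ref{lemtools} actually delivers (on $\mathcal{M}$) is a \emph{multiplicative} statement $c^{(n)}_{\mathbf{m}}=e^{i\phi}c^{\infty}_{\mathbf{m}}\,r(n)$ with $r(n)$ uniform, which is then fed directly into the inner product $\sum_{\mathbf{m}\in\mathcal{M}}\overline{c^{\infty}_{\mathbf{m}}}c^{(n)}_{\mathbf{m}}$ together with the normalisation estimate \eqref{denom54} and the $(1+O(n^{(9\eta-2)/12}\delta^{-1/3}))$ factor from Lemma \ref{V_approx}. (Minor misattributions such as tracing $n^{-1+3\eta}\delta^{-1}$ to ``higher-order commutators'' — it comes from the non-orthogonality corrections in \eqref{denom54} — are harmless since you defer to Lemma \ref{lemtools}, but the two points above need to be repaired for the bound \eqref{Rn} to come out.)
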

Finally \eqref{groupelimite} shows that the `finite-dimensional' displacement operators multiply as the corresponding displacement operators when acting on the vacuum. 
\begin{lem}
\label{lgrouplimit}
With the above definitions, under the same hypotheses as in Lemma \ref{ldisplacement}, we have
\[
\sup_{\|\vec{z}\| \leq n^{\beta}}\sup_{\glob\in\Omega_{n,\beta,\gamma}} \sup_{\lambda\in\Lambda_{n,
\alpha}}
\left\|
[\Delta^{\quant +\vec{z},n}_{\lambda}
-
\Delta^{\quant,n}_{\lambda}
\Delta^{\vec{z},n}_{\lambda}]
(|\bf{0}, {\lambda}\rangle \langle\bf{0}, {\lambda}|)
\right\|_1 
= R(n)
\]
with $R(n)$ given by equation \eqref{Rn}.
\end{lem}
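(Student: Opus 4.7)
My plan is to relate $\Delta^{\vec\zeta+\vec z,n}_\lambda$ and $\Delta^{\vec\zeta,n}_\lambda\Delta^{\vec z,n}_\lambda$ via a Baker--Campbell--Hausdorff identification at the $SU(d)$ level, and then use the general--$\xi$ form of Lemma \ref{ldisplacement} twice to push the comparison to the Fock space, where displacements compose additively, $D^{\vec\zeta+\vec z}=D^{\vec\zeta}D^{\vec z}$, as automorphisms.

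The first step is the single-particle BCH computation. Writing $U(\vec\zeta/\sqrt n)=\exp(X_{\vec\zeta}/\sqrt n)$ with $X_{\vec\zeta}$ the anti-Hermitian generator in \eqref{rho.theta}, one has
\[
U(\vec\zeta/\sqrt n)\,U(\vec z/\sqrt n)=\exp\!\left(\tfrac{1}{\sqrt n}(X_{\vec\zeta}+X_{\vec z})+\tfrac{1}{2n}[X_{\vec\zeta},X_{\vec z}]+O(n^{-3/2})\right).
\]
Decomposing $[X_{\vec\zeta},X_{\vec z}]$ according to the Cartan direct sum $L^{2}(\rho_0)=\mathcal{H}_{\rho_0}\oplus\mathcal{H}_{\rho_0}^{\perp}$ of \eqref{eq.orthog.decomp}, the off--diagonal component is absorbed into a corrected $\vec\zeta'$ and the Cartan component into a parameter $\xi$, giving the exact identity
\[
U(\vec\zeta/\sqrt n)\,U(\vec z/\sqrt n)=U(\vec\zeta',\xi,n),\qquad \vec\zeta'=\vec\zeta+\vec z+r_1,\quad \xi=r_2,
\]
with $\|r_1\|,\|r_2\|=O(\|X_{\vec\zeta}\|\|X_{\vec z}\|/\sqrt n)=O(n^{2\beta-1/2}/\delta)$ and higher--order BCH terms at most $O(n^{3\beta-1}/\delta^{3/2})$. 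Crucially, $\xi$ satisfies the hypothesis $\|\xi\|\le n^{-1/2+2\beta}/\delta$ of Lemma \ref{ldisplacement}, while $\vec\zeta'$ can be split as a sum of two vectors of norm $\le n^\beta$ before invoking that lemma. Lifting to the irreducible representation $\pi_\lambda$ gives the key identity $\Delta^{\vec\zeta,n}_\lambda\Delta^{\vec z,n}_\lambda=\Delta^{\vec\zeta',\xi,n}_\lambda$.

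Next, the isometry $V_\lambda$ satisfies $V_\lambda\,|\mathbf{0},\lambda\rangle=|\mathbf{0}\rangle$ and $T_\lambda^*T_\lambda=\mathrm{Id}$ on $\mathcal{B}(\mathcal{H}_\lambda)$, so trace norms of operators on $\mathcal{H}_\lambda$ are preserved under $X\mapsto V_\lambda X V_\lambda^*$:
\begin{align*}
&\bigl\|[\Delta^{\vec\zeta+\vec z,n}_\lambda-\Delta^{\vec\zeta',\xi,n}_\lambda](|\mathbf{0},\lambda\rangle\langle\mathbf{0},\lambda|)\bigr\|_1 \\
&\qquad =\bigl\|[T_\lambda\Delta^{\vec\zeta+\vec z,n}_\lambda T_\lambda^*-T_\lambda\Delta^{\vec\zeta',\xi,n}_\lambda T_\lambda^*](|\mathbf{0}\rangle\langle\mathbf{0}|)\bigr\|_1.
\end{align*}
Interposing $D^{\vec\zeta+\vec z}$ and $D^{\vec\zeta'}$ and applying the triangle inequality splits the right side into three pieces. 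The first and third are bounded by Lemma \ref{ldisplacement} applied once with $\xi=0$ and once with the BCH--generated $\xi$, each contributing at most $R(n)$. The middle piece $\|[D^{\vec\zeta+\vec z}-D^{\vec\zeta'}](|\mathbf{0}\rangle\langle\mathbf{0}|)\|_1$ is a pure coherent--state computation: the two outputs are coherent states whose labels differ by $O(\|r_1\|)$, and $\||w_1\rangle\langle w_1|-|w_2\rangle\langle w_2|\|_1\le 2\|w_1-w_2\|$ for small separations, yielding a contribution of order $n^{2\beta-1/2}/\delta$. Since $\beta<1/9<1/5$ we have $2\beta-1/2<-\beta/2$, so this residual is dominated by the $n^{-\beta}$ term already present inside $R(n)^2$ and requires no separate estimate.

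The main obstacle is controlling the $\delta$--dependence across the BCH expansion: the generators $T_{jk}/\sqrt{\mu_j-\mu_k}$ contributing to $X_{\vec\zeta}$ have operator norms of order $1/\sqrt\delta$, so each commutator in the BCH series costs an additional inverse power of $\sqrt\delta$, and one must verify uniformly over $\lambda\in\Lambda_{n,\alpha}$ and $\theta\in\Omega_{n,\beta,\gamma}$ that the truncation errors still fit inside the $R(n)$ bound. Uniformity in $\vec z$ is inherited directly from Lemma \ref{ldisplacement}, whose conclusion is already uniform over $\|\vec z\|\le n^\beta$.
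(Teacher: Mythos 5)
Your proposal is correct and follows essentially the same route as the paper: there, too, the key step is the group-theoretic observation that $U(-(\vec{\zeta}+\vec{z}),n)U(\vec{\zeta},n)U(\vec{z},n)=U(\vec{r},\vec{s},n)$ with $\|\vec{r}\|+\|\vec{s}\|=O(n^{-1/2+2\beta}/\delta)$ (proved via a smooth logarithm and Taylor expansion rather than an explicit BCH series, which is the same idea), followed by isometry invariance of $T_\lambda$, the general-$\xi$ form of Lemma \ref{ldisplacement}, and a coherent-state overlap bound. The only difference is bookkeeping: the paper conjugates by $\Delta^{-(\vec{\zeta}+\vec{z}),n}_{\lambda}$ and applies Lemma \ref{ldisplacement} once with the small parameters $(\vec{r},\vec{s})$, whereas you triangulate through $D^{\vec{\zeta}+\vec{z}}$ and $D^{\vec{\zeta}'}$ and apply it twice with full-size parameters.
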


From the last three lemmas, together with the bound on the remainder integral \eqref{reste} we obtain the following lemma which can be plugged into the 
bound (\ref{demidev}):
\begin{lem}
\label{toutquantique}
With the above notations 
under the same hypotheses as in Lemma \ref{ldisplacement}, we have
\[
\sup_{\glob\in\Omega_{n,\beta,\gamma}}\sup_{\lambda\in\Lambda_{n,\alpha}}
\|\phi^{\quant} - \phi^{\glob,n}_{\lambda}\| = R(n) + O(n^{-1/2 + \gamma + \eta}/\delta + n^{(9\eta - 2)/24}/\delta ^{1/6}) 
\]
with $R(n)$ given by equation \eqref{Rn}.
\end{lem}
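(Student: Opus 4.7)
The plan is to retrace, for a single block indexed by a typical diagram $\lambda\in\Lambda_{n,\alpha}$, the same chain of triangle inequalities that was used in the body of this section to derive (\ref{develop}) from the full sum over $\lambda$. Writing $\phi^{\theta,n}_\lambda = T_\lambda(\rho^{\theta,n}_\lambda) = (T_\lambda\Delta^{\vec\zeta,n}_\lambda T_\lambda^*)(T_\lambda(\rho^{\vec 0,\vec u,n}_\lambda))$ and $\phi^{\vec\zeta}=D^{\vec\zeta}(\phi^{\vec 0})$, inserting $\pm D^{\vec\zeta}(T_\lambda(\rho^{\vec 0,\vec u,n}_\lambda))$ and using that $D^{\vec\zeta}$ is an isometry in trace norm yields
$$
\|\phi^{\vec\zeta} - \phi^{\theta,n}_\lambda\|_1 \leq 3\,\|\phi^{\vec 0} - T_\lambda(\rho^{\vec 0,\vec u,n}_\lambda)\|_1 + \|[D^{\vec\zeta} - T_\lambda\Delta^{\vec\zeta,n}_\lambda T_\lambda^*](\phi^{\vec 0})\|_1.
$$
The first term is controlled, uniformly over $\theta\in\Omega_{n,\beta,\gamma}$ and $\lambda\in\Lambda_{n,\alpha}$, by Lemma \ref{len0} and produces the additive $O(n^{-1/2+\gamma+\eta}/\delta + n^{(9\eta-2)/24}/\delta^{1/6})$ that appears in the statement.

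For the second term I would exploit the integral representation (\ref{phi_integrale_gaussienne}) of the thermal state $\phi^{\vec 0}=\bigotimes_{j<k}\Phi_{j,k}$ as a Gaussian mixture of multimode coherent states $|\vec z\rangle\langle\vec z| = D^{\vec z}(|\vec 0\rangle\langle\vec 0|)$ with weight $f(\vec z)$, move the difference of maps inside the integral, and split at $\|\vec z\|=n^\beta$. The tail contributes at most $2\int_{\|\vec z\|>n^\beta}f(\vec z)\,d\vec z = O(\exp(-\delta n^{2\beta}))$, which is absorbed into $R(n)$ under the standing hypothesis $n^{2\beta}\geq 2/\delta$. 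On the bulk piece I would decompose
$$
D^{\vec\zeta}D^{\vec z} - T_\lambda\Delta^{\vec\zeta,n}_\lambda T_\lambda^* D^{\vec z} = [D^{\vec\zeta+\vec z} - T_\lambda\Delta^{\vec\zeta+\vec z,n}_\lambda T_\lambda^*] + T_\lambda[\Delta^{\vec\zeta+\vec z,n}_\lambda - \Delta^{\vec\zeta,n}_\lambda\Delta^{\vec z,n}_\lambda]T_\lambda^* + T_\lambda\Delta^{\vec\zeta,n}_\lambda[\Delta^{\vec z,n}_\lambda T_\lambda^* - T_\lambda^* D^{\vec z}],
$$
using $D^{\vec\zeta}D^{\vec z}=D^{\vec\zeta+\vec z}$ and $T_\lambda^*T_\lambda=\mathrm{Id}_{\mathcal H_\lambda}$. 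Evaluated on $|\vec 0\rangle\langle\vec 0|$, the outer two pieces are exactly the quantity bounded by Lemma \ref{ldisplacement} (the third one being the special case with $\vec\zeta=\vec 0$, $\cartan=\vec 0$ and rotation parameter $\vec z$), while the middle piece is the quantity bounded by Lemma \ref{lgrouplimit}; each therefore contributes $R(n)$.

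The main bookkeeping obstacle will be checking that the three suprema, over $\|\vec z\|\leq n^\beta$, over $\theta\in\Omega_{n,\beta,\gamma}$ and over $\lambda\in\Lambda_{n,\alpha}$, all remain inside the regime of validity of Lemmas \ref{ldisplacement} and \ref{lgrouplimit}. The shifted rotation parameter $\vec\zeta+\vec z$ satisfies $\|\vec\zeta+\vec z\|_\infty\leq 2n^\beta$, so the hypothesis $2\beta+\epsilon\leq\eta<2/9$ is preserved and the rates contained in $R(n)$ are unaffected up to a harmless constant factor; the auxiliary parameter $\cartan$ needed in Lemma \ref{ldisplacement} is set to $\vec 0$ in both applications. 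Summing the contribution from Lemma \ref{len0}, the three $R(n)$ contributions from the bulk decomposition, and the Gaussian tail then yields the stated bound.
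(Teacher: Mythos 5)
Your proposal is correct and takes essentially the same route as the paper: the same insertion of $D^{\vec\zeta}\bigl(T_{\lambda}(\rho^{\vec 0,\vec u,n}_{\lambda})\bigr)$ with the trace-norm isometry of $D^{\vec\zeta}$, the same Gaussian-mixture representation \eqref{phi_integrale_gaussienne} split at $\|\vec z\|=n^{\beta}$, and the same three-term decomposition whose outer pieces are handled by Lemma \ref{ldisplacement} and whose middle piece is handled by Lemma \ref{lgrouplimit}, with Lemma \ref{len0} supplying the additive $O(n^{-1/2+\gamma+\eta}/\delta+n^{(9\eta-2)/24}/\delta^{1/6})$ term and the Gaussian tail absorbed into $R(n)$. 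This is precisely how the paper assembles the bound, so there is nothing to add.
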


Gathering all these results yield the following theorem which provides the bound  
\eqref{Tn}.
\begin{thm}\label{thmpoubelle}
For any $\delta > 0$, $1 > \alpha > 1/2$, $\eta < 2/9$, $\epsilon > 0$, $\beta < (\eta + \epsilon) / 2$, $\gamma < 1/4$,  
the sequence of channels $T_n$ satisfies
\begin{multline}
\sup_{ \glob\in \Omega_{n,\beta,\gamma}}\left\| T_n(\rho^{ \glob,n}) -
\phi  \right\|_1 =O 
(n^{-1/2 + \beta + \eta/2}\delta^{-1/2} +  n^{-1/4 + \beta/2} \delta^{-1/4} +
  n^{-1/2 + \alpha/2 + \eta/2}\delta^{-1/2} + \\
 n^{-1/2 + 3\eta/2}\delta^{-1/2} +  n^{-\beta/2} + n^{-1/2 + \gamma + \eta}/\delta + n^{(9\eta - 2)/24}/\delta ^{1/6}) 
\end{multline}
With any explicit $\alpha, \beta, \gamma, \delta$, we get an explicit polynomial rate.
\end{thm}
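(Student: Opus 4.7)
The plan is to combine the triangle-inequality decomposition \eqref{develop}--\eqref{reste} (which has already been carried out in the exposition preceding the theorem) with the individual bounds supplied by Lemmas \ref{lclassical}, \ref{lconcentration}, \ref{len0}, \ref{ldisplacement} and \ref{lgrouplimit}, together with a direct estimate on the Gaussian tail \eqref{reste}. No new analytical content is required beyond bookkeeping: the hard work has been done in establishing each of the component lemmas, and this theorem is essentially their common corollary.

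First I would apply the three ``non-displacement'' bounds. Lemma \ref{lclassical} controls the purely classical summand \eqref{classique} and contributes the $n^{-1/4+\gamma}/\delta$ type rates. Lemma \ref{lconcentration} bounds the contribution of atypical Young diagrams \eqref{concentration} by $O(n^{d^{2}}\exp(-n^{2\alpha-1}/2))$, which is super-polynomially small under the hypothesis $\alpha > 1/2$ and may therefore be absorbed into any polynomial rate appearing in the final bound. Lemma \ref{len0} handles the quantum term at $\vec{\zeta}=\vec{0}$ given by \eqref{en0}, producing the $n^{-1/2+\gamma+\eta}/\delta$ and $n^{(9\eta-2)/24}/\delta^{1/6}$ contributions in the statement.

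Next I would treat the three displacement-related terms \eqref{deplacement1}, \eqref{deplacement2}, and \eqref{groupelimite}. Each of these is bounded uniformly in $\glob\in\Paramglob_{n,\beta,\gamma}$ and $\lambda\in\Lambda_{n,\alpha}$ by the common quantity $R(n)$ of \eqref{Rn}, via Lemmas \ref{ldisplacement} and \ref{lgrouplimit}. Taking the square root of each of the seven summands inside $R(n)^{2}$ produces precisely the seven matching rates in the theorem's bound. The residual Gaussian tail \eqref{reste} is handled by a direct subgaussian estimate: $2\int_{\|\vec{z}\|\geq n^{\beta}}f(\vec{z})\,\dd\vec{z}$ decays as $\exp(-c\delta n^{2\beta})$, which under $n^{2\beta}\delta > 2$ is smaller than any polynomial in $n$ and so is dominated by the other contributions.

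The only point that deserves some care is verifying that the hypotheses of the various lemmas can be enforced simultaneously in the stated parameter window. The constraints $\alpha > 1/2$ (required for Lemma \ref{lconcentration}), $\eta < 2/9$ (needed for the quasi-orthogonality of Lemma \ref{V_approx}, hence in Lemmas \ref{len0} and \ref{ldisplacement}), $\gamma < 1/4$ (for Lemma \ref{lclassical}), and the standing assumption $2\beta + \epsilon \leq \eta$ with $\beta < (\eta+\epsilon)/2$ of Lemma \ref{ldisplacement} are mutually compatible: for example, choosing $\epsilon$ small and $\beta$ just below $\eta/2$ satisfies every lemma's hypothesis inside $\Paramglob_{n,\beta,\gamma}$ and uniformly over $\lambda\in\Lambda_{n,\alpha}$. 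Once this consistency is verified, summing the bounds term by term and discarding the super-polynomially small concentration contribution immediately yields the explicit polynomial rate stated in the theorem. I do not foresee a substantive obstacle; the main risk is purely computational, in aligning exponents correctly across the several lemmas.
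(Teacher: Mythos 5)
Your proposal matches the paper's own proof, which consists of nothing more than what you describe: plugging Lemmas \ref{lclassical}, \ref{lconcentration}, \ref{len0}, \ref{ldisplacement} and \ref{lgrouplimit} (the latter two through the common bound $R(n)$ of \eqref{Rn}, combined in Lemma \ref{toutquantique}) together with the subgaussian tail estimate for \eqref{reste} into the decomposition \eqref{develop}--\eqref{reste}, and checking that the parameter constraints of the lemmas are mutually compatible. The only caveat is that the square roots of the summands of $R(n)^{2}$ do not all literally coincide with the rates displayed in the theorem (e.g.\ $n^{-1/4+3\beta/2+\epsilon}\delta^{-3/4}$ versus the stated $n^{-1/4+\beta/2}\delta^{-1/4}$), but this imprecision sits in the paper's own statement rather than in your argument, which follows exactly the intended route.
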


\subsection{Definition of $S_n$ and proof of its efficiency}

The channel $S_{n}$ is essentially the inverse of $T_{n}$ and as we shall 
see, (\ref{Sn}) can be deduced from (\ref{Tn}).

On the classical side we need a Markov kernel completing the equivalence between the
family $p_{\lambda}^{\clas,n}$ and $ \mathcal{N}(\clas,V_{\mu})$. Let $\sigma^n$ be
defined by
\begin{equation}
\label{sigmalambda}
\sigma^n: x \in \mathbb{R}^{d-1} \mapsto \delta_{\lambda_x}  
\end{equation}
where $\lambda_x$ is the Young diagram such that $\sum_{1}^d \lambda_i = n$, 
and $ |n^{1/2}x_i+n\mu_i - \lambda_i| < 1/2 $, for 
$2\leq i\leq d$. No such diagram exists, we set $\lambda_{x}$ to any admissible value, for example $(n,0,\dots,0)$. 
Notice that
with (\ref{taulambda}), $\sigma^n \circ\tau^n\circ \sigma^n = \sigma^n$. Moreover any
probability on the $\lambda$ such that $\sum_{1}^d \lambda_i = n$ is in the
image of $\sigma^n$, so that $\sigma^n \circ\tau^n (p^{\glob,n})= p^{\glob,n}$. 
\begin{lem}
\label{sigma}
With the above definitions, for any $\epsilon$, 
we have 
\[
\sup_{ \| \clas\| \leq n^{\gamma} }\left\|\sigma^n \mathcal{N}(\clas,V_{\mu}) -
p^{\clas,n}\right\|_1  =
\leq C 
O\left( n^{-1/2+\epsilon}/\delta ,  n^{-1/4 + \gamma}/\delta\right) .
\]
\end{lem}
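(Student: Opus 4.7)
The plan is to derive this lemma directly from Lemma \ref{lclassical} by exploiting the fact that $\sigma^{n}$ is a Markov kernel and therefore contracts $L^{1}$ distances between probability measures. The key identity set up in the paragraph preceding the lemma is that $\sigma^{n}\circ \tau^{n}$ acts as the identity on probabilities supported on Young diagrams with $\sum_{i}\lambda_{i}=n$; in particular $\sigma^{n}\circ \tau^{n}(p^{\vec{u},n}) = p^{\vec{u},n}$. Hence
\[
\sigma^{n}\mathcal{N}(\vec{u},V_{\mu}) - p^{\vec{u},n}
= \sigma^{n}\Bigl(\mathcal{N}(\vec{u},V_{\mu}) - \tau^{n}(p^{\vec{u},n})\Bigr)
= \sigma^{n}\Bigl(\mathcal{N}(\vec{u},V_{\mu}) - \sum_{\lambda}b_{\lambda}^{\vec{u},n}\Bigr),
\]
where in the last step we unfolded $\tau^{n}(p^{\vec{u},n}) = \sum_{\lambda} p_{\lambda}^{\vec{u},n} \tau_{\lambda}^{n}=\sum_{\lambda} b_{\lambda}^{\vec{u},n}$ (cf.\ Definition \ref{def.markov.tau}).

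Next I would invoke the fact that $\sigma^{n}$, being a randomization in the sense of Section \ref{do}, corresponds to a stochastic operator on $L^{1}$ and is therefore an $L^{1}$-contraction on signed measures: $\|\sigma^{n}f\|_{1}\leq \|f\|_{1}$. Applying this contraction to the displayed identity immediately gives
\[
\bigl\|\sigma^{n}\mathcal{N}(\vec{u},V_{\mu}) - p^{\vec{u},n}\bigr\|_{1}
\leq \Bigl\|\mathcal{N}(\vec{u},V_{\mu}) - \sum_{\lambda}b_{\lambda}^{\vec{u},n}\Bigr\|_{1}.
\]
Taking the supremum over $\|\vec{u}\|\leq n^{\gamma}$ and plugging in the bound of Lemma \ref{lclassical} yields the desired rate $O(n^{-1/2+\epsilon}/\delta,\, n^{-1/4+\gamma}/\delta)$.

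There is essentially no obstacle here since the lemma is really a corollary of Lemma \ref{lclassical}; the only point that needs a line of justification is the $L^{1}$-contractivity of $\sigma^{n}$, which follows from Definition \ref{def.stochoperator} once one observes that $\sigma^{n}$ sends a probability density on $\mathbb{R}^{d-1}$ to a probability measure on diagrams via the measurable partition $\{A_{\lambda,n}\}_{\lambda}$ (pairwise disjoint up to boundary sets of Lebesgue measure zero, and covering $\mathbb{R}^{d-1}$ up to the atypical diagrams which carry negligible mass). One should also verify that the range of $\sigma^{n}$ is contained in diagrams with $\sum\lambda_{i}=n$ (which is how $\sigma^{n}\tau^{n}$ becomes the identity on such probabilities); this is guaranteed by the definition of $\sigma^{n}$ in \eqref{sigmalambda}.
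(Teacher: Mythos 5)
Your proof is correct and is essentially identical to the paper's own argument: both rely on the identity $\sigma^{n}\tau^{n}p^{\vec{u},n}=p^{\vec{u},n}$ together with the $L^{1}$-contractivity of the Markov kernel $\sigma^{n}$, reducing the bound to Lemma \ref{lclassical}. The extra remarks you add on why $\sigma^{n}$ is a contraction are a harmless elaboration of what the paper states without proof.
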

{\it Proof.} See end of section \ref{proofs.classical.lemmas}.

\qed

The channel $S_n$ is given by the following sequence of operations acting on the two spaces of the product $ L^{1}( \mathbb{R}^{d-1})\otimes \mathcal{T}_1(\mathcal{F}) $. Given a sample from the probability distribution $N(\clas,V_{\mu})$, we use the Markov kernel $\sigma^{n}$ to produce a Young diagram $\lambda$. Conditional on $\lambda$ we send the quantum part through the  channel 
\[
S_{\lambda}: \phi 
\mapsto 
\tilde{S}_{\lambda}(\phi)\otimes \frac{{\bf 1}_{\mathcal{K}_{\lambda}} }{M_n(\lambda)}
\]
with 
\[
\tilde{S}_{\lambda}:  \phi 
\mapsto 
T_{\lambda}^* \phi  +
(1-\Tr(T_{\lambda}^*(\phi))) |{\bf
0}, {\lambda}\rangle\langle{\bf 0}, {\lambda}|.
\]
The second term is rather arbitrary and insures that $\tilde{S}_{\lambda}$ is trace preserving map. 
What is important is that for any density operator $\rho_{\lambda}$ on the block 
$\lambda$, the operator $\tilde{S}_{\lambda}$ reverts the action of 
$T_{\lambda}$:
\begin{align*}
\tilde{S}_{\lambda} T_{\lambda}
(\rho_{\lambda})  & =
T^*_{\lambda}T_{\lambda}(\rho_{\lambda}) + (1 -
\Tr(T_{\lambda}^*T_{\lambda}(\rho_{\lambda})))|{\bf
0},{\lambda}\rangle\langle{\bf 0},{\lambda}| \\
& = \rho_{\lambda} +  (1 -
\Tr(\rho_{\lambda}))|{\bf
0},{\lambda}\rangle\langle{\bf 0},{\lambda}| \\
& = \rho_{\lambda}.
\end{align*} 
Now
\begin{align*}
S_n( \mathcal{N}(\clas,V_{\mu}) \otimes\phi^{\quant}) &= 
\bigoplus_{\lambda}
[\sigma^{n} \mathcal{N}(\clas,V_{\mu})](\lambda) \tilde{S}_{\lambda}(\phi^{\vec{\zeta}})\otimes
\frac{{\bf 1}_{\mathcal{K}_{\lambda} }}{M_n(\lambda)}.
\end{align*}
and with the notation $\sigma^{n} \mathcal{N}^{\clas}_{\lambda} : =
[\sigma^{n}\mathcal{N}(\clas,V_{\mu}))](\lambda)$ and $q^{\clas,n}_{\lambda} :=
\min(\sigma^{n} \mathcal{N}^{\clas}_{\lambda}, p^{\clas,n}_{\lambda})$
we have 
\begin{align*}
& S_n(\phi^{\quant}\otimes \mathcal{N}(\clas,V_{\mu})) - \rho^{\glob,n} 
\\
&\quad =
\bigoplus_{\lambda}
\left\{q^{\clas,n}_{\lambda}
(\tilde{S}_{\lambda}(\phi^{\quant})-\rho^{\glob,n}_{\lambda}) + (\sigma^{n}
\mathcal{N}^{\clas}_{\lambda} -
q^{\clas,n}_{\lambda})\tilde{S}_{\lambda}(\phi^{\quant}) -
(p^{\clas,n}_{\lambda} - q^{\clas,n}_{\lambda})\rho^{\glob,n}_{\lambda} \right\}\otimes
\frac{{\bf 1}_{
\mathcal{K}_{\lambda}}}{M_n(\lambda)}.
\end{align*} 
Taking $L^1$ norms, and using that all $\phi$'s and $\rho$'s have trace $1$
and that channels (such as $\tilde{S}_{\lambda}$) are trace preserving, we
get the bound:
\begin{align*}
\left\| S_n(\phi^{\quant}\otimes \mathcal{N}(\clas,V_{\mu})) - \rho^{\glob,n}
\right\|_1 & \leq \sum_{\lambda} \left\|q^{\clas,n}_{\lambda}
(\tilde{S}_{\lambda}(\phi^{\quant})-\rho^{\glob,n}_{\lambda})   \right\|_1 +
\sum_{\lambda} \left| \sigma\mathcal{N}^{\clas}_{\lambda} - p^{\clas,n}_{\lambda}
\right|
\\
& \leq 2\sum_{\lambda\not\in \Lambda_{n,\alpha}} q^{\clas,n}_{\lambda} +
\sup_{\lambda\in\Lambda_{n,\alpha}}  \left\|
\tilde{S}_{\lambda}(\phi^{\quant})-\rho^{\glob,n}_{\lambda}   \right\|_1 +
\left\|\sigma^{n} \mathcal{N}(\clas,V_{\mu}) -
p^{\clas,n}\right\|_1
\\
& \leq 2\sum_{\lambda\not\in \Lambda_{n,\alpha}} q^{\clas,n}_{\lambda} +
\sup_{\lambda\in\Lambda_{n,\alpha}}  \left\|
\phi^{\quant} - T_{\lambda}(\rho^{\glob,n}_{\lambda})  \right\|_1 +
\left\|\sigma^{n} \mathcal{N}(\clas,V_{\mu}) -
p^{\clas,n}\right\|_1
. 
\end{align*}
Now the first term is 
 smaller than the
remainder term of the gaussian outside a ball whose radius is $n^{\alpha}$. Hence this term is going to zero faster than any polynomial, independently on $\delta$ and $\clas$ for $\| \clas\| \leq n^{\gamma}$. 
The second term is treated in Lemma \ref{toutquantique} (recalling that $\phi^{\glob,n}_{\lambda} = T_{\lambda}(\rho^{\glob,n}_{\lambda})$), and the third term is treated in Lemma \ref{sigma}. 

\smallskip

This ends the proof of (\ref{Sn}). 

\qed
\section{Technical proofs}
\label{sec.technical.proofs}
\subsection{Combinatorial and representation theoretical tools}
\label{technical_tools}

Here we continue the analysis of the $SU(d)$ irreducible representations $(\pi_{\lambda}, \mathcal{H}_{\lambda})$ started in section \ref{subsec.irrep.1}. The purpose of this section is to provide good estimates of quantities of the type $\langle \mathbf{m}, {\lambda} \mid \pi_{\lambda}(U) \mid \mathbf{l}, {\lambda} \rangle$ which will be needed in the proofs of Lemmas \ref{non-orth} and \ref{ldisplacement}. 

We shall use  the following form of a general $SU(d)$ element and the 
shorthand notations 
\begin{eqnarray}
\label{generalU}
U(\quant, \cartan) & := & 
\exp \left[ i \left( \sum_{i=1}^{d-1} \cartansub_i H_i + \sum_{1\leq j<k\leq d} \frac{{\rm Re}(\quantsub_{j,k}) T_{j,k} + {\rm Im}(\quantsub_{j,k}) T_{k,j}}{\sqrt{\mu_j - \mu_k} }\right)  \right] ,\nonumber\\
U(\quant, \cartan, n) & :=& U(\quant/\sqrt{n}, \cartan/\sqrt{n}),\quad 
U(\quant) := U(\quant,\cartanzero),\quad 
U(\quant,n) := U(\quant/\sqrt{n}),
\end{eqnarray}
where $H_{i}$ and $T_{i,j}$ are the generators of $SU(d)$ defined by 
\begin{eqnarray}\label{generators_algebra}
 H_{j}  &= & E_{j,j} - E_{j+1,j+1} \qquad  \text{for}~j\leq d-1  ; \nonumber \\
T_{j,k} &= & i E_{j,k} - i E_{k,j}  \qquad \text{for} ~ 1\leq j<k \leq d; \nonumber \\
T_{k,j} &=  & E_{j,k} + E_{k,j}   \qquad \text{for}     1\leq j<k \leq d.
\end{eqnarray}  
with $E_{i,j}$ the matrix with entry $(i,j)$ equal to $1$, and all others equal to $0$.



\medskip

We first introduce some new notations and remind the reader about the already existing ones. 

1) We write $\lc$ for the length of the column $c$ in the Young diagram $\lambda$. There are then $\lambda _i - \lambda _{i+1} $ columns such that $\lc = i$. An alternative definition is $l(c) = \inf\{ i : \lambda_i\geq c\}$. 

2) Recall that we denote by $\Va$ the basis vectors $f_{a(1)} \otimes \dots \otimes f_{a(n)}$, and to each vector we associate a Young tableau $t_{\bf a}$ where the indices $a(i)$ fill the boxes of a diagram $\lambda$ in a particular way. 
We denote by $\tca$ the column $c$ of $t_{\bf a}$, i.e. the function 
$t_{\bf a}^{c}: \{1,\dots,\lc\} \to\{1,\dots, d\}$ that associates to the row number $r$ the value of the entry of that Young tableau in column  $c$, row $r$. For example, if $t_{\bf a} = \tiny{\young(221,21)} $  we get the values:
\begin{align*}
t_{\bf a}^{1}(1) & = 2, & t_{\bf a}^{1}(2) & = 2, & t_{\bf a}^{2}(1) & = 2, & t_{\bf a}^{2}(2) & = 1, & t_{\bf a}^{3}(1) & = 1. 
\end{align*}
We shall often be interested in the image $\tca(\{1,\dots, \lc\})$ as unordered set, 
or compare $\tca$ to ${\rm Id}^c$, the identity function on the integers $\{1,\dots, l(c)\}$.

3) Recall also that $\mathcal{H}_{\lambda}$ is spanned by the vectors 
$y_{\lambda} f_{\bf a}$ for which $t_{\bf a}$ is a semistandard Young tableau, and 
$y_{\lambda} = q_{\lambda}p_{\lambda}$ is the Young symmetriser 
(cf. Theorem \ref{th.basis.irrep}). If $t_{\bf a}$ is semistandard then we can use the alternative notation $f_{\bf m}$ for $f_{\bf a}$ since ${\bf a}$ is in one-to-one correspondence  with  ${\bf m}= \{m_{i,j} :  1\leq i<j\leq d\} $, where $m_{i,j}$ is the number of $j$'s in the row $i$ of $t_{\bf a}$. The normalised vectors are 
$$
\left\vert {\bf m}, \lambda \right\rangle := 
y_{\lambda} f_{\bf m} /\| y_{\lambda} f_{\bf m} \|. 
$$

4) Let $ \mathcal{O}_{\lambda}({\bf m})$ be the orbit of $f_{\bf m}$ under the 
subgroup $\mathcal{R}_{\lambda}$ of row permutations. This consists of vectors 
$\Vb$ which have exactly $m_{i,j}$ boxes with $j$ in row $i$, and the rest are $i$. In particular, row $i$ has no entries smaller than $i$. Since the action of permutations is transitive, we have 
\begin{equation}\label{plambda.orbit.decomposition}
p_{\lambda} \Vm = \sum_{\sigma \in
\mathcal{R}_{\lambda}}  f_{{\bf a}\circ \sigma}=
\sum_{\Vb\in \mathcal{O}_{\lambda}({\bf m})}
\frac{\# \mathcal{R}_{\lambda}}{\# \mathcal{O}_{\lambda}({\bf m})} \Vb.
\end{equation}

5) Since we antisymmetrize with $q_{\lambda}$, we are only interested in the 
$t_{\bf a}$ (not necessarily semistandard) which do not have two equal entries in the same column. Such  tableaux $t_{\bf a}$ (or vectors $f_{\bf a}$) shall be called \emph{admissible} and their set is denoted $\mathcal{V}$.

6) For any $f_{\bf a}\in\mathcal{O}_{\lambda}({\bf m})$ we define 
$$
\Gamma(\Va)  := |{\bf m}| - \#\{ 1\leq c \leq\lambda_{1} : \tca \neq {\rm Id}^{c} \},
$$
and denote by $\mathcal{V}^{\Gamma}({\bf m})$ the set of vectors 
$f_{\bf a}\in \mathcal{O}_{\lambda}({\bf m})\bigcap \mathcal{V}$ with 
$\Gamma(f_{a})=\Gamma$. Then we have
$$
\mathcal{O}_{\lambda}({\bf m})\bigcap \mathcal{V} =
\bigcup_{\Gamma\in \mathbb{N}} \mathcal{V}^{\Gamma}({\bf m}).
$$
Note that $\Gamma(f_{\bf a})\geq 0$  and is zero if and only if each column 
$\tca$ is either ${\rm Id}^{c}$ or of the form $\tca(r) = j\delta_{r=i} + r\delta_{r\neq i}$ for some $i \leq l(c) < j$. A $\tca$ of this form will be called an \emph{$(i,j)$-substitution}.


The following `algorithm' shows how to build all the possible $\Va\in\mathcal{V}^{\Gamma}({\bf m})$, thus enabling us to estimate the size of $\mathcal{V}^{\Gamma}({\bf m} )$.

\bigskip

\emph{Algorithm}
\nopagebreak

Let $({\bf m}, \lambda)$ be fixed but otherwise arbitrary. In order to generate a particular admissible $f_{\bf a} \in \mathcal{O}_{\lambda}({\bf m})$ we need to select the $m_{i,j}$  boxes on row $i$ which are filled with $j$, for all $i<j$. The rest of the boxes are filled automatically with $i$'s. The constraint is that no column should have two boxes filled with the same number.

Generating a diagram can be described intuitively as follows. We start with the `vacuum' 
vector (tableau) $f_{\bf 0}:= f_{{\bf m}={\bf 0}}$ (row $i$ is filled exclusively with 
$i$'s), and with a set of $|{\bf m}|$ bricks containing 
$m_{i,j}$ identical bricks labelled $(i,j)$, for each pair $i<j$. 
To change the content of a box from $i$ into $j$ we place an $(i,j)$-brick in that 
box. This procedure is repeated until all bricks have been used, each box being 
modified at most once. 

At this stage each column $c$ may contain several bricks placed in the appropriate boxes, so that its configuration is uniquely defined by the set of bricks $\colmod$  which shall be called a \emph{column-modifier}. 
For example if $\colmod =\{ (i ,j), (f,l)\}$ then the column has entries 
$$
t_{\bf a}^{c} (k)= 
\left\{ 
\begin{array}{lll}
j &  \textrm{if $k = i$} ;\\
l &  \textrm{if $k = f$} ;\\
k & \textrm{otherwise}. 
\end{array}
\right.
$$
Note that a column-modifier is not an arbitrary collection of bricks but one that can be used to produce a column with different entries. In the previous example, if $i<f$ 
this means either ($j\neq f$ and $j,l>l(c)$) or ($j=f$ and $l>l(c)$). The elementary one-brick column-modifier denoted $\colmod(i,j)$ can only be used in a column with $i\leq l(c)<j$, otherwise the entry $j$ would appear twice.





Now, since the length of a column is at most $d$ and all entries must be different, there are less than $d!$ different types of column-modifiers. 
Another important remark is that a column-modifier always increases the value of the modified cells, so that in this case  $\tca(\{1, \dots, l(c)\}) \neq \{1,\dots, l(c)\}$.

Alternatively to the above scenario where the bricks are inserted sequentially, 
we can first cluster them into $|{\bf m}| - \Gamma$ column-modifiers, and then apply each column-modifier to a particular column. A given collection of column-modifiers is uniquely determined by $\{ m_{\colmod} : \colmod\}$ where $m_{\colmod}$ is the multiplicity of $\colmod$. This procedure is detailed in the following 3 stages:


\begin{enumerate}
{\item[I.]
\label{1} Choose $\Gamma$ bricks among our $|{\bf m}|$. As we have $d(d-1)/2$
different types of bricks (recall that $i>j$), and we do not distinguish between identical bricks, there are at most $[d(d-1)/2]^{\Gamma} $ possibilities. 
For $\Gamma = 0$, we have only one choice.}
{\item[II.]
\label{2} Consider the remaining bricks as a set of elementary column-modifiers. 
Starting from these, we sequentially add each of the $\Gamma$ bricks selected in the first stage, to one of these elementary column-modifiers to form non-elementary ones. 
At each step we have at most $d!$ different {\it types} of column modifiers to which we can attach the new brick. Note that we do not distinguish between column modifiers of the same type, but rather consider them as an unordered set. Hence, we have less that  $(d!)^{\Gamma}$ possibilities. 

Note that at the end of stage II at least $\max \{ 0, |{\bf m}| -2\Gamma\}$ of the column-modifiers are elementary, and that $m_{\colmod(i,j)}\leq m_{i,j}$.    }
{\item[III.]
\label{3} Apply the column-modifiers to the columns of $\Vzero$, so
that no two modifiers are applied to the same column and the resulting 
$\Va\in\mathcal{O}_{\lambda}({\bf m})$ is admissible. By construction 
$\Gamma(f_{\bf a})= \Gamma$ and all admissible tableaux can be generated in this way.

}
\end{enumerate}
For counting the number of possibilities for the third stage we apply the column 
modifiers sequentially, but since some of them may be identical we need to
divide by the combinatorial factor $\prod_{\colmod} m_{\colmod}!$, where 
$m_{\colmod}$ is the number of column modifiers of type $\colmod$.

We distinguish between elementary column modifiers of type $\colmod(i,j)$ and composite ones. There are less than n possibilities of inserting a composite 
column-modifier $\colmod$. An elementary one of type $\colmod(i,j)$ can only be inserted in a column with at least $i$ rows, and since the resulting vector has to be admissible, the column cannot contain another $j$, so its length is smaller than $j$. There are $\lambda_i-\lambda_j$ such columns.  Hence the number of possibilities at stage three of the algorithm is upper bounded by 
\begin{align}
\label{up3}
\prod_{\colmod \neq \colmod({i,j})}  \frac{n^{ m_{\colmod}}}{m_{\colmod}!} \cdot
\prod_{i<j} \frac{(\lambda_i - \lambda_j)^{m_{\colmod(i,j)}}} {m_{\colmod(i,j)}!}.
\end{align}
When $\Gamma = 0$, for each elementary column modifier $\colmod(i,j)$ 
the number of available columns is at least 
$(\lambda_i -\lambda_j - \lvert{\bf m}\vert)_{+}:= \max\{ 0, \lambda_i -\lambda_j - \lvert{\bf m}\vert \} $.
Thus we have the following lower bound
\begin{align}
\label{do3}
  \prod_{i<j} \frac{ (\lambda_i - \lambda_j - |\mathbf{m}|)_{+}^{m_{i,j}}}{m_{i,j}!}.
\end{align}


Notice that the upper bound \eqref{up3} depends on the set of multiplicities 
$\{m_{\colmod}\}$.
 
We now return to our list of notations and definitions. 

7) To each column of $t_{\bf a}$ we associated a column modifier which completely determines its content. If $m_{\colmod}^{\bf a}$ is the number of columns with 
column-modifer $\colmod$, we collect all multiplicities in  
$
E := \{ m_{\colmod}^{\bf a} :\colmod\}.
$
In particular $\Gamma$ is a function of $E$ 
$$
\Gamma(f_{\bf a}) = |{\bf m}| - \sum_{\colmod} m_{\colmod}^{\bf a}.
$$
Vectors for which $\Gamma(f_{\bf a})=0$ have the same multiplicity 
set $E^{0}$ where $m_{\colmod(i,j)} = m_{i,j}$ for all $i<j$ and the other 
$m_\colmod = 0$. 
Similarly to $\mathcal{V}^{\Gamma}({\bf m})$, we denote by 
$\mathcal{V}^{E}({\bf m})$ the set of tableaux in 
$\mathcal{O}_{\lambda}({\bf m})\bigcap \mathcal{V}$ with 
$E(f_{\bf a})= E$, in particular
$$
\mathcal{V}^{\Gamma} ({\bf m})= \bigcup_{ E } \mathcal{V}^{E}({\bf m})
$$ 


8) To each column $c$ of $t_{\bf a}$ we associate two disjoint sets: 
the added entries 
$\{ t_{\bf a}^{c}(1), \dots , t_{\bf a}^{c}(l(c))\}  \setminus \{ 1,\dots, l(c)\}$ 
and the deleted entries 
$\{ 1,\dots, l(c)\} \setminus \{ t_{\bf a}^{c}(1), \dots , t_{\bf a}^{c}(l(c))\}$. This data is placed into a single set by attaching a $\pm$ sign to each entry, indicating if 
it is added or deleted. It is easy to verify that if $t_{\bf a}$ is admissible, the set of added and deleted entries is uniquely determined by the column-modifer $\colmod$ associated to $c$, and hence shall be denoted by $S(\colmod)$. For example $S(\colmod(i,j))  =
\{ (i,-), (j,+)\} $ and for  $\colmod= \{ (i,j), (j,k)\}$ we have
$S(\colmod)  = \{ (i,-), (k,+)\}$. 
We define the multiplicities $m_S^{\bf a}= \sum_{\colmod : S(\colmod) =S} m_{\colmod}^{\bf a}$ and $F(f_{\bf a}):= \{m_S^{\bf a} :S \}$ .
To summarise, we have defined the maps 
$$
f_{\bf a} \longmapsto E(f_{\bf a})  \longmapsto  F(f_{\bf a}).
$$

We now state our estimates. The first point of the following lemma is an exact formula serving as the main tool to prove some of the bounds below. 

\begin{lem}
\label{lemtools}
\noindent
\begin{enumerate}
\item{For any unitary operator $U\in M(\mathbb{C}^{d})$, 
for any basis vectors $\Va$ and $\Vb$, we have
\begin{align}
\label{formdet}
\langle\Va\vert q_{\lambda } U^{\otimes n} \Vb \rangle & = \prod_{1\leq c \leq \lambda _1} \det(U^{\tca,\tcb}),
\end{align}
where $U^{\tca,\tcb}$ is the $\lc \times \lc$ minor of $U$ given by $[U^{\tca,\tcb}]_{i,j} = U_{\tca(i),\tcb(j)}$.
}
\end{enumerate}

Under the assumptions
\begin{align}
\label{hyp}
 |\mathbf{m}| & \leq n^{\eta}, \\
\lambda & \in \Lambda_{n,\alpha}, \notag \\
\inf_{i} |\mu_i - \mu_{i+1}| & \geq \delta ,\notag \\
\mu_d & \geq \delta ,\notag \\
\|\quant\|_1 & \leq C n^{\beta}, \qquad \beta\leq 1/2 \notag \\ 
\|\cartan\|_1 & \leq n^{-1\!/2 + 2 \beta}/\delta ,\notag \\
n & > \left(\frac2{\delta}\right)^{1/(1-\alpha)}. \notag
\end{align}
we have the following estimates  with remainder terms uniform in the eigenvalues $\mu_{\bullet}$:

\begin{enumerate}[resume]
\item  { The number of admissible $\Va\in \mathcal{O}_{\lambda}({\bf m})$ 
with $\Gamma(\Va) = 0$ is 
\begin{equation}
\label{cg0}
\# \mathcal{V}^{0} ({\bf m})= 
\prod_{j>i}\frac{(\lambda_i-\lambda_j)^{m_{i,j}}}{m_{i,j}!} (1+ O(n^{-1 + 2\eta} / \delta)).
\end{equation}
}
\item {
Let $E:= \{m_{\colmod}:\colmod\} $ with $\Gamma(E)=\Gamma$. 
The number 
of admissible $\Va\in \mathcal{O}_{\lambda}({\bf m})$ with $E(\Va) = E$ is bounded by:
\begin{equation}
\label{cggam}
\# \mathcal{V}^{E} ({\bf m}) \leq  n^{-\Gamma  + \sum_{i<j} (m_{i,j} - m_{\colmod({i,j})})} \prod_{j>i}\frac{(\lambda_i-\lambda_j)^{m_{\colmod({i,j})}}}{m_{\colmod({i,j})}!}.
\end{equation} 
}
\item{
The number of admissible $\Va\in \mathcal{O}_{\lambda}({\bf m})$ with 
$\Gamma(\Va) = \Gamma$ is bounded by:
\begin{equation}
\label{cgG}
\# \mathcal{V}^{\Gamma}({\bf m}) \leq C^{\Gamma} n^{-\Gamma} \delta^{-2 \Gamma} |\mathbf{m}|^{2\Gamma} \prod_{j>i}\frac{(\lambda_i-\lambda_j)^{m_{i,j}}}{m_{i,j}!},
\end{equation}
for a constant $C= C(d)$.
}
\item { 
Let $\Va \in \mathcal{V}^{\Gamma^{a}}(\mathbf{l})$, and consider $ \mathcal{V} ^{\Gamma ^b} ({\bf m}) \subset\mathcal{O}_{\lambda}(\mathbf{m})$ for some fixed 
$\Gamma^b$. Then:
\begin{align}
\label{prod_id}
\left| \left\langle \Va \Bigg| q_{\lambda} \sum_{\Vb \in \mathcal{V}^{\Gamma^b} ({\bf m})} \Vb \right\rangle \right| \leq \left\{
\begin{array}{cc}
0 & \mbox{if } \Gamma^b \neq |\mathbf{m}| - |\mathbf{l}| + \Gamma^a \\
(C |\mathbf{m}|)^{\Gamma^b} & \mbox{otherwise} \\
\end{array}
\right.,
\end{align}
with $C= C(d)$.
}
\item { If $\Va \in \mathcal{V}^{0}(\mathbf{m})$, then
\begin{align}
\label{prod_id0}
\left\langle \Va \Bigg| q_{\lambda} \sum_{\Vb \in \mathcal{O}_{\lambda}(\mathbf{m})} \Vb \right\rangle  = 1.
\end{align}
}
\item\label{zem0} {
If $f_{\bf a} \in \mathcal{V}^{0}({\bf m})$ so that its set of elementary column-modifiers is  
$E^{0}= \{ m_{\colmod(i,j)}=m_{i,j} \}$, then 
\begin{align}
\langle \Va | q_{\lambda} U(\quant, \cartan, n)^{\otimes n} \Vzero\rangle
\label{interg0}
 & = \exp\left(i\phi -\frac{\|\quant\|^2_2}{2}\right) \prod_{i < j} \left(\frac{\quantsub_{i,j}}{\sqrt{n}\sqrt{\mu_i - \mu_j}}\right)^{m_{i,j}} r(n),
\end{align}
with the phase and error factor
\begin{align*}
\phi & =  \sqrt{n} \sum_{i = 1}^{d-1} (\mu_i - \mu_{i+1}) \xi_i, \\
r(n) & = 1 + O\left(n^{-1 + 2\beta + \eta}\delta^{-1}, n^{-1/2 +2 \beta} \delta^{-1}, 
n^{-1 + 2\beta+ \alpha }\delta^{-1} \right).
\end{align*}
}
\item { If $\Va \in \mathcal{V}^{E}({\bf m})$, so that its  set of column-modifiers is 
$E= \{m_{\colmod}: \colmod\}$ and $\Gamma(E)= \Gamma$, then
\begin{eqnarray}
&&
\hspace{-0.5cm}
\left| \langle \Va | q_{\lambda} U(\quant, \cartan, n)^{\otimes n} \Vzero\rangle\right| 
\notag \\\label{interg}
&& 
\hspace{-0.5cm}
\leq \exp\left(-\frac{\|\quant\|^2_2}{2}\right) \left(\frac{ C\|\quant\|}{\sqrt{n\delta}}\right)^{- \Gamma+\sum_{i<j} (m_{i,j} - m_{\colmod({i,j})}) }\prod_{i < j} \left(\frac{ \quantsub_{i,j}}{\sqrt{n}\sqrt{\mu_i - \mu_j}}\right)^{m_{\colmod({i,j})}} r(n)
\end{eqnarray}
with $C=C(d)$ a constant and $r(n)$ as in point \ref{zem0} above.
}
\item {Under the further hypotheses  that $\|\vec{z}\| \leq n^{\beta}$, $m_{i,j} \leq 2 |\quantsub_{i,j}+ z_{i,j}|n^{\beta + \epsilon}$ for some $\epsilon > 0$,
 we have:
\begin{align}
\label{numer}
& \left\langle \sum_{\Va \in \mathcal{O}_{\lambda}(\mathbf{m})}\Va \Bigg| q_{\lambda} U(\quant + \vec{z}, \cartan, n) \Vzero \right\rangle \notag \\
 & =  \exp\left(i\phi -\frac{\|\quant + \vec{z}\|^2_2}{2}\right) \prod_{i<j}\frac{\big((\zeta_{i,j} + z_{i,j})(\sqrt{n}\sqrt{\mu_i - \mu_j})\big)^{m_{i,j}}}{m_{i,j}!} r(n),
\end{align} 
with 
\begin{equation*}
r(n) =  1 + O\left(n^{-1 + 2\beta + \eta}\delta^{-1}, 
n^{-1 + 2 \beta+\alpha}\delta^{-1}, n^{-1 + 2\eta}\delta^{-1}, n^{-1 + \alpha + \eta}\delta^{-1}, \delta^{-3/2} n^{-1/2 + 3 \beta + 2 \epsilon} \right).
\end{equation*}
}
\item {Under the further hypotheses that  $|\mathbf{l}| \leq |\mathbf{m}|$ and $n^{1 - 3 \eta} > 2 C / \delta^{2}$, where $C=C(d)$,
\begin{align}
\label{denom}
\left\vert\left \langle \sum_{\Va \in \mathcal{O}_{\lambda}(\mathbf{l} )}\Va \Bigg| q_{\lambda}  \sum_{\Vb \in \mathcal{O}_{\lambda}(\mathbf{m})}\Vb \right\rangle\right \vert
& \leq   (C |\mathbf{m}|)^{|\mathbf{m}| - |\mathbf{l}|}\prod_{i<j} \frac{(\lambda_i - \lambda_j)^{l_{i,j}}}{l_{i,j}!} \left(\frac{C|\mathbf{l}|^2|\mathbf{m}|}{n\delta^2}\right)^{\Gamma^a_{\min}(\mathbf{l}, \mathbf{m})}
\end{align} 
with 
\begin{align}
\Gamma^a_{\min}(\mathbf{l}, \mathbf{m}) \geq \frac{\big(|\mathbf{l} - \mathbf{m}|  + 3 |\mathbf{l}| - 3 |\mathbf{m}| \big)_{+}}{6} .
\end{align}
}
\item{
We have
\begin{align}
\label{denom54}
\left\langle \sum_{\Va \in \mathcal{O}_{\lambda}(\mathbf{m})}\Va \Bigg| q_{\lambda}  \sum_{\Vb \in \mathcal{O}_{\lambda}(\mathbf{m})}\Vb \right\rangle 
& =
\prod_{i<j} \frac{(\la_i - \la_j)^{m_{i,j}}}{m_{i,j}!} \big(1 + O(n^{3\eta - 1}/\delta)\big).
\end{align} 
}
\end{enumerate}

\end{lem}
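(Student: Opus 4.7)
I would start with the determinant identity \eqref{formdet}, which is the combinatorial engine for everything else. Since $q_\lambda = \sum_{\tau \in \mathcal{C}_\lambda} \mathrm{sgn}(\tau)\,\tilde{\pi}_d(\tau)$ and $\mathcal{C}_\lambda$ factors as an independent product of symmetric groups acting on the individual columns of $\lambda$, the matrix element $\langle f_{\bf a} | q_\lambda U^{\otimes n} f_{\bf b}\rangle$ splits as a product over columns. Within one column $c$ of length $l(c)$, summing over column permutations with signs against the factors $U_{t^c_{\bf a}(i), t^c_{\bf b}(\tau(i))}$ is precisely the Leibniz expansion of $\det(U^{t^c_{\bf a}, t^c_{\bf b}})$, yielding \eqref{formdet}.

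The counting estimates (parts 2--4) follow by running the explicit three-stage algorithm. For $\Gamma = 0$ every column-modifier is elementary of type $\kappa(i,j)$ and at stage III each is placed in one of the $\lambda_i - \lambda_j$ candidate columns, with at most $|\mathbf{m}|$ already occupied; the resulting ratio $(\lambda_i - \lambda_j - |\mathbf{m}|)/(\lambda_i - \lambda_j) = 1 + O(n^{-1+\eta}/\delta)$ raised to power $|\mathbf{m}|$ produces the correction in \eqref{cg0}. For general $E$ with $\Gamma(E) = \Gamma$, each conversion of an elementary modifier into a composite one replaces a factor $\lambda_i - \lambda_j = \Theta(n\delta)$ by a factor of at most $n$, giving the $n^{-\Gamma}$ in \eqref{cggam}; summing over the at most $C^\Gamma$ choices of $E$ admitted by stages I--II produces the $|\mathbf{m}|^{2\Gamma}$ in \eqref{cgG}, since each reclustering is constrained by $m_\kappa \leq m_{i,j} \leq |\mathbf{m}|$.

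For parts 5 and 6 I would apply \eqref{formdet} with $U = \mathbf{1}$: a minor determinant vanishes unless $t^c_{\bf a}$ and $t^c_{\bf b}$ coincide as sets. Matching of column sets forces equality of the $F$-data of $f_{\bf a}$ and $f_{\bf b}$, which by stages I--II of the algorithm is equivalent to $\Gamma^b - \Gamma^a = |\mathbf{m}| - |\mathbf{l}|$; for each $f_{\bf a}$ there are at most $(C|\mathbf{m}|)^{\Gamma^b}$ compatible $f_{\bf b}$'s (a factor $C$ per modifier-type choice, a factor $|\mathbf{m}|$ per multiplicity choice), giving \eqref{prod_id}. Part 6 follows because if $f_{\bf a} \in \mathcal{V}^0(\mathbf{m})$ there is a unique matching $f_{\bf b} = f_{\bf a}$ contributing $\det = +1$ on each column. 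For parts 7 and 8 I would Taylor-expand each minor determinant in \eqref{formdet}: off-diagonal entries of $U(\zeta, \xi, n)$ yield $\zeta_{i,j}/\sqrt{n}\sqrt{\mu_i - \mu_j}$ per $(i,j)$-substitution column, diagonal entries produce the phase $\phi$ and the Gaussian factor $\exp(-\|\zeta\|^2/2)$ via $\sum_i \lambda_i = n$ and $\lambda_i/n = \mu_i + O(n^{\alpha-1})$, and each extra brick clustered into a composite modifier adds a $\|\zeta\|/\sqrt{n\delta}$ suppression from the corresponding entry of $U^{t^c_{\bf a}, \mathrm{Id}^c}$.

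For the orbit-sum estimates (parts 9--11) I would stratify over $\Gamma$ and combine the counts \eqref{cg0}--\eqref{cgG} with the single-vector estimates \eqref{interg0}--\eqref{interg}. In \eqref{numer} the $\Gamma = 0$ stratum produces the stated leading expression while higher strata feed into $r(n)$ via the $\|\zeta\|^2/(n\delta^2)$ cost per extra brick. Equation \eqref{denom54} is then a special case where the leading $\Gamma^a = \Gamma^b = 0$ contribution is pinned down exactly by \eqref{prod_id0} and \eqref{cg0}, with higher strata contributing the $O(n^{3\eta-1}/\delta)$ error. The principal obstacle is the combinatorial lower bound $\Gamma^a_{\min}(\mathbf{l}, \mathbf{m}) \geq (|\mathbf{l} - \mathbf{m}| + 3|\mathbf{l}| - 3|\mathbf{m}|)_+/6$ entering \eqref{denom}: given $F(f_{\bf a}) = F(f_{\bf b})$ with $|\mathbf{l}| \leq |\mathbf{m}|$, one must quantify how many bricks of $f_{\bf a}$ are forced into composite modifiers in order to realise the prescribed added/deleted signatures with a strictly smaller brick budget. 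This is an extremal counting problem requiring case analysis over the finite list of column-modifier types, exploiting that each composite modifier of level $k$ can absorb only a bounded number of added and deleted entries, and is the step I expect to be the most delicate.
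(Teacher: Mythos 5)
Your outline follows the paper's route almost step for step: the column-factorisation of $q_{\lambda}$ giving the determinant formula \eqref{formdet}, the three-stage algorithm for the counts \eqref{cg0}--\eqref{cgG}, the $U=\mathbf{1}$ specialisation for \eqref{prod_id}--\eqref{prod_id0}, the entrywise Taylor expansion of the minors for \eqref{interg0}--\eqref{interg}, and the stratification over $\Gamma$ with a geometric series dominated by the $\mathcal{V}^{0}$ benchmark for \eqref{numer} and \eqref{denom54}. However, there is one genuine gap, and it sits exactly at the point you yourself flag as "the most delicate": the lower bound $\Gamma^a_{\min}(\mathbf{l},\mathbf{m}) \geq (|\mathbf{l}-\mathbf{m}|+3|\mathbf{l}|-3|\mathbf{m}|)_{+}/6$ in \eqref{denom}. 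You defer it to "an extremal counting problem requiring case analysis over the finite list of column-modifier types", but no such case analysis is given, and it is not clear it would produce the specific constant needed (which matters downstream, since Lemma \ref{non-orth} and hence Lemma \ref{V_approx} inherit the exponent $(9\eta-2)/12$ from precisely this $1/6$). Note also that the statement to be proved is a joint bound $\Gamma(\Va)+\Gamma(\Vb) \geq |\mathbf{l}-\mathbf{m}|/3$, not a bound on how many bricks of $\Va$ alone are "forced into composite modifiers"; combined with $\Gamma^b-\Gamma^a=|\mathbf{m}|-|\mathbf{l}|$ this yields the stated $\Gamma^a_{\min}$. The paper avoids any enumeration of modifier types by a transport argument: choose $\Gamma=0$ representatives $t_{\bf a'},t_{\bf b'}$ (possible since $|\mathbf{m}|\leq n^{\eta}$ and the column lengths differ by at least $\delta n$ up to $O(n^{\alpha})$), then build $t_{\bf a},t_{\bf b}$ by horizontally moving elementary modifiers $\colmod(i,j)$ into already-modified columns; each move raises $\Gamma(\Va')+\Gamma(\Vb')$ by one and changes the $\ell^{1}$-distance $\sum_{S}|m^{\bf a'}_{S}-m^{\bf b'}_{S}|$ by at most three, while non-orthogonality forces this distance (initially $\sum_{i<j}|l_{i,j}-m_{i,j}|$) down to zero. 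Without this (or an equivalent quantitative argument), \eqref{denom} is unproved.

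Two smaller points of looseness you should repair if you write this up: in \eqref{cgG} the factors $|\mathbf{m}|^{2\Gamma}$ and $\delta^{-2\Gamma}$ do not come from the sum over $E$ (that sum only contributes $C^{\Gamma}$); they come from comparing \eqref{cggam} to the benchmark via $\prod_{i<j} m_{i,j}!/m_{\colmod(i,j)}! \leq |\mathbf{m}|^{2\Gamma}$ and $\lambda_i-\lambda_j\geq \delta n/2$, using $\sum_{i<j}(m_{i,j}-m_{\colmod(i,j)})\leq 2\Gamma$. Likewise, in \eqref{prod_id} the bound $(C|\mathbf{m}|)^{\Gamma^b}$ is not simply "a factor $|\mathbf{m}|$ per multiplicity choice": one must show that the product of multinomial coefficients counting placements compatible with the column signatures of $\Va$ is at most $|\mathbf{m}|^{\Gamma^b}$, which rests on the inequality $\sum_{S}\bigl(m_S-\max_{\colmod: S(\colmod)=S} m_{\colmod}\bigr)\leq \Gamma^b$.
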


\bigskip

\begin{proof}
~

\emph{Proof of \eqref{formdet}.} We first express $\langle \Va\vert U^{\otimes n} \Vb\rangle $ as a product of matrix entries of $U$:
\begin{align*}
\langle \Va\vert U^{\otimes n} \Vb\rangle & = \prod_{1\leq c \leq \lambda _1} \prod_{1 \leq r \leq \lc} \langle f_{\tca(r)} \vert U f_{\tcb(r)}\rangle \notag \\
& =\prod_{1\leq c \leq \lambda _1} \prod_{1 \leq r \leq \lc} U_{\tca(r), \tcb(r)}.
\end{align*}

Since the subgroup of column permutations $\mathcal{C} _{\lambda }$ is the product of the permutation groups of each column, each $\sigma\in\mathcal{C} _{\lambda }$ is
$
\sigma = s _1\dots s_{\lambda_{1}}
$
with $\ssc $ a permutation of column $c$ which transforms $\tcb(r)$ into $\tcb(\ssc(r))$.  Then
\begin{align*}
\langle\Va\vert q_{\lambda }  U^{\otimes n}  \Vb \rangle =\langle\Va\vert U^{\otimes n}  q_{\lambda }  \Vb \rangle & = \sum_{\sigma \in \mathcal{C} _{\lambda }}  \epsilon(\sigma) \prod_{1\leq c \leq \lambda _1}  \prod_{1\leq r\leq  \lc}  
U_{\tca(r), \tcb(\ssc(r))} \\
& = \prod_{1\leq c \leq \lambda _1} \sum_{\ssc \in S_c} \epsilon (\ssc) \prod_{1\leq r \leq  \lc}  U_{\tca(r), \tcb(\ssc(r))} \\
& = \prod_{1\leq c \leq \lambda _1} \det(U^{\tca,\tcb}).
\end{align*}

\emph{Proof of \eqref{cg0}.}
The number of admissible $\Va$ such that $\Gamma(\Va)=0$ is given by the products of the possibilities at each stage of the algorithm. For the first two stages, there is exactly one possibility when $\Gamma = 0$. Hence $\#\mathcal{V}^0$ is the number of possibilities at the third stage. Here the upper bound \eqref{up3} reads as $\prod_{j>i}(\lambda_i-\lambda_j)^{m_{i,j}} / m_{i,j}! $. On the other hand, we may use \eqref{do3} as a lower bound, recalling that   $\lambda_i - \lambda_j \geq \delta n / 2$ and 
$ |{\bf m}| \leq n^{\eta}  $ (cf. \eqref{hyp}). This yields the result \eqref{cg0}.

\bigskip

\emph{Proof of \eqref{cggam}.}
The number of $\Va$ in $\mathcal{V}^{E}$ is given by the third stage of the algorithm (the two first stages yield a particular $E$). We then obtain \eqref{cggam} by applying \eqref{up3} and neglecting the $m_{\colmod }!$ factors, while noticing that $\sum_{\colmod} m_{\colmod} = |\mathbf{m}| - \Gamma$.

\bigskip

\emph{Proof of \eqref{cgG}.}
The set $\mathcal{V}^{\Gamma}$ is the union of all $\mathcal{V}^{E}$ with $\Gamma(E) = \Gamma$. Now the first two stages of the algorithm imply that there are at most $C^{\Gamma}$ different $E$ with the latter property, with $C= C(d)$.

Now we use  \eqref{cggam} to upper-bound  $\mathcal{V}^{E}$ as follows.
Since $\sum m_{\colmod({i,j})} \geq |\mathbf{m}| - 2 \Gamma$, we may write $\prod_{\colmod} m_{\colmod (i,j)}! \geq \prod_{i<j} m_{i,j}! \sup_{i<j}m_{i,j}^{-2\Gamma}$. Moreover $\lambda_i -\lambda_j \geq \delta n / 2$. By putting together we obtain   
\[
\# \mathcal{V}^{E} \leq n^{-\Gamma} \delta^{-2 \Gamma} |\mathbf{m}|^{2\Gamma} \prod_{j>i}\frac{(\lambda_i-\lambda_j)^{m_{i,j}}}{m_{i,j}!} , \qquad \forall E ~{\rm with}~
\Gamma(E)= \Gamma .
\]

Multiplying by the number of possible $E$ yields the result.

\bigskip

\emph{Proof of \eqref{prod_id}.}
We applying \eqref{formdet} with $U = \mathbf{1}$. 
Since both $\Va$ and $\Vb$ are product of basis vectors, the scalar product $\langle \Va \mid q_{\lambda} \Vb \rangle$ is equal to $-1$ or $1$ if $\tca([1,l(c)]) = \tcb([1,l(c)])$ for all columns, and $0$ otherwise. Here we denote by $\tca([1,l(c)])$ 
the set of entries $\{ \tca(1), \dots, \tca(l(c))\}$.

Now, since a modified  column cannot satisfy $\tca([1,l(c)]) = [1,l(c)]$ 
(and the same for ${\bf b}$), the vectors $\Va$ and $\Vb$ are orthogonal unless they have the same number of modified columns. Finally, that number is $|\mathbf{l}| - \Gamma^a$ for $\Va$ and $|\mathbf{m}| - \Gamma^b$ for $\Vb$. This yields the first line of \eqref{prod_id}.

We now concentrate on the case when $\Gamma^b = |\mathbf{m}| - |\mathbf{l}| + \Gamma^a$. Since  
 $|\langle \Va \mid q_{\lambda} \Vb \rangle|\leq 1$, we can bound the sum of scalar products by the number of non-zero inner products. The question is how many diagrams $f_{\bf b}$ have the same content (seen as an unordered set) in each column as $f_{\bf a}$: $t_{\bf a}^{c}([1,l(c)]) =t_{\bf b}^{c}([1,l(c)])$, or equivalently $S(\colmod_{\bf a}^c) = S(\colmod_{\bf b}^c)$.

For building the relevant $\Vb$, we can follow the algorithm with the further condition that, at stage three, all the column-modifiers are applied in such a way that the unordered column content is identical to that of $\Va$. 

The first two stages of the algorithm are the same so they yield a $C^{\Gamma^b}$ factor. We now have a collection $\{m_{\colmod}\}$ of column modifiers which have to be placed such that they match the column content of $f_{\bf a}$. 
For each $S$ we identify the column modifiers 
$\colmod_{1},\dots, \colmod_{r(S)}$ such that $S(\colmod_{i})=S$ for all 
$1\leq i\leq r(S)$. The total number of such objects is $m_{S}:= \sum_{i\leq r(S)} m_{\colmod_{i}}$ and the number of ways in which they can be inserted to produce {\it distinct} diagrams is 

$$\left(\begin{array}{c}
m_S \\ m_{\colmod_1} \dots m_{\colmod_{r(S)}} \end{array}\right).
$$ 

Recall that the number of elementary column-modifiers $\sum_{i<j} m_{\colmod (i,j)}$ is at least $|{\bf m}|- 2\Gamma^{b}$. Moreover, each elementary column-modifier 
$\colmod (i,j)$ corresponds to a different $S(\colmod (i,j))= \{ (i,-), (j,+)\}$. Thus
$$
|{\bf m}|- 2\Gamma^{b} \leq \sum_{i<j} m_{\colmod(i,j)} \leq 
\sum_{S}  \max_{\colmod : S(\colmod) = S} m_{\colmod} .
$$
Since
$$
\sum_{S} m_{S} = \sum_{\colmod} m_{\colmod} = |{\bf m}| - \Gamma^{b},
$$
we obtain
$$
\sum_{S} \left( m_S - \max_{\colmod : S(\colmod) = S}  m_{\colmod}\right) \leq \Gamma^b.$$ 
This implies
\[
\prod_S \left(
 \begin{array}{c}
 m_S \\ 
 m_{\colmod_1} \dots m_{\colmod_{r(S)}} 
 \end{array}
\right) \leq |\mathbf{m}|^{\Gamma^b}.
\]
Multiplying by the $C^{\Gamma^{b}}$ of the first stages, we get \eqref{prod_id}.

\bigskip

\emph{Proof of \eqref{prod_id0}.}
As shown above the only non-zero contributions come from 
$f_{\bf b}\in \mathcal{V}^{0}\subset \mathcal{O}_{\lambda}({\bf m})$. 

Since $\Gamma^b = 0$, the constant from the two first stages of the algorithm is $1$, $m_S = m_{i,j} = m_{\colmod({i,j})}$ for all $S$ corresponding to an elementary column-modifier, and $0$ otherwise. So the combinatorial factor is again one: we do not have any choice in our placement of column-modifiers. In other words, the only $\Vb$ such that  $\langle \Va \mid q_{\lambda} \Vb \rangle \neq 0$ is $\Va$. Finally, $\langle \Va \mid q_{\lambda} \Va \rangle = 1$.

\bigskip

\emph{Proof of \eqref{interg0}.}
From \eqref{formdet} we deduce
$$
\langle \Va | q_{\lambda} U(\quant, \cartan, n)^{\otimes n} \Vzero\rangle
= \prod_{1\leq c\leq \lambda_{1}} {\rm det} (U^{t_{\bf a}^{c}, {\rm Id}^{c}}) ,
\qquad  U= U(\quant,\cartan,n).
$$
We will use the Taylor expansion of the unitary  $U(\quant,\cartan,n)$ to estimate the above determinants.

 Entry-wise, for all $1\leq i\leq d$ on the first line, and all $1\leq
i <j \leq d$ on the second and third lines:
\begin{eqnarray*}
U_{i,i}(\quant,\cartan,n)  & =& 1 + i \frac{\cartansub_i \delta_{i\neq d} -
\cartansub_{i-1}\delta_{i\neq1}}{\sqrt{n}} - \frac1{2n}\sum_{j\neq i}
\frac{|\quantsub_{i,j}|^2}{|\mu_i
-\mu_j|}
\\
& &  \qquad +  O(\|\quant\|^3 n^{-3/2} \delta^{-3/2}, \|\quant\| \|\cartan\| n^{-1} \delta^{-1/2}, \| \vec{\xi}\|^{2}n^{-1}) ;
\\
&&\\
U_{i,j}(\quant,\cartan,n)  & =& - \frac1{\sqrt{n}} \frac{\quantsub_{i,j}^*}{\sqrt{\mu_i -\mu_j}} + O(\|\quant\|^2
n^{-1} \delta^{-1}, \|\quant\| \|\cartan\| n^{-1} \delta^{-1/2}) ;
\\
U_{j,i}(\quant,\cartan,n)  & = & \frac1{\sqrt{n}} \frac{\quantsub_{i,j}}{\sqrt{\mu_i -\mu_j}} + O(\|\quant\|^2
n^{-1} \delta^{-1}, \|\quant\| \|\cartan\| n^{-1} \delta^{-1/2}).
\end{eqnarray*}
If $\quant =O(n^{\beta})$, $\|\cartan\| \leq n^{-1/2 + 2\beta}/\delta$, and 
$\beta < 1/2$, 
the remainder terms are $O(n^{-3/2+3\beta}\delta^{-3/2})$ for the first line and 
$O(n^{-1+2\beta}\delta^{-1})$ for the last two lines. 

Therefore, when our parameters are in this range, we can give precise enough
evaluations of the determinants. The idea is to find the dominating terms in the expansion of the determinant 
$$
\det A
=\sum_{\sigma} \prod_{i} \epsilon(\sigma) A_{i,\sigma(i)}.
$$

Note that we can use the above Taylor expansions inside the determinant since the number of terms in the product is at most $d$.

Since $\Va \in \mathcal{V}^0$, all $\tca$ are either ${\rm Id}^{c}$, or an 
\emph{$(i,j)$-substitution}.
If $\tca = {\rm Id}^{c}$,
the summands with more than two non-diagonal terms are of the same order as the remainder term, so that only the identity and the transpositions count in $\sum_{\sigma } \prod_{i} A_{i,\sigma (i)}$. Let $l=l(c)$, then
\[
\upsilon(l):= \det(U^{{\rm Id}^{c}, {\rm Id}^{c}}(\quant,\cartan, n)) =  
1 + i
\frac{\cartansub_{l}}{\sqrt{n}} - \frac1{2n}\sumtwo{1\leq i \leq l}{l+1\leq j\leq d}
\frac{|\quantsub_{i,j}|^2}{\mu_i -\mu_j} +
O(n^{-3/2 + 3 \beta}\delta^{-3/2} ).
\] 
Note that for $l=d$, we get the usual determinant of $U(\quant,\cartan, n)$ which is $1$.

Consider now the case $\tca\neq {\rm Id}^{c}$. Since $\tca(r)\geq r$ for all $r$,  there 
exists a whole column of $U^{\tca,{\rm Id}^{c}}$ whose entries are smaller in
modulus than $O(\|\quant\|/\sqrt{n\delta}) = O(n^{-1/2+\beta}\delta^{-1})$. 
In particular if $\tca$ is an
\emph{$(i,j)$-substitution}, then the only summand that is of this
order comes from the identity. So that 
\begin{equation}
\label{uij}
\upsilon(i,j): =\det(U^{\tca, {\rm Id}^{c}}(\quant,\cartan, n)) = 
\frac{\quantsub_{i,j} }{\sqrt{n}\sqrt{\mu_i -\mu_j}} +
O(n^{-1 + 2\beta}\delta^{-1}).
\end{equation}
Note that this approximation does not depend on $l(c)$, but only on $i$ and $j$.

We now put together the estimated determinants in the product \eqref{formdet}. For each $i < j$ there are  $m_{i,j}$ columns of the type $(i,j)$-substitution. Out of the $\lambda_l-\lambda_{l+1}$ columns of length $l=l(c)$ there are 
$\lambda_l-\lambda_{l+1} - R_l$ of the type ${\rm Id}^{c}$, with $0\leq R_l\leq
\lvert{\bf m}\rvert$.

Hence:
\begin{align}
\label{interg0_et}
\langle \Va | q_{\lambda} U(\quant, \cartan, n)^{\otimes n} \Vzero\rangle 
& = \prod_{l=1}^d\left( \upsilon(l)) \right)^{\lambda_l -\lambda_{l+1}}\prod_{1\leq i<j\leq d}
\left(\upsilon(i,j) \right)^{m_{i,j}} \prod_{l=1}^d (\upsilon(l))^{-R_l}.
\end{align}
Now $\upsilon(l) = 1 + O(n^{-1 + 2\beta}\delta^{-1})$ 
and $R_l\leq | {\bf m}| \leq n^{\eta}$, so the last product is 
$1+O(n^{-1 + 2 \beta + \eta}\delta^{-1} )$.
Similarly, since $\lambda\in\Lambda_{n,\alpha}$ we have $\lambda_l-
\lambda_{l+1} = n (\mu_l -\mu_{l+1}) + O(n^{\alpha})$, and we can use 
Lemma \ref{Taylor} given at the end of this section to estimate the first product  as follows
\begin{align*}
\prod_{l=1}^{d} \upsilon(l)^{\lambda_l -\lambda_{l+1}} 
& = \prod_{l=1}^{d}
\exp\left(i\phi_l  - \frac1{2}\sumtwo{1\leq i\leq l }{l+1\leq j\leq d}
|\quantsub_{i,j}|^2 \frac{\mu_l-\mu_{l+1}}{\mu_i -\mu_j}
\right) r(n) \\
&= \exp\left(i\phi - \frac{\|\quant \|^2_2}{2}\right) r(n),
\end{align*}
with 
\begin{eqnarray*}
\tilde{r}(n)  &=& 1+O(n^{-1 + \alpha + 2\beta} \delta^{-1}, n^{-1/2 + 2\beta}\delta^{-1}), \\
\phi_l   &=& \delta_{l \neq d}\sqrt{n} (\mu_l - \mu_{l+1}) \cartansub_l, \\
\phi &=& \sqrt{n}  \sum_{l=1}^{d-1}  (\mu_l - \mu_{l+1}) \cartansub_l.
\end{eqnarray*}

We now turn our attention to the middle product on the right side of \eqref{interg0_et}
\begin{align*}
\upsilon(i,j)^{m_{i,j}}  & = 
\left(\frac{\quantsub_{i,j}}{\sqrt{n}\sqrt{\mu_i -\mu_j}} \right)^{m_{i,j}}\left(1 +
O\left(n^{-1 + 2 \beta+\eta}\delta^{-1}\right)\right),
\end{align*}
where we have used that $|\mathbf{m}|\leq n^{\eta}$.

Inserting into \eqref{interg0_et} yields \eqref{interg0}.\\ Note that 
$\langle \Va | q_{\lambda} U(\quant, \cartan, n)^{\otimes n} \Vzero\rangle=0$ if there exist $i<j$ such that $\zeta_{i,j}=0$ and $m_{i,j}\neq 0$ .

\bigskip
\emph{Proof of \eqref{interg}.}
We may write, much like in (\ref{interg0_et}),
\[
\langle \Va | q_{\lambda} U(\quant, \cartan, n)^{\otimes n} \Vzero\rangle 
 = \prod_{l=1}^d\left( \upsilon(l)) \right)^{\lambda_l -\lambda_{l+
1}}\prod_{\colmod}
\left(\upsilon(\colmod) \right)^{m_{\colmod}} \prod_{l=1}^d (\upsilon(l))^{-R_l}
\]
where $ 0\leq R_l\leq |{\bf m}| -\Gamma$ and $\upsilon(\colmod)$ is the determinant of the minor of $U$ corresponding to having applied the column-modifier $\colmod$. We can further split the column-modifers into elementary ones $\colmod(i,j)$ and non-elementary ones $\colmod^{\prime}$.

Then $\langle \Va | q_{\lambda} U(\quant, \cartan, n)^{\otimes n} \Vzero\rangle$ can be written as
\[
\prod_{l=1}^d\left( \upsilon(l)) \right)^{\lambda_l -\lambda_{l+1}}  
\prod_{i<j}
\left(\upsilon(i,j) \right)^{m_{\colmod(i,j)}} 
\prod_{l=1}^d (\upsilon(l))^{-R_l}
\prod_{\colmod^{\prime}} 
\left(\upsilon(\colmod^{\prime}) \right)^{m_{\colmod^{\prime}}} .
\]

The first three products on the right side can be treated as above. 
For the fourth product we give a rough upper bound based on the following observation. If the entries in the column have been modified in an admissible way, then 
$\tca (i) =j>l(c)$ for some $i$, so that $|\upsilon(\colmod)|\leq C\|\quant\|/\sqrt{n\delta}$ for any $\colmod$, with some constant $C=C(d)$. 

Thus by using the previous point 
\begin{eqnarray}
&&
\left|\langle \Va | q_{\lambda} U(\quant, \cartan, n)^{\otimes n} \Vzero\rangle \right| \leq 
\nonumber\\
&&
\mbox{}\hspace{2cm} \exp\left(-\frac{\| \quant \|_{2}^{2} }{2}\right) 
\left(\frac{ C\|\quant\|}{\sqrt{n\delta}}\right)^
{\sum_{\colmod^{\prime}} m_{\colmod^{\prime}}}
\prod_{i<j}\left(\frac{|\quantsub_{i,j}|}{\sqrt{n} \sqrt{\mu_{i}-\mu_{j}}}\right)^{m_{\colmod(i,j)}} r(n).
\label{gg} 
\end{eqnarray}
We obtain \eqref{interg} by noting that the number of non-elementary modifiers is 
$$
\sum_{\colmod^{\prime}} m_{\colmod^{\prime}} = - \Gamma +
\sum_{i<j} (m_{i,j} - m_{\colmod(i,j)} ).
$$ 


\bigskip

\emph{Proof of \eqref{numer}.}
Note that only admissible vectors in $\mathcal{O}_{\lambda}({\bf m})$ can bring 
non-zero contributions. 
We shall split the sum into  sub-sums using 
$\mathcal{O}_{\lambda}({\bf m}) \bigcap \mathcal{V}= \bigcup_{E} \mathcal{V}^{E}({\bf m})$, and compare each sub-sum against the benchmark $\mathcal{V}^{0}=\mathcal{V}^{E^{0}} $. 

From the bounds on $\quant$ and $\vec{z}$ we obtain 
$\|\quant + \vec{z}\| =O(n^{\beta})$, so we can apply the previous points with 
$\quant + \vec{z}$ instead of $\quant$.


Using \eqref{cg0} and \eqref{interg0} and recalling that $\lambda \in \Lambda_{n,\alpha}$, we get:
\begin{eqnarray*}
&&\left\langle \sum_{\Va \in \mathcal{V}^0} \Va \Bigg| q_{\lambda} U(\quant + \vec{z}
, \cartan, n)^{\otimes n} \Vzero \right\rangle  
= 
 \exp\left(i\phi-\frac{\|\quant + \vec{z}\|^2_2}{2}\right) \prod_{i<j}
 \frac{\big((\zeta_{i,j} + {z}_{i,j})\sqrt{n}\sqrt{\mu_i - \mu_j}\big)^{m_{i,j}}}{m_{i,j}!} r(n)
\end{eqnarray*}
with error factor
\[
r(n)  = 1 + O\left(n^{-1 + 2\beta + \eta}\delta^{-1}, n^{-1/2 + 2\beta} \delta^{-1}, 
n^{-1 +2\beta +\alpha}\delta^{-1}, n^{-1 + 2\eta}\delta^{-1}, n^{-1 + \alpha + \eta}\delta^{-1} \right).
\]

For $E \neq E^{0}$ we combine \eqref{interg} and \eqref{cggam}  to obtain
\begin{eqnarray*}
&&
 \left|\left\langle 
\sum_{\Va \in \mathcal{V}^{E}} \Va \Bigg| q_{\lambda} U(\quant + \vec{z}, \cartan, n) \Vzero \right\rangle \right| \cdot
\left|
\left\langle 
\sum_{\Va \in \mathcal{V}^0} \Va \Bigg| q_{\lambda} U(\quant + \vec{z}, \cartan, n) \Vzero 
\right\rangle
 \right|^{-1}  \\ 
&& \leq
n^{-\Gamma} \prod_{i<j} \left(\frac{\lambda_i - \lambda_j}{n}\right)^{m_{\colmod({i,j})} - m_{i,j}}\frac{m_{i,j}!}{m_{\colmod({i,j})}!} \left(\frac{\|\quant + \vec{z}\|}{\sqrt{\delta n}}\right)^{- \Gamma}\prod_{i<j}\left(\frac{\sqrt{\delta n} |\zeta_{i,j} + z_{i,j}|}{\|\quant + z\|\sqrt{n}\sqrt{\mu_i - \mu_j}}\right)^{m_{\colmod({i,j})} - m_{i,j}} r(n) \\
&& \leq O(n^{-\Gamma (1/2 +\beta)})\delta^{-\Gamma/2} 
\prod_{i<j :m_{i,j}\neq 0}\left(
\frac{
|\quantsub_{i,j} +z_{i,j} |\sqrt{\mu_i - \mu_j}}{m_{i,j} \|\quant + \vec{z}\|}\right)^{m_{\colmod(i,j)} - m_{i,j}  }  \\
&& \leq O\big((2\delta^{-3/2} n^{-1/2 + 3 \beta + 2 \epsilon})^\Gamma\big),
\end{eqnarray*}
with $O(\cdot)$ uniform in $\Gamma$. In the second inequality we used  
$$
m_{i,j}!/m_{\colmod(i,j)}! \leq m_{i,j}^{m_{i,j}-m_{\colmod(i,j)}}, \qquad 
\sum_{i<j} (m_{\colmod({i,j})} - m_{i,j}) \geq - 2 \Gamma, \qquad 
\lambda\in \Lambda_{n,\alpha}
$$ 
and in the third inequality we used
$$
m_{i,j}\leq 2 |\quantsub_{i,j}+ z_{i,j}| n^{\beta + \epsilon} ,
\qquad 
\frac{
|\quantsub_{i,j} +z_{i,j} |\sqrt{\mu_i - \mu_j}}{m_{i,j} \|\quant + \vec{z}\|}\leq 1 .
$$

Furthermore, for a given $\Gamma$, there are at most $C^{\Gamma}$ different $E$ such that $\Gamma(E) = \Gamma$, corresponding to the possible choices in the first two stages of the algorithm, where $C=C(d)$. Hence, if $n$ is large enough such that 
$ 2C\delta^{-3/2} n^{-1/2 + 3 \beta + 2 \epsilon} <1 $, we have:

\begin{eqnarray*}
&&
\left\langle \sum_{\Va \in \mathcal{O}_{\lambda}(\mathbf{m})} \Va \Bigg| q_{\lambda} U(\quant + z, \cartan, n) \Vzero \right\rangle 
 = \sum_{\Gamma}\left\langle \sum_{\Va \in \mathcal{V}^{\Gamma}} \Va \Bigg| q_{\lambda} U(\quant + z, \cartan, n) \Vzero \right\rangle \\
&&
 = \Big(1 + O(\delta^{-3/2} n^{-1/2 + 3 \beta + 2 \epsilon})\Big)
\exp\left(i\phi -\frac{\|\quant + z\|^2_2}{2}\right) \prod_{i<j}\frac{\big((\quant + z)_{i,j}(\sqrt{n}\sqrt{\mu_i - \mu_j})\big)^{m_{i,j}}}{m_{i,j}!} r(n) \\
&& =
\exp\left(i\phi-\frac{\|\quant + z\|^2_2}{2}\right) \prod_{i<j}\frac{\big((\quant + z)_{i,j}(\sqrt{n}\sqrt{\mu_i - \mu_j})\big)^{m_{i,j}}}{m_{i,j}!} r_2(n)
\end{eqnarray*}
where the sum over $\Gamma$ was bounded using a geometric series and   
\[
r_2(n) = 1 + O\left(n^{-1 + 2\beta + \eta}\delta^{-1}, 
n^{-1 + \alpha + \beta}\delta^{-1}, n^{-1 + 2\eta}\delta^{-1}, n^{-1 + \alpha + \eta}\delta^{-1}, \delta^{-3/2} n^{-1/2 + 3 \beta + 2 \epsilon} \right).
\]
This is exactly \eqref{numer}.

\bigskip

\emph{Proof of \eqref{denom}.}
We choose $\Gamma^{a}$ and $\Gamma^{b}$ satisfying the condition 
$\Gamma^b - \Gamma^a = |{\bf m}| - |{\bf l}|$ under which the inner products in 
\eqref{prod_id} are non-zero. By multiplying \eqref{cgG} and \eqref{prod_id}, we see that:
\begin{align}  
\label{refgam}
\left\vert\left\langle \sum_{\Va \in \mathcal{V}^{\Gamma^{a}}(\mathbf{l})}\Va \Bigg| q_{\lambda}  \sum_{\Vb \in \mathcal{V}^{\Gamma^{b}}({\bf m})
}\Vb \right\rangle \right\vert
& \leq (C |\mathbf{m}|)^{\Gamma^b} \!\prod_{i<j} \frac{(\lambda_i - \lambda_j)^{l_{i,j}}}{l_{i,j}!} \left(\frac{C|\mathbf{l}|^2}{n\delta^2}\right)^{\Gamma^a} \\
& = (C |\mathbf{m}|)^{|\mathbf{m}| - |\mathbf{l}|}\prod_{i<j} \frac{(\lambda_i - \lambda_j)^{l_{i,j}}}{l_{i,j}!} \left(\frac{C|\mathbf{l}|^2|\mathbf{m}|}{n\delta^2}\right)^{\Gamma^a} .\notag
\end{align}

It remains to sum up the upper bounds over all relevant pairs 
$(\Gamma^{a},\Gamma^{b})$. If $n^{1 - 3 \eta} >  2 C / \delta^{2}$, the dominating term in the sum of bounds is that corresponding to the smallest possible $\Gamma^a$. The question is, what is the smallest possible value of $\Gamma^a$ leading to non-zero inner products?

A necessary condition for $\Va$ not to be orthogonal to $\Vb$ is that for each set 
$S$ of suppressed and added values, the two vectors have the same multiplicities $m_S^{\bf a} = m_S^{\bf b}$.

The following argument provides a lower bound for $\Gamma(\Va) + \Gamma(\Vb)$. The idea is to count the minimum number of `horizontal box shuffling' operations necessary 
in order to transform a Young tableau $t_{\bf a^{\prime}}\in \mathcal{O}_{\lambda}({\bf m})$ into the tableau $t_{\bf a}$. Since $|{\bf m}| \leq n^{\eta}$ and  $\lambda_{d} \geq \delta n + O(n^{\alpha})$, the tableau $t_{\bf a^{\prime}}$ can be chosen to have at most one modified box per column (thus $\Gamma(f_{{\bf a}^{\prime}}) =0$), and such that each of the modified columns of $t_{\bf a}$ are also modified in $t_{\bf a^{\prime}}$. We also choose  $t_{\bf b^{\prime}}$ in a similar 
fashion. 

Now at each step we horizontally move one elementary column modifier $\colmod(i,j)$ of $t_{{\bf a}^{\prime}}$ (or $t_{{\bf b}^{\prime}}$) into an already modified 
column, with the aim of constructing $t_{\bf a}$ (or $t_{\bf b}$). 

Each such operation increases $\Gamma(f_{{\bf a}^{\prime}}) + \Gamma(f_{{\bf b}^{\prime}})$ by one. On the other hand the operation has the following effect on the 
$m_{S}^{{\bf a}^{\prime}}$ (or $m_{S}^{{\bf b}^{\prime}}$): the multiplicities
$m_{\{(i,-),(j,+)\}}$ and $m_{S_{0}}$ decrease by one, and $m_{S_{0} +  \{(i,-),(j,+)\}}$ increases by one. Here $S_{0}$ is the signature of the column to which the box $(i,j)$ is moved. Hence the distance 
$\sum_{S} |m_S^{{\bf a}^{\prime}} - m_S^{{\bf b}^{\prime}}|$ 
decreases by at most three. Since initially this quantity was equal to  
$\sum_{i<j} | l_{i,j} - m_{i,j}|$, we need at least $\sum_{i<j} |{l}_{i,j} - {m}_{i,j}| / 3$ such operations before reaching our goal 
$m_S^{\bf a} = m_S^{\bf b}$. This means that $\Gamma(\Va) + \Gamma(\Vb) \geq |\mathbf{l} - \mathbf{m}| / 3$.

%

Together with $\Gamma^b - \Gamma^a = |{\bf m}| - |{\bf l}|$,  this result yields $\Gamma^a \geq (|\mathbf{l} - \mathbf{m}| + 3 |\mathbf{l}| - 3 |\mathbf{m}|) / 6 $. Moreover $\Gamma^a$ is non-negative.

Replacing in the above equation yields \eqref{denom}.

\bigskip

\emph{Proof of \eqref{denom54}.}
Since $\mathbf{l} = \mathbf{m}$,  equations \eqref{cg0} and \eqref{prod_id0} prove that the bound \eqref{refgam} is saturated when $\Gamma^a = 0$, up to the error factor $\left(1 + O(n^{-1 + 2 \eta}/\delta)\right)$. Hence the remainder 
term due to the other $\Gamma$ consist in a geometric series with factor 
$\left(\frac{C|\mathbf{m}|^3}{n\delta^2}\right) = O(n^{1 - 3\eta}/\delta^{2})$. 

\end{proof}

The only part of the proof we have still postponed is the following
technical lemma:

\begin{lem}
\label{Taylor}
If $x_{n}= O(n^{1/2-\epsilon})$, then
\[
\left(1+\frac{x_{n}}{n}\right)^n  = \exp(x_{n})(1+ O(n^{-\epsilon})).
\]
\end{lem}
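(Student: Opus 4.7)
The plan is a direct Taylor expansion of the logarithm. First I would pass to logarithmic form and write
\[
\left(1+\frac{x_n}{n}\right)^n = \exp\left(n\log\!\left(1+\frac{x_n}{n}\right)\right).
\]
Under the hypothesis $x_n = O(n^{1/2-\epsilon})$ we have $x_n/n = O(n^{-1/2-\epsilon})$, which tends to zero, so for $n$ large enough the argument of the logarithm lies in any prescribed neighbourhood of $1$ where the expansion $\log(1+y) = y - y^2/2 + O(y^3)$ is valid with a uniform constant in the remainder.

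Next I would substitute $y = x_n/n$ and multiply by $n$, obtaining
\[
n\log\!\left(1+\frac{x_n}{n}\right) = x_n - \frac{x_n^2}{2n} + O\!\left(\frac{x_n^3}{n^2}\right).
\]
The correction $x_n^2/(2n)$ is $O(n^{-2\epsilon})$, and the cubic remainder is $O(n^{-1/2-3\epsilon})$, so both are $O(n^{-\epsilon})$ as required (actually smaller, but the stated bound is all that is needed downstream in the proof of Lemma~\ref{lemtools}).

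Finally I would exponentiate, factor out $\exp(x_n)$, and use $\exp(z) = 1 + O(z)$ for $z\to 0$ on the remaining factor $\exp(-x_n^2/(2n) + O(n^{-1/2-3\epsilon}))$ to conclude
\[
\left(1+\frac{x_n}{n}\right)^n = \exp(x_n)\bigl(1+O(n^{-\epsilon})\bigr).
\]
There is no real obstacle here; the only point requiring minimal care is checking that the arguments of the Taylor expansion and of the final $\exp$ are uniformly small in $n$, which follows immediately from the hypothesis $\epsilon > 0$. This is why the statement is presented as a standalone technical lemma rather than inlined inside the combinatorial arguments where it is invoked.
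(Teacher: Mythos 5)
Your proof is correct, but it takes a genuinely different route from the paper's. The paper expands $(1+x_n/n)^n$ by the binomial theorem, splits the sum at $k=n^{1/2-\epsilon/2}$, uses the two-sided bound $(n-k)^k/k!\leq\binom{n}{k}\leq n^k/k!$ to identify the truncated sum with $\sum_k x_n^k/k!$ up to a factor $1+O(n^{-\epsilon})$, shows the tail $k>n^{1/2-\epsilon/2}$ contributes only $O(e^{-n^{1/2-\epsilon/2}})$, and finally absorbs this additive error multiplicatively via a lower bound on $\exp(x_n)$. Your logarithmic expansion avoids both the tail-splitting and the absorption step, is shorter, and in fact yields the sharper rate $1+O(n^{-2\epsilon})$, since the leading correction $-x_n^2/(2n)$ is $O(n^{-2\epsilon})$, whereas the paper's approximation $(n-k)^k=n^k(1+O(n^{-\epsilon}))$ caps its rate at the stated $O(n^{-\epsilon})$; of course the weaker statement is all that is claimed and used. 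One point you should make explicit: where the lemma is invoked (the products of determinants leading to \eqref{interg0} in Lemma \ref{lemtools}), $x_n$ is complex, since it carries the phase $i\phi_l$. Your argument survives unchanged provided you note that for integer $n$ one has $(1+y)^n=\exp(n\log(1+y))$ with the principal branch whenever $|y|<1$, and that $\log(1+y)=y-y^2/2+O(|y|^3)$ and $e^{z}=1+O(z)$ hold uniformly on small complex discs; the paper's own proof needs the analogous caveat in its last step, where $\exp(x_n)$ should be read as $|\exp(x_n)|=\exp(\operatorname{Re}x_n)$.
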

\begin{proof}
For simplicity we will ignore the dependence on $n$ and write $x=x_{n}$. 

For any $y$ such that $|y|\leq 1$, for any $n\in \mathbb{N}$ 
, we have the Taylor expansion:
\[
(1+y)^n = \sum_{k=1}^{\infty} {n \choose k} y^k.
\]
Now $(n-k)^k/k! \leq {n \choose k} \leq n^k/k! $ for $n\geq k$. If $k\leq
n^{1/2-\epsilon/2}$, then $(n-k)^k = n^k (1+ O(n^{-\epsilon}))$. If $k\geq
n^{1/2-\epsilon/2}$, then $n^k/k! = O(n^{(1/2+\epsilon/2)k})$.
So that if $y = x/n = O(n^{-1/2-\epsilon})$, 
\begin{align*}
(1+x/n)^n & = (1+O(n^{-\epsilon})) \sum_{k=0}^{n^{1/2-\epsilon/2}} \frac{x^k}{k!}
+ \sum_{k> n^{1/2-\epsilon/2} } O(n^{(1/2+ \epsilon/2)k} (x/n)^k
\\
& = (1+O(n^{-\epsilon})) \exp(x) 
+ \sum_{k> n^{1/2-\epsilon/2} } (O(n^{(1/2+ \epsilon/2)k}-n^{k}/k!) (x/n)^k
\\
&=  (1+O(n^{-\epsilon})) \exp(x) 
+ O(e^{-n^{1/2-\epsilon/2}})
\\
&= (1 + O(n^{-\epsilon})) \exp(x),
\end{align*}
as $\exp(x) \geq  C \exp( - n^{1/2-\epsilon}))$ for some constant $C>0$.
\end{proof}

\subsection{Proof of Lemma \ref{V_approx} and non-orthogonality issues}
\label{preuve_quasi_orth}

\begin{lem}
\label{non-orth}
Let $(\mathbf{m}, \lambda)$ and $(\mathbf{l}, \lambda)$ be semistandard Young tableaux with diagram $\lambda$ and define $|{\bf m}|:=\sum_{i<j}m_{ij}$ and 
$|{\bf l}-{\bf m}| := \sum_{i<j} | l_{i,j}-m_{ij}|$.

If 
$$
\sum_{j > i} m_{i,j} - \sum_{j < i} m_{j,i} \neq \sum_{j > i} l_{i,j} - \sum_{j < i} l_{j,i}
$$ for some $1\leq i \leq d$, then   
\[
\langle \mathbf{m}, \lambda | \mathbf{l}, \lambda \rangle  = 0.
\]

Otherwise, we derive an upper bound under the following conditions.
We assume that $\lambda_i - \lambda _{i+1} > \delta n $ for all $1\leq i \leq d-1$  and $\lambda _d > \delta n$, for some $\delta > 0$. Furthermore we assume 
$|\mathbf{l}| \leq |\mathbf{m}| \leq n^{\eta}$ for some $\eta < 1/ 3$ and that
 $Cn^{3\eta - 1}/\delta^{2} <1$ where $C=C(d)$ is a  constant.

Then: 
\begin{equation}\label{quasiorth1}
\left|\langle \mathbf{m}, {\lambda} | \mathbf{l}, {\lambda} \rangle\right| \leq
 (C^{\prime}n)^{-\eta( |{\bf m}|-|{\bf l}|)/4}\,
  (C^{\prime}n)^{(9\eta-2)|{\bf m} -{\bf l}|/12}\,
  \delta^{ (|\mathbf{m}| -|\mathbf{l}|)/2 - |\mathbf{m} - \mathbf{l}|  / 3}\,
   (1+O(n^{-1+3\eta} /\delta)).
\end{equation}
where $C^{\prime}=C^{\prime}(d,\eta)$ and the constant in the remainder term depends only on $d$. The right side is of order less than $n^{(9 \eta -2)|{\bf m} - {\bf l}| / 12 }$ and converges to zero for $\eta < 2 / 9$  when $n\to \infty$.

\end{lem}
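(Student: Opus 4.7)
The plan is to prove the two assertions separately, leveraging the combinatorial machinery already developed in Lemma~\ref{lemtools}.

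For the first assertion I would argue by weights. Each basis vector $f_{\mathbf{m}} = f_{a(1)}\otimes \cdots \otimes f_{a(n)}$ is a simultaneous eigenvector of the diagonal generators $E_{i,i}$ acting through $\pi_{n}$, with eigenvalue equal to the number of entries equal to $i$ in the tableau $t_{\mathbf{m}}$, namely $\lambda_{i} - \sum_{j>i} m_{i,j} + \sum_{j<i} m_{j,i}$. Since the Young symmetriser $y_{\lambda}$ belongs to the image of the group algebra $\mathcal{A}(S(n))$ and therefore commutes with $\pi_{n}(E_{i,i})$ by Schur--Weyl, $|\mathbf{m},\lambda\rangle$ inherits the same weight. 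Two eigenvectors of the self-adjoint diagonal generators with distinct eigenvalues are orthogonal, which is exactly the stated vanishing.

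For the quasi-orthogonality bound I would first reduce to the orbit sums. Applying \eqref{csq1} twice, together with the orbit expansion \eqref{plambda.orbit.decomposition}, and using that $q_{\lambda}$ is self-adjoint with $q_{\lambda}^{2} = |\mathcal{C}_{\lambda}| q_{\lambda}$, one checks that the factors $|\mathcal{C}_{\lambda}|$, $|\mathcal{R}_{\lambda}|$ and the orbit cardinalities cancel cleanly in the normalised ratio, producing
\[
|\langle \mathbf{m},\lambda \mid \mathbf{l},\lambda\rangle|^{2} = \frac{|Q(\mathbf{m},\mathbf{l})|^{2}}{Q(\mathbf{m},\mathbf{m})\, Q(\mathbf{l},\mathbf{l})}, \qquad
Q(\mathbf{a},\mathbf{b}) := \Bigl\langle \sum_{\Va\in\mathcal{O}_{\lambda}(\mathbf{a})} \Va \Bigm| q_{\lambda} \sum_{\Vb\in\mathcal{O}_{\lambda}(\mathbf{b})} \Vb \Bigr\rangle.
\]
The denominator is then asymptotically $\prod_{i<j}(\lambda_{i}-\lambda_{j})^{m_{i,j}}/m_{i,j}!$ by \eqref{denom54}, while the numerator is controlled by the upper bound \eqref{denom}, whose key ingredient is the lower bound $\Gamma^{a}_{\min}(\mathbf{l},\mathbf{m}) \geq (|\mathbf{l}-\mathbf{m}| + 3|\mathbf{l}| - 3|\mathbf{m}|)_{+}/6$ obtained from the column-shuffling distance argument.

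The remainder is bookkeeping. One splits the product $\prod_{i<j}(\lambda_{i}-\lambda_{j})^{l_{i,j}-m_{i,j}}$ appearing in the ratio into indices on which $l_{i,j}>m_{i,j}$ and those on which $l_{i,j}<m_{i,j}$, bounding the former by $n^{N_{+}}$ and the latter by $(\delta n)^{-N_{-}}$, with $N_{+}+N_{-} = |\mathbf{m}-\mathbf{l}|$ and $N_{-}-N_{+} = |\mathbf{m}|-|\mathbf{l}|$. The factorials are dominated using $\prod m_{i,j}!/l_{i,j}! \leq |\mathbf{m}|^{N_{-}}$ and the hypothesis $|\mathbf{m}| \leq n^{\eta}$; the hypothesis $Cn^{3\eta-1}/\delta^{2}<1$ absorbs the error factor from \eqref{denom54}. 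After collecting exponents and taking a square root, the bound \eqref{quasiorth1} emerges.

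The main technical obstacle is ensuring the exponents assemble correctly. Specifically, one has to verify that the $\Gamma^{a}_{\min}$ factor $\bigl(C|\mathbf{l}|^{2}|\mathbf{m}|/(n\delta^{2})\bigr)^{2\Gamma^{a}_{\min}}$, combined with the $n^{(2\eta-1)(|\mathbf{m}|-|\mathbf{l}|)}$ and $n^{\eta |\mathbf{m}-\mathbf{l}|/2}$ contributions from the base bounds, yields exactly the announced exponents $-\eta/4$ on $|\mathbf{m}|-|\mathbf{l}|$ and $(9\eta-2)/12$ on $|\mathbf{m}-\mathbf{l}|$ for $n$, and $(|\mathbf{m}|-|\mathbf{l}|)/2 - |\mathbf{m}-\mathbf{l}|/3$ for $\delta$. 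As sanity checks I would verify that the bound becomes trivial when $\mathbf{l}=\mathbf{m}$, that the $\delta$-exponent is non-positive when $|\mathbf{m}-\mathbf{l}|$ is large enough to matter, and that the dominant behaviour $n^{(9\eta-2)|\mathbf{m}-\mathbf{l}|/12}$ tends to zero for $\eta < 2/9$.
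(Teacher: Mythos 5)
Your proposal follows essentially the same route as the paper: the vanishing statement is the paper's multiplicity-counting argument recast in weight language, and the bound is obtained exactly as in the paper by reducing $\langle \mathbf{m},\lambda \mid \mathbf{l},\lambda\rangle$ to a ratio of orbit-sum inner products via \eqref{csq1} and \eqref{plambda.orbit.decomposition}, then invoking \eqref{denom54} for the denominator and \eqref{denom} (with the $\Gamma^{a}_{\min}$ lower bound) for the numerator, and finally using $|\mathbf{m}|\leq n^{\eta}$, $\lambda_i-\lambda_j>\delta n$ and $Cn^{3\eta-1}/\delta^{2}<1$. The one step you leave implicit, the exponent bookkeeping, does assemble to the announced powers $-\eta(|\mathbf{m}|-|\mathbf{l}|)/4$ and $(9\eta-2)|\mathbf{m}-\mathbf{l}|/12$ of $n$, so your outline is correct and coincides with the paper's proof.
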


{\it Proof.} 
We know that $\left\vert \mathbf{m} ,{\lambda}\right\rangle$ is a linear combination 
of $n$-tensor product vectors in which the basis vector $f_i$ appears exactly 
$\lambda_i - \sum_{j > i} m_{i,j} + \sum_{j < i} m_{j,i}$ times. As two tensor basis vectors are orthogonal if they do not have the same number of $f_i$ in the decomposition, we get that $\langle \mathbf{m}, {\lambda} | \mathbf{l}, {\lambda} \rangle = 0$ if $\sum_{j > i} m_{i,j} + \sum_{j < i} m_{j,i} \neq \sum_{j > i} l_{i,j} + \sum_{j < i} l_{j,i}$ for any $1 \leq i \leq d$.

\medskip

In the general case,
\begin{align}
\langle \mathbf{m}, {\lambda} | \mathbf{l}, {\lambda} \rangle = \frac{\langle q_{\lambda} p_{\lambda} \Vm | q_{\lambda} p_{\lambda} \Vl\rangle }{\sqrt{\langle q_{\lambda} p_{\lambda} \Vm | q_{\lambda} p_{\lambda} \Vm\rangle \langle q_{\lambda} p_{\lambda} \Vl | q_{\lambda} p_{\lambda} \Vl\rangle}}.
\end{align}

We use the fact that $q_{\lambda}$ is a projection, up to a constant factor
(cf. \eqref{eq.p.q.square},\eqref{csq1}), and erase the $q_{\lambda}$ at the left of each scalar product, and we decompose $p_{\lambda} f_{\bf m} $ and $p_{\lambda} f_{\bf l}$ 
on orbits as in \eqref{plambda.orbit.decomposition}. 
Since the multiplicity of the elements in the orbits are the same in numerator and denominator, we end up with:
\begin{align}
\langle \mathbf{m}, {\lambda} | \mathbf{l}, {\lambda} \rangle = \frac{\langle \sum_{\Va \in \mathcal{O}_{\lambda}(\mathbf{m})} \Va 
| q_{\lambda} \sum_{\Vb \in \mathcal{O}_{\lambda}(\mathbf{l})} \Vb \rangle}
{\langle \sum_{\Va \in \mathcal{O}_{\lambda}(\mathbf{m})} \Va | 
q_{\lambda}\sum_{\Vap \in \mathcal{O}_{\lambda}(\mathbf{m}) } \Vap\rangle
\langle \sum_{\Vb \in \mathcal{O}_{\lambda}(\mathbf{l})} \Vb | q_{\lambda} \sum_{\Vbp \in \mathcal{O}_{\lambda}(\mathbf{l})} \Vbp \rangle}
\end{align}

We use \eqref{denom54} for the denominator:
\begin{eqnarray*}
&&
\left\langle \sum_{\Va \in \mathcal{O}_{\lambda}(\mathbf{m})} \Va \Bigg| 
q_{\lambda}\sum_{\Vap \in \mathcal{O}_{\lambda}(\mathbf{m}) } \Vap\right\rangle
\Bigg\langle \sum_{\Vb \in \mathcal{O}_{\lambda}(\mathbf{l})} \Vb \Bigg| q_{\lambda} \sum_{\Vbp \in \mathcal{O}_{\lambda}(\mathbf{l})} \Vbp \Bigg\rangle \\
&&=  
\prod_{1 \leq i < j \leq d}\frac{ (\lambda_i -\lambda_j)^{(m_{i,j} + l_{i,j}) / 2}}{\sqrt{m_{i,j}! \,l_{i,j}!}}
(1+O(n^{3\eta - 1}/\delta))),
\end{eqnarray*}
and the numerator is bounded as in \eqref{denom}. Then, under the assumption 
$|{\bf m} | \geq |{\bf l}|$ we have
\[
\left|\langle \mathbf{m}, {\lambda} | \mathbf{l}, {\lambda} \rangle\right| 
\leq 
(C |\mathbf{m}|)^{|\mathbf{m}| - |\mathbf{l}|} \left(\frac{C |\mathbf{m}|^3}{\delta ^2 n}\right)^{\Gamma_{min}} \cdot \prod_{i < j} (\lambda_i - \lambda_j)^{(l_{i,j} - m_{i,j}) / 2} \sqrt{\frac{m_{i,j}!}{l_{i,j}!}} \cdot \left(1 + \left( O(n^{3\eta-1}/\delta) \right)\right), 
\]
where $\Gamma_{min} = (   (|\mathbf{l} - \mathbf{m}| + 3 |\mathbf{l}| - 3 |\mathbf{m}|) / 6) \wedge 0$.

The factorials can be bounded as 
$$
\prod_{i < j} \sqrt{\frac{m_{i,j}!}{l_{i,j}!}} \leq |\mathbf{m}|^{\sum (m_{i,j} - l_{i,j})_{+}/2} = 
|\mathbf{m}|^{(|\mathbf{m} - \mathbf{l}| + |\mathbf{m}| - |\mathbf{l}|) / 4 }.
$$ 
Since $|{\bf m}|\leq n^{\eta}$ and  $Cn^{3\eta-1}/\delta^{2}< 1$, we have
$$
\left(\frac{C |\mathbf{m}|^3}{\delta ^2 n}\right)^{\Gamma_{min}} \leq
\left(\frac{C n^{3\eta-1}}{\delta ^2}\right)^{  (|\mathbf{l} - \mathbf{m}| + 3 |\mathbf{l}| - 3 |\mathbf{m}|) / 6}.
$$
Since $\lambda_{i}-\lambda_{j} >n \delta $ we have
$$
\prod_{1 \leq i < j \leq d} (\lambda_i -\lambda_j)^{(l_{i,j} - m_{i,j}) / 2} \leq
(n \delta )^{ (|{\bf l}|-|{\bf m}|)/2 }.
$$


The constant $C=C(d)$ can be replaced by another constant 
$C^{\prime}= C^{\prime}(d,\eta)$ such that all powers of $n$ appear in the form $(C^{\prime}n)^{\gamma}$. Putting the bounds together we get
\begin{eqnarray*}
\left|\langle \mathbf{m}, {\lambda} | \mathbf{l}, {\lambda} \rangle\right| \leq
 \delta^{ (|\mathbf{m}| -|\mathbf{l}|)/2 - |\mathbf{m} - \mathbf{l}|  / 3}
 (C^{\prime}n)^{-\eta( |{\bf m}|-|{\bf l}|)/4} (C^{\prime}n)^{(9\eta-2)|{\bf m} -{\bf l}|/12} (1+O(n^{-1+3\eta} /\delta))
\end{eqnarray*}

\qed
\medskip

A consequence of this lemma is the following.
\begin{cor}
\label{gqo}
Let $\eta < 2 / 9$ and let $(\mathbf{m},{\lambda})$ be such that $|\mathbf{m}| \leq n^{\eta}$. Assume as in Lemma \ref{non-orth} that $\lambda_i - \lambda _{i+1} > \delta n $ for all $1\leq i \leq d-1$  and $\lambda _d > \delta n$, for some $\delta > 0$, and that
 $Cn^{3\eta - 1}/\delta^{2} <1$ where $C=C(d) is a constant$.

Then there exists a constant $C^{\prime\prime}= C^{\prime\prime}(d,\eta)$ such that
\begin{equation}\label{bound.sum.inner.prod}
\sumtwo{|\mathbf{l}| \leq n^{\eta}}{\mathbf{l} \neq \mathbf{m}} 
\left| \langle \mathbf{m},{\lambda} | \mathbf{l},{\lambda} \rangle  \right| \leq (C^{\prime\prime} n)^{(9 \eta - 2) / 12}\delta^{-1/3}.
\end{equation}
\end{cor}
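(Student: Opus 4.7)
\emph{Proof plan.} The strategy is to partition the sum according to the $\ell^{1}$-distance $j := |\mathbf{m}-\mathbf{l}|$ and apply Lemma~\ref{non-orth} to each group, then sum a polynomially weighted geometric series in $j$. Since $|\langle\mathbf{m},\lambda|\mathbf{l},\lambda\rangle|=|\langle\mathbf{l},\lambda|\mathbf{m},\lambda\rangle|$, the estimate \eqref{quasiorth1} applies to every non-orthogonal term, with the roles of $\mathbf{m}$ and $\mathbf{l}$ exchanged when $|\mathbf{l}|>|\mathbf{m}|$. Writing $a:=\big||\mathbf{m}|-|\mathbf{l}|\big|$, the bound becomes
\[
|\langle\mathbf{m},\lambda|\mathbf{l},\lambda\rangle|\leq (C'n)^{-\eta a/4}(C'n)^{(9\eta-2)j/12}\delta^{a/2-j/3}\,\bigl(1+O(n^{3\eta-1}/\delta)\bigr),
\]
where the pair $(a,j)$ satisfies $0\leq a\leq j$ and $a\equiv j\pmod 2$. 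Since $(C'n)^{-\eta/4}\leq 1$ and $\delta\leq 1$, the right side is maximized for fixed $j$ at $a=0$ (if $j$ is even) or $a=1$ (if $j$ is odd); in either case the per-term bound is at most $r^{j}$, where
\[
r:=(C'n)^{(9\eta-2)/12}\delta^{-1/3}.
\]

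Next I count: the number of $\mathbf{l}\in\mathbb{N}^{d(d-1)/2}$ with $|\mathbf{l}-\mathbf{m}|=j$ is bounded by the number of lattice points $\mathbf{v}\in\mathbb{Z}^{d(d-1)/2}$ with $\|\mathbf{v}\|_{1}=j$, a polynomial $P_{d}(j)$ in $j$ of degree $d(d-1)/2-1$. The sum in \eqref{bound.sum.inner.prod} is therefore dominated by
\[
\bigl(1+O(n^{3\eta-1}/\delta)\bigr)\sum_{j\geq 1}P_{d}(j)\,r^{j}.
\]
The hypothesis $\eta<2/9$ makes the exponent $(9\eta-2)/12$ strictly negative, so for $n$ sufficiently large (a quantitative strengthening of $Cn^{3\eta-1}/\delta^{2}<1$) the ratio $r$ is bounded away from $1$; the polynomially weighted geometric series then converges and is controlled by a dimensional multiple of its first term, namely $C_{d,\eta}\,r$. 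Absorbing this multiplicative factor by replacing the base constant $C'$ with a suitable $C''=C''(d,\eta)$ yields the claimed bound $O\bigl((C''n)^{(9\eta-2)/12}\delta^{-1/3}\bigr)$.

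The main obstacle is bookkeeping rather than conceptual. One has to verify that the polynomial counting factor $P_{d}(j)$ does not outpace the geometric decay induced by $r<1$, which effectively forces $n$ to be polynomially large in $1/\delta$, slightly stronger than the bare hypothesis of Lemma~\ref{non-orth}. Once $r$ is small enough that $\sum_{j\geq 1}P_{d}(j)\,r^{j}\leq C_{d,\eta}\,r$, the conclusion follows by the constant adjustment described above; no combinatorial input beyond Lemma~\ref{non-orth} and the elementary symmetry of the Hermitian inner product is required.
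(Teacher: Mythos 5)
Your argument is essentially the paper's own proof: reduce to one ordering of $|\mathbf{m}|,|\mathbf{l}|$ via the symmetry of the inner product, drop the factors $(C'n)^{-\eta(|\mathbf{m}|-|\mathbf{l}|)/4}$ and $\delta^{(|\mathbf{m}|-|\mathbf{l}|)/2}$ which are at most one, group the terms by $k=|\mathbf{m}-\mathbf{l}|$, bound the number of such $\mathbf{l}$ by a polynomial (equivalently $\le c^{k}$) in $k$ of the dimension-dependent degree, and dominate the resulting weighted geometric series by its first term, absorbing constants into $C''$. Your explicit remark that the ratio $r=(C'n)^{(9\eta-2)/12}\delta^{-1/3}$ must be bounded away from $1$ (a point the paper passes over silently, and which is guaranteed in all applications since $\delta$ is fixed and $n\to\infty$) is the only difference, and it is a cosmetic one.
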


\begin{proof}
We break the sum into two parts ($|{\bf l}| \leq |{\bf m}|$ and $|{\bf l}| >  |{\bf m}|$), and by triangle inequality it suffices to prove the statement under the additional condition 
$|{\bf l}| \leq |{\bf m}|$.
  
We use \eqref{quasiorth1} neglecting the terms containing $|{\bf m}|-|{\bf l}|$ in the exponent which are less than $1$. Then the expression on the left side of 
\eqref{bound.sum.inner.prod} is bounded from above by
$$
2 \sum_{k\geq 1} N(k) \left[ (C^{\prime}n)^{(9\eta-2)/12} \delta^{-1/3} \right]^{k}
$$
where $N(k)$ is the number of ${\bf l}$'s for which $|{\bf m} - {\bf l}|= k$. 

Since there are $d(d-1)/2$ pairs $1\leq i < j \leq d$, there are at most 
$(k+1)^{d(d-1)/2}$ different choices for the  values $\{| l_{i,j} - m_{i,j}| : i<j\}$ satisfying 
$\sum  |l_{i,j} - m_{i,j}| = k$. Moreover, there are $2^{d(d-1)/2}$ sign choices which fix ${\bf l}= \{l_{i,j}\}$ completely. Thus $N(k) \leq (2(k+1))^{d(d-1)/2} \leq c^{k}$ for some constant $c$ which can be incorporated  in the geometric series starting at $k=1$, hence the desired estimate.


\end{proof}

We use this quasi-orthogonality to build an isometry
$V_{\lambda}:\mathcal{H}_{\lambda} \to \mathcal{F}$ which maps the relevant 
finite-dimensional vectors $\left\vert {\bf m}, \lambda\right\rangle$ `close' to their Fock counterparts $\left\vert {\bf m}\right\rangle$. This is the aim of Lemma \ref{V_approx}.

\begin{lem}
\label{isocomp}
Let $A$ be a contraction (i.e. $A^* A \leq \bf{1}$) from a finite space $\mathcal{H}$  to an infinite space $\mathcal{K}$. Then there is an $R:\mathcal{H}\to \mathcal{K}$ such that $A + R$ is an isometry and ${\rm Range}(A) \perp {\rm Range}(R)$.

As a consequence, for any unit vector $\phi$, we have $\| R \phi \|^2 = 1 - \| A \phi \|^2$.
\end{lem}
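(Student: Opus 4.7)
The plan is to explicitly construct $R$ by setting $R^*R = \mathbf{1} - A^*A$ and arranging the range of $R$ to lie in the orthogonal complement of the range of $A$ inside $\mathcal{K}$.

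First I would observe that the two conditions in the lemma translate into concrete algebraic requirements on $R$. The orthogonality condition $\mathrm{Range}(A)\perp\mathrm{Range}(R)$ is equivalent to $A^*R = 0$, which in turn makes $R^*A = 0$. Under this orthogonality, the isometry condition $(A+R)^*(A+R) = \mathbf{1}_{\mathcal{H}}$ expands to
\begin{equation*}
A^*A + A^*R + R^*A + R^*R = A^*A + R^*R = \mathbf{1}_{\mathcal{H}},
\end{equation*}
so we must have $R^*R = \mathbf{1}_{\mathcal{H}} - A^*A$.

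Next I would construct $R$ explicitly. Since $A^*A\leq \mathbf{1}_{\mathcal{H}}$, the operator $B:=(\mathbf{1}_{\mathcal{H}}-A^*A)^{1/2}$ is a well-defined positive operator on the finite-dimensional space $\mathcal{H}$. The orthogonal complement $\mathrm{Range}(A)^{\perp}\subset\mathcal{K}$ is infinite-dimensional (since $\mathcal{K}$ is infinite-dimensional and $\mathrm{Range}(A)$ is at most $\dim\mathcal{H}$-dimensional), and hence contains an isometric copy of $\mathcal{H}$. Let $W:\mathcal{H}\to\mathrm{Range}(A)^{\perp}$ be any isometry, so that $W^*W=\mathbf{1}_{\mathcal{H}}$, and set $R:=WB$. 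Then $R^*R = B W^*W B = B^2 = \mathbf{1}_{\mathcal{H}} - A^*A$, and $\mathrm{Range}(R)\subset\mathrm{Range}(W)\subset\mathrm{Range}(A)^{\perp}$, so both requirements are met, and by the computation above $A+R$ is an isometry.

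For the consequence, for any unit vector $\phi\in\mathcal{H}$, the isometry property gives $\|(A+R)\phi\|^2 = \|\phi\|^2 = 1$. Because $A\phi\perp R\phi$ (by the range orthogonality), the Pythagorean identity yields $\|A\phi\|^2 + \|R\phi\|^2 = 1$, whence $\|R\phi\|^2 = 1 - \|A\phi\|^2$. There is no serious obstacle here; the only point deserving attention is the use of infinite-dimensionality of $\mathcal{K}$ to guarantee enough room in $\mathrm{Range}(A)^{\perp}$ for the isometry $W$, which is otherwise the one nontrivial ingredient.
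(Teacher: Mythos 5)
Your proof is correct and follows essentially the same route as the paper: the paper also sets $R = I\sqrt{\mathbf{1}-A^*A}$ with $I$ an isometric identification of $\mathcal{H}$ with a subspace of $\mathcal{K}$ orthogonal to ${\rm Range}(A)$, which is exactly your $W B$. You merely spell out the verification of the isometry condition and the Pythagorean consequence, which the paper leaves implicit.
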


\begin{proof}
As $\mathcal{K}$ is infinite-dimensional, we may consider a subspace $\mathcal{H}'$ of $\mathcal{K}$, orthogonal to ${\rm Range}(A)$, and the same dimension as 
$\mathcal{H}$, so that we can find an isomorphism $I$ from $\mathcal{H}$ to $\mathcal{H}'$. We then take $R = I \sqrt{\mathbf{1} - A^*A}$. 

\end{proof}


{\it Proof of Lemma \ref{V_approx}.}
Let $A_{\lambda}:\mathcal{H}_{\lambda}\to \mathcal{F}$ be defined by
\[
A_{\lambda} := \frac1{\sqrt{1 + (Cn)^{(9\eta -2) / 12}/\delta ^{1/3}}} \sum_{|\mathbf{l}| \leq n^{\eta}} \left| \mathbf{l} \right\rangle \left\langle \mathbf{l}, {\lambda} \right|.
\]
Then,  
\begin{align*}
A_{\lambda}^{*} A_{\lambda} & =  \frac1{1 + (Cn)^{(9\eta -2) / 12}/\delta ^{1/3}} \sum_{|\mathbf{l}| \leq n^{\eta}} \left| \mathbf{l},{\lambda} \right\rangle \left\langle \mathbf{l},{\lambda} \right| \\
& \leq \mathbf{1}_{\mathcal{H}_{\lambda}}. 
\end{align*}
where the last inequality follows from Corollary \ref{gqo} and the following argument. 
It is enough to show that all eigenvalues of $A_{\lambda}^{*}A_{\lambda}$ are smaller than $1$. Let $\sum_{\bf m} c_{\bf m} \left| \mathbf{m}, {\lambda} \right\rangle$ be an eigenvector of $A_{\lambda}^{*}A_{\lambda}$, and $a$ the corresponding eigenvalue. Then by the linear independence of $\left\vert \mathbf{m}, {\lambda} \right\rangle $ we get that for each~${\bf l}$
$$
 \frac1{1 + (Cn)^{(9\eta -2) / 12}/\delta ^{1/3}} 
\sum_{|\mathbf{m}|\leq n^{\eta}}  
\left\langle \mathbf{l},\lambda  | \mathbf{m}, {\lambda} \right\rangle  c_{\bf m} 
 = 
a   c_{\bf l}.
$$
If ${\bf l}_{0}$ is an index for which $| c_{\bf l}|$ is maximum, then by taking absolute values on both sides we obtain
$$
a \leq  \frac1{1 + (Cn)^{(9\eta -2) / 12}/\delta ^{1/3}} \sum_{|\mathbf{m}|\leq n^{\eta}}  
 \left\vert \langle \mathbf{l},\lambda  | \mathbf{m}, {\lambda}\rangle  \right\vert \leq 1.
$$

Now we may apply Lemma \ref{isocomp}, and find an $R_{\lambda}$ such that 
$A_{\lambda} + R_{\lambda}$ is an isometry, and ${\rm Range}(R_{\lambda}) \perp 
{\rm Range}(A)$, so that $\left\langle  \mathbf{m} \right| R_{\lambda} = 0$. 
We define $ V_{\lambda} : = A_{\lambda} + R_{\lambda}$.
Then
\begin{eqnarray*}
\left\langle  \mathbf{m} \right| V_{\lambda} & = & \left\langle  \mathbf{m} \right| (A_{\lambda} + R_{\lambda}) \\
& = & \left\langle  \mathbf{m} \right| A_{\lambda} \\
& = &  \frac1{\sqrt{1 + (Cn)^{(9\eta -2) / 12}/\delta ^{1/3}}} \left\langle  \mathbf{m} \right| \sum_{|\mathbf{l}| \leq n^{\eta}} \left| \mathbf{l} \right\rangle \left\langle \mathbf{l}, {\lambda} \right| \\
& = & \frac1{\sqrt{1 + (Cn)^{(9\eta -2) / 12}/\delta ^{1/3}}} \left\langle \mathbf{m},{\lambda} \right|.
\end{eqnarray*}

\qed


\subsection{Proof of Lemma \ref{ldisplacement} on mapping rotations into displacements}
\label{pdisplacement}

We first recall a few definitions and notations. We denote by $D^{\vec{z}}$ the displacement operation (super-operator) acting on observables in the multimode Fock space $\mathcal{F}$ as
$$
D^{\vec{z}} (W( {\vec{y}})) := {\rm Ad}[ W( \vec{z})] \left(W( \vec{y})\right) 
= e^{2i \sigma (\vec{y} ,\vec{z})} \,  W (\vec{z}+\vec{y}) , 
\qquad \vec{y},\vec{z}\in \mathbb{C}^{d(d-1)/2}. 
$$
The operation acts as displacement on coherent states, in particular
$$
D^{ \vec{\zeta}+ \vec{z}} (|\mathbf{0}\rangle\langle\mathbf{0}|) = 
\vert \vec{\zeta}+ \vec{z} \rangle\langle \vec{\zeta}+ \vec{z} \vert.
$$
Similarly, on the finite dimensional space $\left(\mathbb{C}^{d}\right)^{\otimes n}$ 
we have the action (cf. \eqref{generalU})
$$
\Delta^{\quant,\vec{\xi},n}(A) 
= {\rm Ad}[U(\quant, \vec{\xi}, n)](A):= 
U(\quant /\sqrt{n},\vec{\xi}/\sqrt{n} )^{\otimes n} \,A \, U^*(\quant/\sqrt{n},\vec{\xi}/\sqrt{n})^{\otimes n},
$$
whose restriction to the block $\lambda$ is 
$\Delta^{\quant,\vec\xi, n}_{\lambda} ={\rm Ad}[U_{\lambda}(\quant,\vec{\xi}, n)] $. 

The isometric embedding $T_{\lambda}(\cdot) := V_{\lambda} \cdot V_{\lambda}^{*} $ and its `adjoint' $T_{\lambda}^{*}(\cdot) := V_{\lambda}^{*} \cdot V_{\lambda} $ satisfy
$$
T_{\lambda}\Delta^{\quant+\vec{z}, \vec{\xi}, n}_{\lambda} T_{\lambda}^*
(\left\vert \mathbf{0}\right\rangle\left\langle\mathbf{0}\right\vert)  = 
V_{\lambda} 
\vert  \quant + \vec{z}, \vec{\xi},  \lambda \rangle
\langle \quant + \vec{z}, \vec{\xi}, \lambda \vert  
V_{\lambda}^{*}
$$
where $\vert  \quant + \vec{z}, \vec{\xi},  \lambda \rangle := 
U_{\lambda}(\quant+\vec{z}, \vec{\xi},n)\left\vert {\bf 0}, \lambda\right\rangle$ are the 
`finite dimensional coherent states'.

According to Lemma \ref{V_approx}, the coordinates of 
$V_{\lambda} \vert  \quant + \vec{z}, \vec\xi, \lambda \rangle$ 
in the Fock basis are described  by:
\begin{align}
\label{etape1}
\langle {\bf m} | V_{\lambda} \vert \quant+\vec{z},\cartan,\lambda \rangle =
\left\{ \begin{array}{l} 
\langle {\bf m},{\lambda}| U_{\lambda}(\quant+\vec{z},\cartan,n) |{\bf 0},{\lambda}\rangle 
(1 + O(n^{(9\eta -2) / 12}\delta ^{-1/3}))
\mbox{ if }
|{\bf m}| \leq n^{\eta};\vspace{2mm}\\
\mbox{something not important if } |{\bf m}| > n^{\eta} .
\end{array} \right.
\end{align}

Using the relation $\|  |\psi\rangle\langle \psi| - |\psi^{\prime} \rangle \langle \psi^{\prime}| \|_{1}= 2\sqrt{ 1 - |\langle\psi
|\psi^{\prime}\rangle|^2}$, which holds for unital vectors $\psi,\psi^{\prime}$, the statement of the lemma is equivalent to
\begin{equation}
\label{lemplus2}
\sup_{\| \vec{z}\|\leq n^{\beta}}
\sup_{\quant\in\Theta_{n,\beta}}\,
\sup_{\|\cartan\|\leq n^{-1/2 + 2 \beta}\!/\delta}\,
\sup_{\lambda\in\Lambda_{n, \alpha}} 
1 - 
\left|  \langle \vec{z}+\quant | V_{\lambda} | \quant+\vec{z},\cartan, {\lambda}\rangle\right|  
 = R(n)^2,
\end{equation}
with $R(n)$ the original remainder term.

We shall prove formula \eqref{lemplus2} by decomposing these vectors in the Fock basis, that is 
\begin{equation}
\label{decomposition}
\langle \quant+ \vec{z} | V_{\lambda}  |\quant+\vec{z},\cartan,{\lambda}\rangle = 
\sum_{\bf m} 
\langle \quant + \vec{z} | {\bf m}\rangle 
\langle {\bf m} | V_{\lambda} |\quant+\vec{z},\cartan,{\lambda}\rangle.
\end{equation}

The estimates are based on the following observations.

1) The coherent states have significant coefficients $\langle \vec\zeta + \vec{z} |{\bf m}\rangle$ only for `small' ${\bf m}$'s, i.e. those in the set
\begin{equation}
\label{adaptedm}
\mathcal{M}:= \{{\bf m} : m_{i,j}\leq |(\quant + \vec{z})_{i,j}|^2 n^{\epsilon} 
, \quad i<j \}.
\end{equation} 
In particular, since $2\beta+ \epsilon<\eta$ we have $\mathcal{M}\subset \{{\bf m} : |{\bf m}| \leq n^{\eta}\}$.

2) The coefficients 
$\langle {\bf m} | V_{\lambda} |\quant+\vec{z},\cartan,{\lambda}\rangle$ are uniformly close to  $\exp(i\phi)\langle \quant + \vec{z} | {\bf m}\rangle$ where $ \phi$ is a fixed real phase, in particular uniformly over ${\bf m}\in \mathcal{M}$. 

3) If $a_{\bf m}$ and $b_{\bf m}$ are the two sets of coefficients, such that 
$\sum_{\bf m} |a_{\bf m}|^2 = \sum_{\bf m} |b_{\bf m}|^2 = 1$, then  
\begin{equation}
\label{boundsubset}
 1 - \left| \sum_{\bf m} a_{\bf m} b_{\bf m} \right|   \leq  1- \left| \sum_{{\bf m}\in\mathcal{M}} a_{\bf m} b_{\bf m} \right|  + 
\left| \sum_{{\bf m}\notin \mathcal{M}} a_{\bf m} b_{\bf m}\right|  \leq 2 
\left(1-  \left|\sum_{{\bf m}\in\mathcal{M}} a_{\bf m} b_{\bf m}\right| \right).
\end{equation}

\medskip 

The precise statement in point 1) is 
\begin{equation}
\label{tail}
\sum_{\mathbf{m} \not \in \mathcal{M}}  | \langle\quant + \vec{z} | {\bf m}\rangle |^2 \leq d^2 n^{-\beta}.
\end{equation}
Indeed, the inner products can be written as a product over the $(i,j)$ oscillators and we have the bound
$$
\sum_{\mathbf{m} \not \in \mathcal{M}}  | \langle\quant + \vec{z} | {\bf m}\rangle |^2\leq
\sum_{i<j}  \exp(-x_{i,j}) \sum_{k> x_{i,j} n^{\epsilon}} \frac{x_{ij}^{k}}{k!}, \qquad
x_{i,j}=  |(\quant + \vec{z})_{i,j}|^{2}.
$$
Each of the terms in the sum is a tail of Poisson distribution and is bounded by 
$ n^{-\epsilon n^{\beta}}$ if $x_{i,j} \geq 1$ and by $n^{-\beta}$  if $x_{i,j} <1$.


\medskip

We turn now to point 2). From the third line of \eqref{etape1} we get
\begin{align*}
\langle {\bf m}| V_{\lambda}  U_{\lambda}(\quant+\vec{z},\cartan,n) |{\bf
0}, {\lambda}\rangle 
& = \frac{ \langle y_{\lambda} \Vm | y_{\lambda}
U(\quant+\vec{z},\cartan,n) | f_{\bf 0}\rangle}{\sqrt{\langle y_{\lambda} \Vzero |
y_{\lambda}\Vzero\rangle}\sqrt{\langle y_{\lambda} \Vm |
y_{\lambda}\Vm\rangle}} (1 + O(n^{(9\eta -2) / 12}\delta ^{-1/3}))
\\
& = \frac{\langle p_{\lambda} \Vm  | q_{\lambda} U(\quant
+\
\vec{z},\cartan,n)
\Vzero \rangle }{\sqrt{\langle p_{\lambda} \Vm |
q_{\lambda} p_{\lambda} \Vm \rangle}} (1 + O(n^{(9\eta -2) / 12}\delta ^{-1/3}) )
\end{align*} 
where we have used (\ref{csq1}) and (\ref{fcoherent}).

We recall that  $ \mathcal{O}_{\lambda}({\bf m})$ is the orbit in $
(\mathbb{C}^{d})^{\otimes n}$ of $\Vm$ under $\mathcal{R}_{\lambda}$ and that we have the decomposition
$$
p_{\lambda} \Vm = \sum_{\Va\in \mathcal{O}_{\lambda}({\bf m})}
\frac{\# \mathcal{R}_{\lambda}}{\# \mathcal{O}_{\lambda}({\bf m})} \Va.
$$
Then, by employing  formulas \eqref{numer} and \eqref{denom54}, we can write
\begin{align}
\label{dev}
 \langle {\bf m}| V_{\lambda}  U_{\lambda}(\quant+\vec{z},\cartan,n) |{\bf 0},{\lambda}\rangle 
&
= \frac{\sum_{\Va\in \mathcal{O}_{\lambda} ({\bf m})}\langle \Va | q_{\lambda}
U(\quant + \vec{z}, \cartan,n) \Vzero\rangle}{\sqrt{\sum_{\Va,\Vb \in
\mathcal{O}_{\lambda} ({\bf m})} \langle  \Va | q_{\lambda} \Vb \rangle}}(1 + O(n^{(9\eta -2) / 12}\delta ^{-1/3}) \\
&
= e^{i\phi -\|\quant + \vec{z}\|^2_2/2}\prod_{i\leq j}\frac{(\quant + \vec{z})_{i,j}^{m_{i,j}}}{\sqrt{m_{i,j}!}}\left(\frac{n(\mu_i - \mu_j)}{\la_i - \la_j}\right)^{m_{i,j}/2}r(n). \notag
\end{align} 
The corresponding remainder term is
\[
r(n) =  1 + O\left(n^{(9\eta -2) / 12}\delta ^{-1/3},n^{-1 + 2\beta + \eta}\delta^{-1}, n^{-1/2 + 3\beta+2\epsilon} \delta^{-3/2}, n^{-1 + \alpha + 2\beta}\delta^{-1}, n^{-1 + \alpha + \eta}\delta^{-1}, n^{-1 + 3\eta}\delta^{-1} \right)
\]
and the phase is:
\[
\phi =  \sqrt{n} \sum_{i = 1}^{d-1} (\mu_i - \mu_{i+1}) \cartansub_i.
\]
Since $\lambda \in \Lambda_{n,\alpha}$ and the eigenvalues are separated by $\delta$ we have $\left(\frac{n(\mu_i - \mu_j)}{\la_i - \la_j}\right)^{m_{i,j}/2} = 1 + O(n^{\alpha - 1 + \eta} / \delta) $ and the error can be absorbed in $r(n)$.

In conclusion, for $\mathbf{m}$ satisfying  \eqref{adaptedm}, we have:
\[
 \langle
{\bf m} | V_{\lambda} U(\quant+\vec{z},\cartan,n) |{\bf 0},{\lambda}\rangle =\exp(i\phi) 
\langle {\bf m} | \quant + \vec{z} \rangle r(n) .
\]

Inserting this result into \eqref{decomposition}, and using \eqref{boundsubset} and 
\eqref{tail}, we get
\begin{eqnarray*}
1-\left|\langle \vec{z}+\quant | V_{\lambda}  U(\quant+\vec{z},\cartan,n) 
|{\bf 0} ,{\lambda}\rangle \right|& = &  
O\left( 1- r(n),  \sum_{\mathbf{m} \not\in \mathcal{M}}  |\langle {\bf m} | \quant + \vec{z} \rangle |^2\right) =  R_2(n),
\end{eqnarray*}
with
\begin{multline*}
R_2(n) =  O\left(n^{(9\eta -2) / 12}\delta ^{-1/3},n^{-1 + 2\beta + \eta}\delta^{-1}, n^{-1/2 + 3\beta+2\epsilon} \delta^{-3/2}, n^{-1 + \alpha +2 \beta}\delta^{-1}, \right.\\
\left. n^{-1 + \alpha + \eta}\delta^{-1}, n^{-1 + 3\eta}\delta^{-1}, n^{-\beta}\right).
\end{multline*}
Through expression \eqref{lemplus2}, noticing that $R_2(n) = R(n)^2$, we see that we have proved the lemma.

\qed


\subsection{Proof of Lemma \ref{lconcentration} on typical Young diagrams}

Recall that the state $\rho^{\theta,n}:= \rho^{\otimes n}_{\theta/\sqrt{n}}$ has the decomposition over `blocks' $\lambda$ given by \eqref{prerhon}. The probability distribution over Young diagrams $p_{\lambda}^{\quant,\clas,n}$ depends only on the diagonal parameters $\vec{u}$ and is given by
\[
p_{\lambda}^{\quant,\clas,n}=
c_n^{\lambda}\sum_{{\bf m}\in \lambda} \prod_{i=1}^{d}  (\mu_i^{\clas,n})^{\lambda_i}
\prod_{j=i+1}^{d}\left(\frac{\mu_j^{\clas,n}}{\mu_i^{\clas,n}}\right)^{m_{i,j}} ,
\]
with 
\[
c_n^{\lambda}= {n \choose \lambda_1,\lambda_2,\dots,\lambda_d}
\prod_{l=1}^d\frac{\lambda_l!\prod_{k=l+1}^{d} (\lambda_l - \lambda_k +k -l)}{(\lambda_l +d -l)!}.
\]
The above formula can be understood as follows. By invariance under rotations we can take $\vec{\zeta}=0$ and the state is diagonal in the standard basis
basis $\left(\mathbb{C}^{d} \right)^{\otimes n}$ formed by the vector $f_{\bf a}$. Each eigenprojector carries a weight $\prod_{i=1}^{d} (\mu^{\clas,n})^{m_{i}}$ where $m_{i}$ is the multiplicity of the vector $f_{i}$ in the tensor product $f_{\bf a}$. Thus, we only need to add all multiplicities over vectors that are `inside' the block $\lambda$. Since the irreducible representation has basis $f_{\bf m}$ labelled by semistandard Young tableaux, we get a factor 
$$
\prod_{i=1}^{d} (\mu_{i}^{\clas,n})^{m_{i}} = 
\prod_{i=1}^{d} (\mu_{i}^{\clas,n})^{\lambda_{i}} \prod_{j=i+1}^{d} 
\left(\frac{\mu_j^{\clas,n}}{\mu_i^{\clas,n}}\right)^{m_{i,j}}.
$$
The additional factor 
$c_{n}^{\lambda}$ is the dimension of $\mathcal{K}_{\lambda}$, on which the state is proportional to the identity.

Recall that $\mu_i^{\clas,n} = \mu_i + u_i /\sqrt{n}$ for $1\leq i\leq (d-1)$ and $\mu_d^{\clas,n} = \mu_d - (\sum_i u_i)/\sqrt{n}$. If $\delta \geq  2 d n^{\alpha-1} \geq 
2 d n^{\gamma - 1/2} $ then 
$\mu_j^{\clas,n}/\mu_i^{\clas,n} \leq 1$ for all $\| \vec{u} \| \leq n^{\gamma}$.
Moreover $m_{i,j}\leq n$ for all $(i,j)$, so the total number of ${\bf m}$'s is smaller than 
$n^{d^{2}}$.  Thus
\[
\sum_{{\bf m}} \prod_{i<j} (\mu^{\vec{u},n})^{\lambda_{i}} \, \left(\frac{\mu_j^{\clas,n}}{\mu_i^{\clas,n}}\right)^{m_{i,j}} \leq
n^{d^2}.
\]
On the other hand ${\bf m}=\mathbf{0}$ is always in the set of possible ${\bf m}$,
so that 
\[
\sum_{{\bf m}}\prod_{i<j}\left(\frac{\mu_j^{\clas,n}}{\mu_i^{\clas,n}}\right)^{m_{i,j}} \geq 1.
\]
One can easily verify that
\[
1 \geq \prod_{l=1}^d\frac{\lambda_l!\prod_{k=l+1}^{d} ( \lambda_l - \lambda_k
+k -  l)}{(\lambda_l +d -l)!} \geq \frac1{(n+d)^{d^2}}.
\]

The remaining factor is the multinomial law. We now show that this is the
dominating part. Let us write $(Y_1,\dots,Y_d)$ for the multinomial random variable. Then we have
\begin{equation}
\label{concmult}
\mathbb{P}[|Y_i- n \mu^{\clas,n}_i |\geq x]\leq 2\exp\left(-\frac{2x^2}{n}\right).
\end{equation}
Indeed each $Y_{i}$ is a sum of independent Bernoulli variables 
$X_{1},\dots, X_{n}$ with $\mathbb{P}(X_{k}=1) = \mu^{\clas,n}_i$ and $\mathbb{P}(X_{k}=0) =1- \mu^{\clas,n}_i$, and by Hoeffding's inequality \cite{vanderVaart&Wellner}
\[
\mathbb{P}[|\sum_{k=1}^{n} X_k- \mathbb{E}[X_k]|\geq x]\leq 2\exp\left(-\frac{2x^2}{n}\right).
\]

By definition, for any $\lambda\notin \Lambda_{n,\alpha}$ there exists an $i$ such that 
$|\lambda_i - n \mu_i |\geq n^{\alpha}$, which implies 
$|\lambda_i - n \mu^{\clas,n}_i|\geq  n^{\alpha} - d n^{\gamma+1/2}$. With 
$n^{\alpha-\gamma-1/2} >2d$, the upper bound is simply $n^{\alpha}/2$ and we have
\begin{align*}
\sum_{\lambda\notin \Lambda{n,\alpha}} \| b_{\lambda}^{\theta,n} \|_{1} = 
\mathbb{P} [\lambda\not\in \Lambda_{n,\alpha} ]& \leq n^{d^2}\sum_{i=1}^d
\mathbb{P}[|Y_i- n \mu^{\clas,n}_i |
\geq  n^{\alpha}/2]
\\ &
\leq
2d n^{d^2}\exp(- n^{2 \alpha - 1}/2) .
\end{align*}

\qed


\subsection{Proof of Lemma \ref{lclassical} and Lemma \ref{sigma} on classical LAN}
\label{proofs.classical.lemmas}
We shall use multinomials as an intermediate step. 
Recalling that $b_{\lambda}^{\glob,n} = p_{\lambda}^{\glob,n} \tau_{\lambda}^n$, we can write:
\begin{multline}
\label{multientre}
\left\| \mathcal{N}(\clas,V_{\mu}) -
\sum_{\lambda}
b_{\lambda}^{\glob,n}\right\|_1 \leq 
\left\| 
p^{\glob,n} -  M^n_{\mu^{\clas,n}_1,\dots,\mu^{\clas,n}_d}\right\|_1
+ 
\left\| \mathcal{N}(\clas,V_{\mu}) -
\sum_{\lambda}  M^n_{\mu^{\clas,n}_1,\dots,\mu^{\clas,n}_d}(\lambda) \tau_{\lambda}^n
\right\|_1, 
\end{multline}
where $ M^n_{\mu^{\clas,n}_1,\dots,\mu^{\clas,n}_d}$ is the $d$-multinomial with coefficients $\mu_i^{\clas,n}$. 

Concisely,  what we really prove in this lemma is the equivalence of the following classical experiments, together with an explicit rate:
\begin{align*}
\mathcal{P}_n & =\left\{ p^{\clas,n}, \|\clas\|\leq n^{\gamma}  \right\}, \\
\mathcal{M}_n & = \left\{ M^n_{\mu^{\clas,n}_1,\dots,\mu^{\clas,n}_d},
\|\clas\|\leq n^{\gamma} \right\}, \\
\mathcal{G}_n & = \left\{\mathcal{N}(\clas, V_{\mu}), \|\clas\|\leq
n^{\gamma} \right\}.
\end{align*} 

Recall  that $p^{\glob,n}$ does not depend on $\vec{\zeta}$ and is denoted 
$p^{\clas,n}$. We shall use the shorthand notation 
$M^{n,\clas} := M^n_{\mu^{\clas,n}_1,\dots,\mu^{\clas,n}_d}$.

We first bound the first term on the left side of \eqref{multientre} as follows:
\begin{equation}
\label{mult}
\sup_{\|\clas\|\leq n^{\gamma}} \left\| p^{\clas,n} - M^{n,\vec{u}} \right\|_1 \leq
 C \frac{n^{-1 / 2 + \gamma} + n^{\alpha - 1}}{\delta}.
\end{equation}
To show this, we rewrite:
\begin{align*}
\left\| p^{\clas,n} -
M^{n,\vec{u}} \right\|_1 &= \sum_{|\lambda|=n}
|p^{\clas,n}_{\lambda} - 
M^{n,\vec{u}} (\lambda)| \\
& \leq \sum_{\lambda\in \Lambda_{n, \alpha}}|p^{\clas,n}_{\lambda} - 
M^{n,\vec{u}} (\lambda)| 
+\sum_{\lambda\not\in \Lambda_{n, \alpha}} 
\left[ p^{\clas,n}_{\lambda} + 
M^{n,\vec{u}}(\lambda)\right].  
\end{align*} 
 Lemma \ref{lconcentration} and (\ref{concmult}) imply that for all
$\|\clas\|\leq n^{\gamma}$, and $n > (2d/ \delta)^{\frac{1}{1 - \alpha}} + (2d)^{\frac{1}{\alpha-\gamma-1/2}}$,
\[
\sum_{\lambda\not\in \Lambda_{n}} p^{\clas,n}_{\lambda} + 
M^{n,\vec{u}}(\lambda)\leq
C_1\exp(-(C_2 n^{2 \alpha - 1})),
\]
with $C_1$ and $C_2$ depending only on the dimension.
We end the proof of (\ref{mult}) by recalling that 
\[
p^{\clas,n}_{\lambda} =\prod_{l=1}^d\frac{\lambda_l!\prod_{k=l+1}^{d} 
(\lambda_l - \lambda_k +k - l)}{(\lambda_l +d
-l)!}\sum_{{\bf m}\in \lambda}\prod_{i<j}\left(\frac{\mu_j^{\clas,n}}{\mu_i^{\clas,n}}\right)^{m_{i,j}}
M^{n,\vec{u}}(\lambda).
\]
Now, for all $\|\clas \|\leq n^{\gamma}$ and all $\lambda \in \Lambda_{n,\alpha}$, the right hand side without the multinomial is
\[
\prod_{l<k}
\left(1-\frac{\mu_{k}}{\mu_{l}} + O(n^{\alpha-1}/\delta)\right)
\sum_{{\bf m}\in\lambda}\prod_{i<j}\left(\frac{\mu_j}{\mu_i}+
O(n^{-1/2 + \gamma}/\delta)\right)^{m_{i,j}}.
\]

If  $\lambda_{i}-\lambda_{i+1} > (d-1) n^{1/2}$ then $\lambda$ contains all the multiplicities ${\bf m}$ in the `cube' $\{ 0,1,\dots, n^{1/2} \}^{d(d-1)/2}$. 
Since $\mu_{i}-\mu_{i+1}>\delta$, the condition holds for all  
$\lambda\in \Lambda_{n, \alpha}$, with $n$ satisfying $n\delta>2dn^{\alpha}$. Thus

\[
\prod_{i<j} \frac{1- (\frac{\mu_j}{\mu_i} +
O(n^{-1 / 2 + \gamma}/\delta))^{n^{1/2}}}{1 -
\frac{\mu_j}{\mu_i}+
O(n^{-1 / 2 + \gamma}/\delta)}
  \leq 
\sum_{{\bf m}\in\lambda}\prod_{i<j}\left(\frac{\mu_j}{\mu_i}+
O(n^{-1 / 2 + \gamma} /\delta)\right)^{m_{i,j}}\leq \prod_{i<j} \frac{1}{1 -
\frac{\mu_j}{\mu_i}+
O(n^{-1 / 2 + \gamma} /\delta)}.
\]
Putting together yields
\[
\left|\prod_{l=1}^d\frac{\lambda_l!\prod_{k=l+1}^{d} \lambda_l - \lambda_k
+k -       l}{(\lambda_l +d
-l)!}\sum_{{\bf m}\in\lambda}\prod_{i<j}\left(\frac{\mu_j^{\clas,n}}{\mu_i^{\clas,n}}\right)^{m_{i,j}}
- 1\right| \leq C \frac{n^{-1 / 2 + \gamma} + n^{\alpha - 1}}{\delta}.
\]
We have thus proved (\ref{mult}).

\medskip

The second term in \eqref{multientre} can be treated by  `classical' (albeit technical) methods and we refer to \cite{Kahn_phd} for the details of the proof. The result is 
\[
\sup_{\| \vec{u} \leq n^{\gamma}} 
\| \mathcal{N}(\vec{u} ,V_{\mu}) - M^{n,\vec{u}}\|_{1} \leq C (n^{- 1 / 4 + \epsilon} + n^{- 1 / 2 + \gamma})
/ \delta,
\]
for $n^{-1/2 + \gamma} > C \delta / 2 $ with $C=C(d)$. Together with \eqref{mult}, and noticing that $ \alpha - 1 > \epsilon - 1/2$ for small enough $\epsilon$, we get the desired rate of convergence for Lemma \ref{lclassical}.

From here, proving Lemma \ref{sigma} (that is the inverse direction) is easy 
enough. Indeed, recall that $\sigma^n\tau^n p^{\glob,n} = p^{\glob,n}$ and that
$\sigma^n$ is a contraction. Then 
\begin{align*}
\left\|\sigma^n \mathcal{N}(\clas,V_{\mu}) -  p^{\quant,\clas,n}  \right\|_1 &=
\left\|\sigma^n \mathcal{N}(\clas,V_{\mu}) -  \sigma^n\tau^n p^{\quant,\clas,n}
\right\|_1 \\
& \leq \left\|\mathcal{N}(\clas,V_{\mu}) - \tau^n p^{\quant,\clas,n}  \right\|_1.
\end{align*} 
So that we have the same speed and conditions as those of
 Lemma \ref{lclassical}.

\qed

\subsection{Proof of Lemma \ref{len0} on convergence to the thermal equilibrium state}

We recall that the state $\Phi$ on $CCR(L^{2}(\rho),\sigma)$ was defined in 
\eqref{eq.decom.state} and is the product of a classical Gaussian distribution 
and $d(d-1)/2$ Gaussian states $\Phi_{i,j}$ of quantum harmonic oscillators, one for each pair $i<j$. $\Phi_{i,j}$ are thermal equilibrium states with inverse temperature $\beta= \ln(\mu_{i}/\mu_{j})$ (cf. \eqref{phi_integrale_gaussienne}). The joint state 
$\Phi^{\quantzero}:=\bigotimes_{i<j} \Phi_{i,j}$ is then displaced to obtain 
$\Phi^{\quant}$ but Lemma \ref{len0} is only concerned with $\Phi^{\quantzero}$.

It is well known that thermal equilibrium states are diagonal in the number basis and in our case
\begin{equation}
\label{phi0}
\Phi^{{\quantzero}} = \sum_{{\bf m}\in \mathbb{N}^{d(d-1)/2}} \, \prod_{i<j}
\frac{\mu_i-\mu_j}{\mu_i}\left(\frac{\mu_j}{\mu_i}\right)^{m_{i,j}}|{\bf
m}\rangle \langle{\bf m}|.
\end{equation}


As shown in \eqref{jesaispas}, a similar formula holds for the finite dimensional 
block states $\rho^{{\quantzero},\clas,n}_{\lambda}$: 

\begin{equation}
\langle \mathbf{m}, \lambda  |  \rho^{\quantzero, \clas, n}_{\lambda} | \mathbf{m}, \lambda\rangle = 
C^{\vec{u}}_{\lambda}\, 
        \prod_{i<j}^d \left(\frac{\mu_j^{\clas,n}}{\mu_i^{\clas,n}}\right)^{m_{i,j}},
        \end{equation} 
        where $C^{\vec{u}}_{\lambda}$ is a normalisation constant, 
        $\mu_i^{\clas,n} = \mu_i + u_i /\sqrt{n}$ for $1\leq i\leq (d-1)$ and $\mu_d^{\clas,n} = \mu_d - (\sum_i u_i)/\sqrt{n}$.

However there is a caveat: although $\left\vert {\bf m} ,\lambda\right\rangle$ are eigenvectors of $ \rho^{\quantzero, \clas, n}_{\lambda}$, they are not orthogonal to each other so we cannot directly use $ \left\vert {\bf m},\lambda\right\rangle \left\langle {\bf m} , \lambda\right\vert$ as eigenprojectors in the  spectral decomposition. However, Lemma 
\ref{V_approx} gives us an estimate of the error that we incur by doing just that. 

Not first that the eigenvalues of  $\rho^{{\quantzero},\clas,n}_{\lambda}$ are labelled by the {\it total} multiplicities $m_{i}$ of the index $i$ in the semistandard Young tableaux :
$$
m_{i} := \lambda_{i} - \sum_{j>i} m_{i,j} + \sum_{j<i} m_{j,i}.
$$ 
Let us denote by $\mathcal{H}( \{m_{i}\})$ and $P(\{m_{i}\})$ the corresponding eigenspace and respectively eigenprojection. Then
$$
 \rho^{\quantzero, \clas, n}_{\lambda} = 
C^{\vec{u}}_{\lambda} \sum_{\{m_{i}\}} \prod_{i=1}^d
(\mu_i^{\clas, n})^{m_i-\lambda_{i}} P(\{m_{i}\}).
$$ 
As in Lemma \ref{V_approx} we have
$$
P({\{m_{i}\}})= 
\frac{1}{1+ Cn^{(9\eta-2)/12}\delta^{-1/3}} \sum_{{\bf m} : \{m_{i}\}}  
\left\vert {\bf m},\lambda\right\rangle \left\langle {\bf m } , \lambda\right\vert + 
E({\{m_{i}\}})
$$
where the sum runs over those ${\bf m}$ with total multiplicities $\{m_{i}\}$. 
The (positive) reminder has trace norm 
$$
{\rm Tr}( E({\{m_{i}\}})) =O(n^{(9\eta -2)/12} \delta^{-1/3}) \cdot {\rm dim} (\mathcal{H}(\{m_{i}\})).
$$
By summing over all $\{m_{i} \}$ we get
$$
 \rho^{\quantzero, \clas, n}_{\lambda} =  
\frac{1}{1+ Cn^{(9\eta-2)/12}\delta^{-1/3}} \tilde{\rho}^{\quantzero, \clas, n}_{\lambda} +  
C^{\vec{u}}_{\lambda} \sum_{\{m_{i}\}} \prod_{i=1}^d (\mu_i^{\clas, n})^{m_i-\lambda_{i}}E({\{m_{i}\}}) ,
$$
where $ \tilde{\rho}^{\quantzero, \clas, n}_{\lambda}$ is the approximate state
$$
\tilde{\rho}^{\quantzero, \clas, n}_{\lambda} := 
C^{\vec{u}}_{\lambda}   \sum_{\{m_{i}\}} \,\prod_{i=1}^d
(\mu_i^{\clas, n})^{m_i-\lambda_{i}} \sum_{{\bf m}: \{m_{i}\}}  \left\vert {\bf m},\lambda\right\rangle \left\langle {\bf m } , \lambda\right\vert.
$$
The error term has trace norm of the order
$$
O(n^{(9\eta -2)/12} \delta^{-1/3}) \cdot 
C^{\vec{u}}_{\lambda} \sum_{\{m_{i}\}} \prod_{i=1}^d (\mu_i^{\clas, n})^{m_i-\lambda_{i}} {\rm dim} (\mathcal{H}(\{m_{i}\})) = O(n^{(9\eta -2)/12} \delta^{-1/3}) ,
$$
where we have used the normalisation of the block state 
$ \rho^{\quantzero, \clas, n}_{\lambda} $.

In conclusion 
\begin{equation}\label{estimate.0}
\|  \rho^{\quantzero, \clas, n}_{\lambda} - \tilde{\rho}^{\quantzero, \clas, n}_{\lambda}\|_{1} = O(n^{(9\eta-2)/12}\delta^{-1/3}).
\end{equation}

The next step is to show that the block states $\tilde{\rho}^{{\quantzero},\clas,n}_{\lambda}$ are 
mapped by $T_{\lambda}$ close to $\Phi^{{\quantzero}}$. 
Using (\ref{jesaispas}), we can write
\begin{eqnarray}
\label{Tlrl}
T_{\lambda}(\tilde{\rho}^{{\quantzero},\clas,n}_{\lambda})& =&
C_{\lambda}^{\clas}
\sum_{\mathbf{m}\in \lambda}\prod_{i<j}\left(\frac{\mu^{\clas,n}_j}{\mu^{\clas,n}_i}\right)^{m_{i,j}} T_{\lambda}(|{\bf
m} , {\lambda}\rangle\langle{\bf m}, {\lambda}|).
\end{eqnarray}

If $n^{\alpha - 1} \leq \delta / 2$ and $\alpha > 1/2 > \eta$, we know that all $\mathbf{m}$ such that $|\mathbf{m}|\leq n^{\eta}$ `fit into' $\lambda$. 

Since $\mu_i^{\clas,n} = \mu_i + O(n^{-1/2 + \gamma})$, when $|\mathbf{m}|\leq n^{\eta}$,
\begin{equation}\label{estimate.1}
\left( \frac{\mu^{\clas,n}_j}{\mu^{\clas,n}_i}\right)^{m_{i,j}} =
\left(\frac{\mu_j}{\mu_i}\right)^{m_{i,j}} (1 + O(n^{-1/2 + \gamma + \eta}/\delta)).
\end{equation}
For the normalisation constant  we can write:
\[
(C^{\clas}_{\lambda})^{-1} = \sum_{|{\bf m}|\leq n^{\eta}}\prod_{i<j}\left(\frac{\mu^{\clas,n}_j}{\mu^{\clas,n}_i}\right)^{m_{i,j}} +  \sum_{{\bf m}\in
\lambda:|{\bf m}|\geq n^{\eta}}\,\prod_{i<j}\left(\frac{\mu^{\clas,n}_j}{\mu^{\clas,n}_i}\right)^{m_{i,j}}. 
\]
If $2dn^{\gamma-1/2}<\delta/2$ then the second  part is less than 
$ n^{d^2} (1 - \delta/2)^{n^{\eta}}$ 
which is negligible compared to the other error terms. 
Hence:
\begin{align}
(C^{\clas}_{\lambda})^{-1} & =   \sum_{|{\bf m}|\leq n^{\eta}}
\prod_{i<j}\left(\frac{\mu_j}{\mu_i}\right)^{m_{i,j}} (1 + O(n^{-1/2 + \gamma + \eta}/\delta) )\nonumber\\
& =   \sum_{{\bf m}\in
\mathbb{N}^{d(d-1)/2}}\prod_{i<j}\left(\frac{\mu_j}{\mu_i}\right)^{m_{i,j}} 
(1+O(n^{-1/2 + \gamma + \eta}/\delta) \nonumber\\
& = \prod_{i<j} \frac{\mu_{i}}{\mu_i-\mu_j} (1+ O(n^{-1/2 + \gamma + \eta}/\delta)).
\label{estimate.2}
\end{align}

We then recall that for unit vectors, we have $\| |\psi\rangle\langle\psi| -  |\phi\rangle\langle\phi| \|_1 = 2\sqrt{1     - |\langle \psi |\phi \rangle|^2}$. So that, using Lemma \ref{V_approx}, we get that for $|\mathbf{m}|\leq n^{\eta}$
\begin{equation}\label{estimate.3}
\| T_{\lambda}( |\mathbf{m}, {\lambda}\rangle\langle\mathbf{m}, {\lambda}|)  - |\mathbf{m}\rangle\langle\mathbf{m}| \|_1 =
\| V_{\lambda} |\mathbf{m}, {\lambda}\rangle\langle\mathbf{m}, {\lambda}| V_{\lambda}^* - |\mathbf{m}\rangle\langle\mathbf{m}| \|_1 = O(n^{(9\eta - 2)/ 24}/\delta ^{1/6}).
\end{equation}

Putting the estimates \eqref{estimate.1}, \eqref{estimate.2}, \eqref{estimate.3}  
back into formula \eqref{Tlrl}, we obtain
$T_{\lambda}(\tilde{\rho}^{{\quantzero},\clas,n}_{\lambda})$, so that 
\begin{equation}
\label{Tlrl2}
T_{\lambda}(\tilde{\rho}^{{\quantzero},\clas,n}_{\lambda}) =  
\sum_{{\bf m}\in \mathbb{N}^{d(d-1)/2}} \prod_{i<j}
\frac{\mu_i-\mu_j}{\mu_i}\left(\frac{\mu_j}{\mu_i}\right)^{m_{i,j}}|{\bf
m}\rangle \langle{\bf m}| + O(n^{-1/2 + \gamma + \eta}/\delta, n^{(9\eta - 2)/ 24}/\delta ^{1/6}).
\end{equation}

Comparing with \eqref{phi0}, and using  \eqref{estimate.0} we get the desired result.

\qed


\subsection{Proof of Lemma \ref{lgrouplimit} on local linearity of $SU(d)$}

The key is to notice that, as we are dealing with a group, there is a $r$ such
that 
$$
U^{-1}(\quant+ \vec{z}, \cartanzero, n)U(\quant, \cartanzero,n)U(\vec{z},\cartanzero,n) =
U(-\quant-\vec{z}, \vec{0}, n)U(\quant, \vec{0},n)U(\vec{z},\vec{0},n) = 
U(\vec{r}, \vec{s}, n),
$$
and similarly for the operation $\Delta$. 
We shall prove below that under the condition that both $\quant$ and $\vec{z}$ are
smaller than $n^{\beta}$, then $\|\vec{r}\| + \|\vec{s}\| = O(n^{-1/2 + 2\beta}/\delta)$. Let us
call this the \emph{domination hypothesis} for further reference.
 
Now, as the actions are unitary, we may rewrite the norm in Lemma as
\ref{lgrouplimit}:
\begin{align*}
A & = \left\|
[\Delta^{\quant +\vec{z},n}_{\lambda}
-
\Delta^{\quant,n}_{\lambda}\Delta^{\vec{z},n}_{\lambda}]
(|{\bf 0}, {\lambda}\rangle \langle{\bf 0}, {\lambda}|)
 \right\|_1 \\
&=  \left\|
\Delta^{-(\quant + \vec{z}),n}_{\lambda}
[\Delta^{\quant +\vec{z},n}_{\lambda}
-
\Delta^{\quant,n}_{\lambda} \Delta^{\vec{z},n}_{\lambda}]
(|{\bf 0},{\lambda}\rangle\langle{\bf 0},{\lambda}|)
 \right\|_1 \\
& = \left\|
[{\rm Id} - \Delta^{\vec{r},\vec{s},n}_{\lambda}
 ](|{\bf 0},{\lambda}\rangle\langle{\bf 0},{\lambda}|)
\right\|_1.
\end{align*}
As $T_{\lambda}$ is an isometry, we may also let it act the left and
$T_{\lambda}^*$ on the right and get:
\begin{eqnarray*}
A & = &
\left\| 
T_{\lambda}(
\left\vert {\bf 0}, \lambda \right\rangle
\left \langle {\bf 0}, \lambda \right\vert
) -
T_{\lambda}\Delta^{\vec{r},\vec{s},n}_{\lambda}T_{\lambda}^* 
( 
\left\vert {\bf 0}\right\rangle \left\langle {\bf 0}\right\vert 
)
\right\|_1 \\
& \leq&
 \left\| \left\vert {\bf 0}\right\rangle \left\langle {\bf 0} \right\vert  -
\left| \vec{r}\right\rangle \left\langle\vec{r}\right| \right\|_1  +  
\left\| \left|\vec{r}\right\rangle\left\langle\vec{r}\right| -
T_{\lambda}\Delta^{\vec{r}, \vec{s},n}_{\lambda}
T_{\lambda}^* ( \left|{\bf 0}\right\rangle\left\langle{\bf 0}\right\vert)\right\|_1+
\| T_{\lambda}
(
\left \vert{\bf 0}, \lambda \right\rangle \left \langle {\bf 0}, \lambda \right\vert
) - 
\left \vert {\bf 0} \right \rangle \left \langle {\bf 0} \right\vert \|_{1}.   
\end{eqnarray*}

By the domination hypothesis,  the norm of $\vec{r}$ is smaller than
$n^{- 1/2 + 2 \beta}/\delta$, hence $ \langle\vec{r} | \mathbf{0}\rangle 
= 1 - O(n^{- 1 + 4 \beta}/\delta^{2})$. Using  $\| |\psi\rangle\langle\psi| -  |\phi\rangle\langle\phi| \|_1 = 2\sqrt{1 - |\langle \psi |\phi \rangle|^2}$ we get that 
the first term on the right side of the inequality is $O(n^{- 1/2 + 2 \beta}/\delta)$. Notice that this is dominated by $R(n)$ given in equation \eqref{Rn} since $\eta > 2 \beta$.

For the second term, we apply Lemma \ref{ldisplacement}, 
with $\vec{z} = {0}$. By the domination hypothesis, $\|\vec{s}\| \leq n^{-1/2 + 2 \beta}/\delta$,  so we may apply Lemma \ref{ldisplacement}, and the remainder is given by $R(n)$ in equation \eqref{Rn}.  

The last term is $O(n^{(9\eta - 2)/ 24}/\delta ^{1/6})$ as shown in \eqref{estimate.3} which is dominated by $R(n)$.

\medskip 

We finish the proof of the lemma, and simultaneously that of Theorem \ref{main},
by proving the domination hypothesis. 
Recall that an arbitrary element in $SU(d)$ can be written in the exponential form 
\begin{equation*}
U(\vec{r}, \vec{s})  :=  
\exp 
\left[
 i \left(
 \sum_{i=1}^{d-1} s_i H_i + 
 \sum_{1\leq j<k\leq d} 
\frac{{\rm Re} (r_{j,k}) T_{j,k} + {\rm Im}(r_{j,k}) T_{k,j} }{\sqrt{\mu_j - \mu_k }}
\right)  \right] ,
\end{equation*}
where $(\vec{r} ,  \vec{s}) \in\mathbb{C}^{d(d-1)/2} \times \mathbb{R}^{d-1}$, and 
$T_{i,j}, \,H_{i}$ are the generators of $SU(d)$ defined in \eqref{generators_algebra}. 
A special case of this is $U(\vec{r}) := U(\vec{r},\cartanzero)$.  
In general, the map $(\vec{r}, \vec{s}) \mapsto  U(\vec{r}, \vec{s}) $ is not 
injective but becomes so if we restrict to a small enough neighbourhood 
$\mathcal{C}$ of the origin $(0,0)\in \mathbb{C}^{d(d-1)/2} \times \mathbb{R}^{d-1}$. On this neighbourhood it makes sense to define the inverse as a sort of  `logarithm' 
$$
\log U(\vec{r}, \vec{s}) := (\vec{r}, \vec{s}),
$$ 
which is a $C^{\infty}$ function.

By continuity of the product, if $\vec{x}, \vec{y}\in \mathbb{C}^{d(d-1)/2}$ 
are small enough,  then $U(-\vec{x}-\vec{y}) U(\vec{x}) U(\vec{y})  \in \mathcal{C}$. 
Since $\|\quant\| + \|\vec{z}\| / \sqrt{n} \leq n^{\beta - 1/2}/\delta$, we can apply this to 
$\vec{x} =\quant/\sqrt{n}, \vec{y}= \vec{z}/\sqrt{n}$ for $n> (C/\delta)^{\frac{1}{1/2-\beta}}$ with the constant $C$ depending only on the dimension, and get
\begin{equation*}
(\vec{r} / \sqrt{n}  , \vec{s}/\sqrt{n})=
 f( \vec{\zeta}/\sqrt{n}, \vec{z}/\sqrt{n}): = 
 \log\left[ U(-(\quant+\vec{z})/\sqrt{n})U(\quant/\sqrt{n}) U(\vec{z}/\sqrt{n})  \right] .
\end{equation*}
Since $f$ is a $C^{\infty}$ function we can expand in Taylor series and it is easy to 
show that $f(0,0)=0$, the first order partial derivatives are zero as well, and the second order derivatives are uniformly bounded in a neighbourhood of the origin. 
Thus we get 
$$
\vec{r} = 
\sqrt{n}\,  O \left( 
\frac{\| z_{i,j} \|^{2} }{ n(\mu_{i} -\mu_{j}) } , 
\frac{\| \zeta_{i,j} \|^{2} }{ n(\mu_{i} -\mu_{j}) }  
\right) 
= O( n^{-1/2+2\beta}/\delta).
$$

\qed

 {\it Acknowledgements.} We thank Richard Gill for suggesting the research topic and 
 for many useful discussions. J. Kahn acknowledges the support of the 
 French Agence Nationale de la Recherche (ANR), under grant StatQuant (JC07 205763) ``Quantum Statistics''. M. Gu\c{t}\u{a}'s research is supported by the Engineering and Physical Sciences Research Council (EPSRC).

%

\end{document}